\documentclass[lettersize,journal]{IEEEtran}

\usepackage{amsmath,amsfonts,amsthm,amssymb,bm}
\usepackage{bbm}
\usepackage{algorithmic}
\usepackage{algorithm}
\usepackage{array}
\usepackage[caption=false,font=normalsize,labelfont=sf,textfont=sf]{subfig}
\usepackage{textcomp}
\usepackage{stfloats}
\usepackage{url}
\usepackage{verbatim}
\usepackage{graphicx}
\usepackage{cite}
\usepackage[mathcal]{euscript}
\usepackage{placeins}
\usepackage[hidelinks]{hyperref}
\def\f{\bm{f}}
\def\h{\bm{h}}
\def\c{\bm{c}}
\def\K{\mathcal{K}}
\def\D{\mathcal{D}}
\def\H{\mathcal{H}}
\def\F{\mathcal{F}}
\def\E{\mathbb{E}}
\def\P{\mathbb{P}}
\def\C{\mathcal{C}}
\def\M{\mathcal{M}}
\def\R{\bm{R}}

\def\bgamma{\bm{\gamma}}
\def\blambda{\bm{\lambda}}
\def\emp{{\sf emp}}
\def\opt{{\sf opt}}
\def\ave{{\sf ave}}
\def\bmu{\bm{\mu}}

\DeclareMathOperator*{\sign}{sgn}
\DeclareMathAlphabet{\mathdutchcal}{U}{dutchcal}{m}{n}

\newcommand{\Ent}{\operatorname{Ent}}

\newcommand{\Ttrain}{\mathcal{T}}

\newtheorem{theorem}{Theorem}

\newtheorem{proposition}{Proposition}

\newtheorem{assump}{Assumption}

\allowdisplaybreaks

\newcommand{\givensmall}{\mkern1mu|\mkern1mu}
\newcommand{\given}{\mkern2mu|\mkern2mu}

\usepackage{setspace}


\begin{document}

\title{Test-Time Collaborative Classification over Multi-Agent Networks}

\author{Ping~Hu,~\IEEEmembership{Member,~IEEE,}
        Mert~Kayaalp,~\IEEEmembership{Member,~IEEE} and~Ali~H.~Sayed,~\IEEEmembership{Fellow,~IEEE}
\thanks{Ping Hu and Ali H. Sayed are with the School of Engineering, \'Ecole Polytechnique F\'ed\'erale de Lausanne (EPFL), 1015 Lausanne, Switzerland (email: ping.hu@epfl.ch; ali.sayed@epfl.ch). Mert Kayaalp is now with the DTI, SUPSI, Dalle Molle Institute for Artificial Intelligence (IDSIA USI-SUPSI), 6962 Lugano, Switzerland, where he is affiliated with the UBS-IDSIA AI Lab (email: mert.kayaalp@idsia.ch). This work was done while he was a Ph.D. student and Post-doc at the Adaptive Systems Laboratory at EPFL. Mert Kayaalp's work was supported by UBS Switzerland AG and its affiliates through the UBS-IDSIA AI Lab. A preliminary version of this work was presented at the Algorithmic Collective Action Workshop, NeurIPS 2025.}
}

\maketitle

\begin{abstract}
The increasing heterogeneity of multi-agent systems poses significant challenges for jointly training a global model across agents. At the same time, cooperative inference between agents has long been recognized as a powerful mechanism for distributed decision making over networks. Motivated by these observations, we propose a collaboration framework for distributed binary classification over multi-agent networks, where a set of independently trained agents, potentially differing in architecture, feature space, or modality, coordinate their actions during test time to form collective predictions. This coordination is achieved by exchanging local decision statistics through a distributed learning protocol. We develop a theoretical and experimental study of this independent training and cooperative inference paradigm, and examine its performance under different communication budgets and distributed learning rules. We establish classification error guarantees under sufficient, finite-round, and finite-precision communication, together with PAC-style generalization bounds. These results capture the influence of model heterogeneity, network topology, combination policy, and communication constraints on prediction accuracy. Taken together with the experimental results, they reveal both the price of independent training and the benefit of collective prediction for the proposed distributed decision making framework with models learned from data.
\end{abstract}

\begin{IEEEkeywords}
Distributed learning and inference, collective intelligence, classification, probability of error, ensemble learning
\end{IEEEkeywords}

\section{Introduction}
\label{sec: intro}

\IEEEPARstart{D}{istributed} learning and inference is a key paradigm for large-scale and distributed systems, where data is generated and processed across multiple agents.\footnote{Throughout this paper, we use the term ``distributed'' to describe systems without a central coordinator. Closely related architectures are also termed ``decentralized'' in the machine learning literature.} Within this context, two major lines of research have emerged: \textit{distributed inference}, extensively studied in the statistics, signal processing, and control communities~\cite{sayed2013diffusion,kar2013consensus+,jordan2019communication}, and \textit{distributed machine learning}, primarily advanced within the machine learning (ML) and data science communities~\cite{verbraeken2020survey,jin2024collaborative}.

In distributed inference, a collection of agents is connected through a communication network. Each agent owns a family of statistical models that describes the distribution of its observations under a common, shared latent variable (e.g., a hidden state or hypothesis). The collective objective is to infer this latent variable through local computations and communication. To this end, agents iteratively update their beliefs or estimates by combining private observations with information received from their neighbors. Classical instances of this paradigm include distributed estimation~\cite{kar2013consensus+,sayed2013diffusion,jordan2019communication}, distributed detection~\cite{matta2016diffusion,bajovic2016distributed,shahrampour2016distributed}, and social learning~\cite{matta2025social}. A key feature of these problems is that collaboration takes place during the \textit{inference} phase. In contrast, distributed ML has mainly focused on collaboration during the \textit{training} phase, as exemplified by federated learning and decentralized learning. In this setting, agents jointly train a global model on distributed data, which is later deployed for inference either centrally or across devices. The inference stage is thus governed by a single unified model, and the system operates as a \textit{single} decision maker. This is fundamentally different from distributed inference, where each agent acts as an autonomous decision maker, and collaboration serves to improve the quality of individual or consensus predictions.

While these two research lines have evolved largely independently, they share some core motivation (e.g., scalability, decentralization, and privacy), and often employ similar tools, particularly those from distributed optimization~\cite{sayed2014adaptation,nedic2020distributed}. Recently, efforts have been made to bridge the two fields in order to leverage their complementary benefits. A promising line of work is inference with machine-learned models, as studied in~\cite{braca2022statistical,bordignon2023learning,hu2025non-asymptotic}. In~\cite{braca2022statistical}, this idea is examined in the context of hypothesis testing within a centralized setting involving a single agent. More relevantly,~\cite{bordignon2023learning} and~\cite{hu2025non-asymptotic} study the social learning task where multiple agents train local classifiers on their own data to extract discriminative information for inference in the absence of explicit statistical models. In both cases, inference is performed using a stream of unlabeled observations. In this work, we consider instead a distributed binary classification setting in which each agent has access to only a \emph{single} local testing sample associated with a common, unknown class label. The agents are collectively tasked with identifying this label from their own observations. This setting naturally arises in applications such as sensor fusion (e.g., vehicles capturing different views of the same object) and multi-view classification (e.g., image and text modalities of the same entity). The goal of this work is to develop a distributed classification framework for this setting.

\subsection{Our work}

To this end, we propose a collective prediction framework for distributed classification over multi-agent networks, which combines independent local training with collaborative inference. Specifically, during the \emph{training} phase, each agent trains a local classifier using its own dataset. This independent training process is motivated by several trends and challenges in modern ML systems. First, it has become increasingly common in large-scale and privacy-sensitive applications, such as edge computing~\cite{yilmaz2025private} and foundation models~\cite{yadav2024survey}. For example, large language models or domain-specific experts may be trained independently across different entities (e.g., companies, institutions, clients, or devices) and later coordinated at inference time~\cite{mavromatis2024pack}. Second, feature and model heterogeneity often make joint training impractical. For example, agents may process different input modalities or rely on distinct model architectures tailored to their local data. During the \emph{testing} phase, each agent receives a private testing sample and participates in a distributed inference protocol to infer the shared class label by exchanging predictive information with its neighbors. This collaborative framework introduces several challenges that are not present in the standard supervised classification setting:
\begin{itemize}
    \item[(i)] \textit{How should agents aggregate predictions at test time?} Unlike standard ML pipelines that rely on a single trained model, each agent in our setting is equipped with an independently trained local model. This raises the question of how agents can effectively aggregate their individual predictions during inference.
    \item[(ii)] \textit{How does agent heterogeneity govern the generalization performance?} While classical supervised generalization theory typically analyzes a single learned predictor, our setting involves collective predictions based on multiple independently trained local classifiers. This raises the question of how agents' inherent heterogeneity and inter-agent interactions shape the generalization behavior of the entire distributed system.
    \item[(iii)] \textit{How does limited communication affect prediction?} Collective prediction typically relies on sufficient information exchange. When communication is limited in either the number of rounds or message precision, the resulting predictions may deviate from the sufficient-communication limit. This raises the question of how communication constraints affect the individual prediction accuracy.
\end{itemize}

To introduce our framework, we assume that agents produce soft predictions using their local classifiers and adopt the classical DeGroot model~\cite{degroot1974reaching} as an interpretable and analytically tractable mechanism for information aggregation during inference. In this model, agents iteratively update their predictions by taking a weighted average of their neighbors’ predictions according to a \emph{combination policy} defined over a fixed communication topology~\cite{degroot1974reaching}. The DeGroot model has been extensively studied in the distributed inference literature and is well known for achieving consensus in multi-agent systems~\cite{olfati2007consensus,acemoglu2011opinion}. DeGroot updating has also been examined experimentally in human social networks~\cite{chandrasekhar2020testing}. Moreover, its linear update structure makes it particularly well-suited for distributed implementation with finite communication rounds.

The novelty of the present work is not in proposing a new information aggregation rule, but in formulating and analyzing this inference-time collaboration framework. Accordingly, this work develops a theoretical and experimental study with the following contributions:

(i) We establish theoretical guarantees for distributed classification over multi-agent networks under the independent training and collaborative inference paradigm. The derived bounds characterize how network structure, local model quality, and data heterogeneity influence predictive accuracy.

(ii) We analyze how communication constraints affect classification, deriving agent-dependent bounds for both finite-round and finite-precision communication. These results characterize how incomplete consensus, network topology, combination policy, and communication precision influence prediction accuracy.

(iii) We evaluate the proposed framework on two complementary benchmarks and empirically demonstrate the benefits of cooperative inference under different observation settings.

These contributions provide theoretical foundations for test-time collaboration in distributed classification and highlight connections between distributed inference and distributed ML.

\subsection{Related Literature}

\subsubsection{Distributed machine learning}
Most existing work in this domain has focused on collaboratively training a shared global model using data distributed across multiple devices or computational nodes. Two dominant approaches are federated learning, which relies on a central server to aggregate locally trained updates~\cite{mcmahan2017communication,konevcny2016federated}, and decentralized learning, which dispenses with the server and synchronizes models through peer-to-peer communication over a fixed network graph~\cite{lian2017can,jiang2017collaborative}. Variants of these methods address challenges such as non-i.i.d.\ data~\cite{zhao2018federated,lin2020ensemble}, agent sampling~\cite{kairouz2021advances}, personalization~\cite{fallah2020personalized}, and communication efficiency during training~\cite{wen2022federated,xie2024convergence}. Despite their topological differences, in their canonical forms, both paradigms aim to produce a shared global model during training. At inference time, this model is either executed on a central server or deployed across devices, so the system effectively operates as a \emph{centralized} decision maker, even when implementation is distributed. These joint-training frameworks are therefore complementary to the setting studied here: they aim to produce a shared global model during training, whereas our work focuses on inference-time collaboration among independently trained local models. A small body of work has explored collaboration during inference, where a set of independently trained models exchange information with \emph{each other} to improve accuracy via trust-score design~\cite{mendler2021test,alzubi2018consensus}. Yet they all assume that all agents observe the \emph{same} test input, which restricts agents' heterogeneity to model parameters or inductive biases. This setup closely parallels ensemble learning and multiple classifier combination discussed later.

In contrast, we study a setting in which each agent receives its own private testing sample, and these samples may be statistically dependent across agents. This formulation naturally accommodates heterogeneous feature spaces (e.g., multimodal sensors or domain-specific models) and enables distributed decision-making without feature alignment. Collaborative inference in this setting has received little attention, and our work provides a formal theoretical and experimental treatment.

\subsubsection{Distributed inference}

Distributed inference studies how a network of agents collaborates to estimate a latent variable based on private observations and local communication~\cite{kar2013consensus+,sayed2013diffusion,jordan2019communication}. Applications include state estimation in smart power grids, cooperative perception in multi-robot systems, and opinion dynamics in social networks. These applications motivate different modeling frameworks depending on the nature of the latent variable. For discrete hypothesis spaces, a widely studied paradigm is social learning (SL), in which agents iteratively exchange local beliefs to identify the true hypothesis. SL naturally accommodates heterogeneous agents, since each agent may observe different types of signals modeled by distinct statistical distributions~\cite{matta2025social}. Classical SL methods assume that agents possess a family of likelihood models, which are used in updating beliefs via Bayes’ rule (or suitable variants)~\cite{jadbabaie2012nonbayesian,lalitha2018social,nedic2017fast,bordignon2021adaptive,kayaalp2022arithmetic}. While analytically elegant, this assumption is restrictive in real-world problems, where data distributions are complex and rarely admit closed-form models. To address this limitation,~\cite{bordignon2023learning} proposed a social machine learning strategy, where each agent first trains a local classifier using its own labeled examples, and then performs the SL rule based on the trained classifier. The asymptotic and non-asymptotic performance of this strategy has been examined in~\cite{bordignon2023learning,hu2025non-asymptotic}.

Our work follows this data-driven perspective but differs in the task formulation. In SL, agents typically process streams of unlabeled i.i.d.\ data to infer a shared latent hypothesis. In the supervised classification setting studied here, each agent receives only a \emph{single} testing sample for collective prediction, and collaboration takes place through a finite-round exchange of learned local decision statistics at test time. This difference changes the operational role of communication and leads to analytical questions that are not directly addressed in existing SL studies or in classical single-agent supervised classification.

\subsubsection{Multiple classifier combination}

This is a classical line of research in ML that studies how to aggregate the predictions of multiple models to improve performance. While our framework is distributed and collaborative in nature, it bears a strong conceptual connection to this literature. Specifically, when all agents follow the DeGroot model for information aggregation at test time, their predictions converge (in the limit of infinite communication) to a consensus that is a \emph{weighted average} of the local classifier outputs. This resembles the effect of a centralized fusion rule, a central topic in the literature on multiple classifier combination. Classical techniques in this area include voting schemes, bagging, boosting, and stacking~\cite{polikar2006ensemble, kuncheva2014combining}. These approaches form the foundation of what is often referred to as \emph{ensemble learning}, where multiple base classifiers are combined to improve generalization and robustness~\cite{bartlett1998boosting,zhouensemble2012}. More advanced strategies, such as mixture-of-experts models, combine classifiers through input-dependent gating~\cite{yadav2024survey}. These approaches, however, typically operate in a \emph{centralized} setting during test time and assume a common feature representation across experts.

In contrast, our framework operates in a fully-distributed environment during both training and testing phases. When communication is limited to a finite number of rounds, consensus may not be reached, and predictions remain agent-dependent. This dynamic, iterative process differs fundamentally from the static, one-shot aggregation used in most ensemble methods.

\textbf{Notation:} We use boldface font to denote random variables and normal font for their realizations, e.g., $\h$ and $h$. $\E$ and $\P$ denote the expectation and probability operators, respectively. $\mathsf{1}[\cdot]$ denotes the indicator function, and $\mathbbm{1}$ denotes the all-ones vector.


\section{Problem Formulation}
\label{sec: SML strategy}

We are given a network of $K$ agents indexed by $k$ and a binary classification task with class label $\gamma$. The sets of agents and labels are denoted by $\K\triangleq\{1,2,\dots,K\}$ and $\Gamma\triangleq\{+1,-1\}$, respectively. We assume that the semantic meaning of the labels is fixed and shared across the network during both training and inference. Each agent $k$ holds a local training set $\D_k$ of $N_k$ labeled examples $(\h_{k,n},\bgamma_{k,n})$, where $\h_{k,n}$ is the $n$-th feature vector and $\bgamma_{k,n}\in\Gamma$ is its label. Let $\H_k$ denote the feature space of agent $k$. A network observation consists of one local view from each agent and therefore takes values in the product space
\begin{equation}\label{eq: network feature space}
    \H\triangleq\H_1\times\cdots\times\H_K.
\end{equation}
For each $\gamma\in\Gamma$, let $P_\gamma$ denote the class-conditional distribution of a network observation $\h=(\h_1,\ldots,\h_K)\in\H$, and let $P_{\gamma,k}$ denote its $k$-th marginal. When this marginal admits a density or probability mass function, we denote it by $p_k(h|\gamma)$.

For the binary classification task, we assume a uniform class prior. Then, $P_\gamma$ induces the joint distribution $Q$ of $(\h,\bgamma)$:
\begin{equation}\label{eq: network population law}
    \P_Q(\bgamma=\gamma)=\frac12,  \quad \h\mid\{\bgamma=\gamma\}\sim P_\gamma,
    \quad \gamma\in\Gamma.
\end{equation}
We denote by $Q_k$ the corresponding marginal distribution of $(\h_k,\bgamma)$ at agent $k$. Conditional on $\bgamma$, the local observations in $\h$ may be statistically dependent across agents. The agents are \emph{heterogeneous} as they may have different feature spaces $\H_k$ (e.g., different views in multi-view learning) or distinct class-conditional models $p_k(h|\gamma)$ (e.g., different local distributions in personalized federated learning). The posterior log-ratio is sufficient for binary decision making. Under the uniform prior in~\eqref{eq: network population law}, Bayes' rule gives
\begin{equation}\label{eq: logit function}
    \Lambda_k(h)\triangleq \log\frac{p_k(+1|h)}{p_k(-1|h)} =
    \log\frac{p_k(h|\!+1)}{p_k(h|\!-1)}, \quad \forall h\in\H_k
\end{equation}
where $p_k(\gamma|h)$ denotes the posterior probability. The function $\Lambda_k$ is also known as the logit function in the literature.

\subsection{The Training Phase}
\label{sec: the training phase}

In the training phase, each agent $k$ uses its own dataset $\D_k$ to train a local classifier. For each agent $k$, the samples in $\D_k$ are independently drawn according to $Q_k$. Moreover, samples with the same index $n$ may be statistically dependent across agents, while the corresponding cross-agent sample collections are independent across indices.  We denote by $f_k:\H_k\to\mathbb{R}$ a generic real-valued score function from an admissible function class $\F_k$. The function $f_k$ serves as an approximation to the logit function $\Lambda_k$ in~\eqref{eq: logit function}. Formally,
\begin{equation}
    f_k(h) = \log\frac{\widehat{p}_k(+1|h)}{\widehat{p}_k(-1|h)}
\end{equation}
where $\widehat{p}_k(\gamma|h)$ denotes the estimated posterior probability. Given $\D_k$, the empirical risk associated with a candidate score function $f_k\in\F_k$ is
\begin{equation}\label{eq: empirical risk}
    \R_{k,\emp}(f_k)\triangleq \frac{1}{N_k}\sum_{n=1}^{N_k}\Phi(\bgamma_{k,n}f_k(\h_{k,n}))
\end{equation}
where $\Phi$ denotes the loss function used for training. Let $\mathcal A_k$ denote the local training algorithm at agent $k$ and let $\mathcal{S}_k$ collect its internal randomness. The resulting trained score function is denoted by
\begin{equation}\label{eq: actual trained f_k}
    \widehat{\f}_k \triangleq \mathcal A_k(\D_k,\mathcal{S}_k) \in\F_k.
\end{equation}
We measure the optimization accuracy of $\widehat{\f}_k$ by its empirical optimality gap, namely, the difference between its empirical risk and the infimum empirical risk over $\F_k$:
\begin{equation}\label{eq: local optimization error}
    \mathrm{Gap}_{k,\mathrm{opt}}(\widehat{\f}_k)\triangleq\R_{k,\emp}(\widehat{\f}_k)-\inf_{f_k\in\F_k}\R_{k,\emp}(f_k)\geq 0.
\end{equation}
A zero gap corresponds to exact empirical risk minimization (ERM), while a positive gap quantifies the residual optimization error of the trained score function. 

For the theoretical analysis, we use the following assumptions on the loss function $\Phi$ and the function class $\F_k$, which are commonly adopted in statistical learning theory~\cite{bartlett2006convexity,mohri2018foundations}.
\begin{assump}[\textbf{Conditions on the loss function}]\label{assump: risk function}
The loss function $\Phi:\mathbb{R}\to\mathbb{R}_{+}$ is convex, non-increasing and differentiable at $0$ with $\Phi^\prime(0)<0$. Also, it is $L_{\Phi}$-Lipschitz.  \qed
\end{assump}
\begin{assump}[\textbf{Boundedness of functions}]\label{assump: bound}
        There exists a constant $\beta>0$ such that $|f_k(h)|\leq\beta$ for every $k\in\K$, $f_k\in\F_k$, and $h\in\H_k$.  \qed
\end{assump}

\noindent
The training phase is completed by empirically centering the learned score $\widehat{\f}_k$, following~\cite{bordignon2023learning,hu2025non-asymptotic}. Since independently trained local classifiers may exhibit different empirical offsets, we subtract each score's empirical training mean to better align the local outputs before collaboration. Specifically, we define the \emph{empirical training mean as}
\begin{equation}\label{eq: empirical training mean}
    \bmu_{k,\emp}(f_k)\triangleq\frac{1}{N_k}\sum_{n=1}^{N_k}f_k(\h_{k,n}), \quad \forall f_k\in\F_k.
\end{equation}
The resulting classifier generates a centered score:
\begin{equation}\label{eq: trained classifier}
    \c_k(h)\triangleq\widehat{\f}_k(h) - \bmu_{k,\emp}(\widehat{\f}_k), \quad \forall h\in\H_k.
\end{equation}
Therefore, after training, the sign of $\c_k(h)$ determines the local prediction at agent $k$ for any unseen feature vector $h\in\H_k$. The role of empirical centering is examined in Section~\ref{sec: consistent training}.

\subsection{The Prediction Phase}
\label{sec: the prediction phase}
In the prediction phase, the network receives a fresh testing example $(\h^*,\bgamma^*)$ drawn from $Q$ independently of the training data and local training-algorithm randomness, with each agent $k$ observing only its local component $\h_k^*\in\H_k$. These local views may remain statistically dependent conditioned on $\bgamma^*$. The goal of the agents is to make a \emph{collective} prediction about $\bgamma^*$ by collaborating with their neighbors. We use $t$ to index the communication rounds and denote by $\blambda_{k,t}$ the \emph{decision statistic} of agent $k$ at round $t$, whose sign determines its prediction. Following the DeGroot model, the agents iteratively combine their decision statistics over the network. Upon observing $\h_k^*$, agent $k$ initializes its decision statistic using the local classifier $\c_k$:
\begin{equation}\label{eq: local decision statistic}
    \blambda_{k,0}\triangleq\c_k(\h_k^*).
\end{equation}
Then, each agent communicates with its neighbors and updates its decision statistic by aggregating the local decision statistics in the neighborhood~\cite{degroot1974reaching}:
\begin{equation}\label{eq: distributed learning rule}
    \blambda_{k,t}=\sum_{\ell=1}^{K}a_{\ell k}\blambda_{\ell,t-1}
\end{equation}
where $a_{\ell k}$ denotes the combination weight agent $k$ assigns to its neighbor $\ell$, which satisfies
\begin{equation}\label{eq: combination weight}
    \sum_{\ell=1}^K a_{\ell k}=1, \;\; a_{\ell k}> 0\;\; \forall \ell\in \mathcal{N}_k,\text{ and }a_{\ell k}=0 \;\;\forall \ell\notin\mathcal{N}_k
\end{equation}
with $\mathcal{N}_k$ denoting the neighboring set of agent $k$. Communication proceeds in synchronous rounds over this fixed graph, and the exchanged messages are assumed to be delivered reliably. Accordingly, the communication model considered here does not include asynchronous updates, time-varying connectivity, or packet losses. Moreover, we impose the following assumption on the topology of the communication network to ensure information diffusion throughout the network.
\begin{assump}[\textbf{Strongly-connected graph}]\label{assump: network}
    The graph of the communication network is strongly connected. That is, there exist paths with positive combination weights between any two distinct agents in both directions (the two paths need not be the same), and at least one agent has a self-loop, i.e., $a_{kk}>0$ for some agent $k$~\cite{sayed2014adaptation}.  \qed
\end{assump}
\noindent Under this assumption and from the Perron-Frobenius theorem~\cite{sayed2022inference},  the combination matrix $A=[a_{\ell k}]$ is primitive and admits a Perron vector $\pi$ satisfying
\begin{equation}\label{eq: Perron eigenvector}
    A\pi=\pi,\;\; \sum_{k=1}^{K}\pi_k=1,\text{ and }\pi_k>0, \;\; \forall k\in\K.
\end{equation}
The entries of $\pi$ quantify the relative influence of the agents following the adopted combination policy. In network science terminology~\cite{newman2018networks,lewis2011network}, this Perron vector corresponds to the notion of eigenvector centrality associated with the matrix $A$. Moreover, the second-largest magnitude among all eigenvalues of $A$, denoted by $\sigma_A$, is strictly smaller than 1.

To evaluate the performance of this framework, we first introduce a feasibility condition associated with the classification task, defined in terms of the target risk for training. For each agent $k$, the target risk ${\sf R}_k^o$ is the infimum of the expected risk over the local function class $\F_k$:
\begin{equation}\label{eq: expected target risk for the agent}
    {\sf R}_k^o\triangleq \inf_{f_k\in\F_k}\E_{(\h_k,\bgamma)\sim Q_k}\Phi(\bgamma f_k(\h_k)).
\end{equation}
The target risk for the \emph{network} is defined as a weighted average of the individual target risks involving the Perron vector $\pi$:
\begin{equation}\label{eq: expected target risk}
    {\sf R}^o\triangleq \sum_{k=1}^K\pi_k{\sf R}_k^o.
\end{equation}
The reason for introducing $\pi$ in defining ${\sf R}^o$ will be clear in our subsequent analysis for the classification error. The feasibility of the classification task requires the following assumption.
\begin{assump}[\bf{Feasibility}]\label{assump: feasibility}
    The network target risk satisfies ${\sf R}^o<\Phi(0)$.\qed
\end{assump}
\noindent We note that $\Phi(0)$ represents the expected risk of the model $f_k=0$, which corresponds to the case where agent $k$ assigns labels $+1$ and $-1$ with equal probability to all feature vectors. Hence, ${\sf R}_k^o=\Phi(0)$ implies that the feature vectors of agent $k$ provide no useful information for classification within the prescribed function class $\F_k$. Assumption~\ref{assump: feasibility} is a network-level feasibility condition: it does not require every agent to be individually informative, but only that the network target risk satisfies ${\sf R}^o<\Phi(0)$. Thus, the collective observations available in the network must contain enough class-discriminative information to outperform random guessing. If this condition fails, then the classification problem is effectively uninformative at the network level.

\section{Performance Guarantees}
\label{sec: main results}

This section develops four principal performance guarantees for the proposed framework: sufficient-communication classification, finite-round communication, finite-precision communication, and probably approximately correct (PAC)-style generalization. The analysis of the first three guarantees proceeds in two stages. We first characterize the training performance of the classifier network through an \emph{expected network margin}. Two supporting propositions are derived for clarifying the role of empirical centering in~\eqref{eq: trained classifier} and establishing when approximate local training attains such a margin with high probability. We then use this common margin event to characterize the prediction error under the three communication settings. The final PAC-style theorem provides a complementary guarantee that relates population error directly to the performance observed on the training data. We conclude by clarifying the relation to conventional social learning recursions and summarizing the corresponding results without empirical centering.

\subsection{Network Margin and Training Guarantee}
\label{sec: consistent training}

To define the expected network margin used throughout the analysis, we first identify the aggregate classifier obtained after sufficient communication. Following the DeGroot averaging rule~\eqref{eq: distributed learning rule}, agents agree on a \emph{common} decision statistic denoted by $\blambda_\ave$ after sufficient communication~\cite{degroot1974reaching}. That is, it holds almost surely (a.s.) that
\begin{equation}\label{eq: common decision statistic}
    \blambda_\ave \triangleq \lim\limits_{t\to\infty}\blambda_{k,t}\overset{\text{a.s.}}{=}\sum_{k=1}^K\pi_k\c_k(\h_k^*).
\end{equation}
Accordingly, the \emph{collective} prediction under sufficient communication is
\begin{equation}\label{eq: label of single-sample classification}
    \widehat{\bgamma}\triangleq\sign(\blambda_\ave)
\end{equation}
where any fixed tie rule may be used at zero. Equivalently, the distributed rule~\eqref{eq: distributed learning rule} functions as an aggregate classifier:
\begin{equation}\label{eq: single-sample ensemble classifier}
    \c_\ave(\h^*)\triangleq\sum_{k=1}^K\pi_k\c_k(\h_k^*).
\end{equation}
This aggregate classifier resembles a $\pi$-weighted soft decision combination of $K$ independently trained local classifiers. The analysis of $\c_\ave$ is nontrivial because the informativeness of the local scores may vary across agents and the local views may be statistically dependent. Existing theoretical results on soft decision combination often rely on specific distributional assumptions on the local scores~\cite{kuncheva2002theoretical}. We therefore develop an analysis that accommodates these heterogeneous and statistically dependent local observations.

Since the deployed local scores are empirically centered, we first examine how the empirical centering operation in~\eqref{eq: trained classifier} affects the aggregate score relative to the common zero decision threshold. For a generic collection of local score functions
\begin{equation}\label{eq: f_ave}
    f=(f_1,\ldots,f_K)\in\F\triangleq\F_1\times\cdots\times\F_K,
\end{equation}
we define the uncentered aggregate score by
\begin{equation}\label{eq: raw aggregate classifier}
    f_{\ave}(\h)\triangleq\sum_{k=1}^K\pi_k f_k(\h_k).
\end{equation}
Let $\bmu_{\rm raw}^{\gamma}(f)\triangleq \E^{(\gamma)}f_{\ave}(\h)$ denote its class-conditional mean, where $\E^{(\gamma)}$ denotes expectation under $P_\gamma$. We introduce the following network-level quantities associated with $\bmu_{\rm raw}^{\gamma}(f)$:
\begin{align}
\label{eq: mu_mid}
    \mu_{\rm mid}(f) &\triangleq \frac{\bmu_{\rm raw}^{+}(f)+\bmu_{\rm raw}^{-}(f)}{2},
    \\
    \label{eq: raw midpoint and signed half difference}
    D(f) &\triangleq \frac{\bmu_{\rm raw}^{+}(f)-\bmu_{\rm raw}^{-}(f)}{2}.
\end{align}
Moreover, we define the network empirical training mean by
\begin{equation}\label{eq: network empirical center main}
    \bmu_{\emp}(f)\triangleq\sum_{k=1}^K\pi_k\bmu_{k,\emp}(f_k).
\end{equation}
Under the uniform class prior, for any fixed $f$ independent of the training samples, the training setup in Section~\ref{sec: the training phase} implies $\E[\bmu_{\emp}(f)]=\mu_{\rm mid}(f)$. Hence, $\mu_{\rm mid}(f)$ is the population counterpart of the network empirical training mean. With the above definitions, we establish the following proposition.

\begin{proposition}[\textbf{Effect of additive centering}]
\label{prop: effect of additive centering}
For arbitrary additive centers $\chi_k$, let $\chi_\pi\triangleq\sum_k\pi_k\chi_k$ and define
\begin{equation}
    s_{\chi,f}(\h) \triangleq \sum_{k=1}^K\pi_k\bigl[f_k(\h_k)-\chi_k\bigr].
\end{equation}
Its expected margin over both classes at threshold zero is
\begin{align}
\nonumber
    \Delta_\chi(f) &\triangleq \min\Bigl\{\E^{(+1)}s_{\chi,f}(\h), -\E^{(-1)}s_{\chi,f}(\h)\Bigr\}\\
    \label{eq: margin under additive shift}
    &=D(f)-|\chi_\pi-\mu_{\rm mid}(f)|.
\end{align}
Hence the population midpoint $\chi_\pi=\mu_{\rm mid}(f)$ maximizes this margin. In particular, the margins under raw and empirically centered scores satisfy
\begin{align}
\label{eq: raw margin}
    \Delta_{\rm raw}(f)&=D(f)-|\mu_{\rm mid}(f)|, \\
    \label{eq: raw centered margin comparison}
    \Delta_{\rm cen}(f)&=D(f)-|\bmu_{\emp}(f)-\mu_{\rm mid}(f)|.
\end{align}
Therefore, empirical centering~\eqref{eq: trained classifier} improves the expected zero-threshold margin if and only if
\begin{equation}\label{eq: centering improvement condition}
    |\bmu_{\emp}(f)-\mu_{\rm mid}(f)| < |\mu_{\rm mid}(f)|.
\end{equation}
Moreover, empirical centering is invariant to the agent-specific additive offsets: if $f'_k(h)=f_k(h)+o_k$, then
\begin{equation}\label{eq: empirical centering invariance}
    f'_k(h)-\bmu_{k,\emp}(f'_k) = f_k(h)-\bmu_{k,\emp}(f_k).
\end{equation}
Consequently, the centered DeGroot recursion~\eqref{eq: distributed learning rule} stays unchanged by these offsets at every communication round.
\end{proposition}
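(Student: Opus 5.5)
The argument is a direct computation using linearity of expectation, followed by the identity $\min\{a,b\}=\tfrac{a+b}{2}-\tfrac{|a-b|}{2}$.

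\textbf{Class-conditional means.} For fixed $f$ and fixed centers $\chi_k$, linearity gives
\begin{equation*}
\E^{(\gamma)}s_{\chi,f}(\h)=\sum_k\pi_k\bigl[\E^{(\gamma)}f_k(\h_k)-\chi_k\bigr]=\bmu_{\rm raw}^{\gamma}(f)-\chi_\pi ,
\end{equation*}
where we used $\sum_k\pi_k=1$ from~\eqref{eq: Perron eigenvector}. By~\eqref{eq: mu_mid}--\eqref{eq: raw midpoint and signed half difference}, $\bmu_{\rm raw}^{\pm}(f)=\mu_{\rm mid}(f)\pm D(f)$. Hence the two quantities inside the minimum are
\begin{equation*}
a=D(f)-(\chi_\pi-\mu_{\rm mid}(f)),\qquad b=D(f)+(\chi_\pi-\mu_{\rm mid}(f)).
\end{equation*}
Their half-sum is $D(f)$ and their half-difference has absolute value $|\chi_\pi-\mu_{\rm mid}(f)|$. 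Applying the min identity yields~\eqref{eq: margin under additive shift}.

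\textbf{Maximizer and special cases.} The maximization claim is immediate, since the absolute-value term vanishes exactly when $\chi_\pi=\mu_{\rm mid}(f)$.
\begin{itemize}
\item The raw case takes $\chi_k=0$, giving~\eqref{eq: raw margin}.
\item The centered case takes $\chi_k=\bmu_{k,\emp}(f_k)$. Then $\chi_\pi=\bmu_\emp(f)$ by~\eqref{eq: network empirical center main}, giving~\eqref{eq: raw centered margin comparison}.
\end{itemize}
One subtlety needs care here. When $f=\widehat{\f}$ is data-dependent, $\E^{(\gamma)}$ must be read as expectation over the fresh test sample $\h^*$ conditional on the training data, with the empirical centers treated as constants. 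I will state this conditioning explicitly so that the computation above applies verbatim, the centers being deterministic given training. Comparing~\eqref{eq: raw margin} and~\eqref{eq: raw centered margin comparison}, $\Delta_{\rm cen}(f)>\Delta_{\rm raw}(f)$ holds iff~\eqref{eq: centering improvement condition}, because $D(f)$ cancels.

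\textbf{Offset invariance.} Take $f'_k=f_k+o_k$. By~\eqref{eq: empirical training mean}, $\bmu_{k,\emp}(f'_k)=\bmu_{k,\emp}(f_k)+o_k$, since the average of a constant is that constant. Subtracting gives~\eqref{eq: empirical centering invariance}. Consequently the initial statistics~\eqref{eq: local decision statistic} coincide for $f$ and $f'$. The recursion~\eqref{eq: distributed learning rule} is a fixed linear map applied to identical inputs, so induction on $t$ shows that $\blambda_{k,t}$ is unchanged for all $k$ and $t$.

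No step is genuinely hard. The main point requiring care is the conditioning in the centered case, i.e., making precise that the margin is a conditional expectation given the training data.
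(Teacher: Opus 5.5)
Your proposal is correct and follows essentially the same route as the paper's proof. Both write the two shifted class means as $D(f)\mp(\chi_\pi-\mu_{\rm mid}(f))$ and take their minimum, then specialize $\chi_\pi$ to $0$ and to $\bmu_{\emp}(f)$, and finally use $\bmu_{k,\emp}(f_k+o_k)=\bmu_{k,\emp}(f_k)+o_k$ for offset invariance. Your explicit conditioning on the training data when $f=\widehat{\f}$ is a useful clarification the paper leaves implicit, and the normalization $\sum_k\pi_k=1$ you invoke in the first step is not actually needed, since $\chi_\pi$ is defined directly as $\sum_k\pi_k\chi_k$.
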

\begin{proof}
See Appendix~\ref{appendix: proof additive centering}.
\end{proof}
\noindent For the learned classifiers $\c_k$ from Section~\ref{sec: the training phase}, Proposition~\ref{prop: effect of additive centering} applies with $f_k=\widehat{\f}_k$ and $\chi_k=\bmu_{k,\emp}(\widehat{\f}_k)$. It shows that empirical centering acts as a threshold alignment mechanism that changes the location of the aggregate score relative to the zero decision threshold while leaving the class-conditional mean separation $2D(\widehat{\f})$ unchanged. Its effect on the expected zero-threshold margin depends on how closely the realized empirical training mean $\bmu_{\emp}(\widehat{\f})$ aligns with the corresponding population midpoint $\mu_{\rm mid}(\widehat{\f})$, as characterized by \eqref{eq: centering improvement condition}. Moreover, empirical centering removes agent-specific additive offsets from the local scores and hence from the subsequent collaborative recursion.

We next characterize the expected network margin attained by the centered local classifiers. For any $f_k\in\F_k$, define its empirically centered score
\begin{equation}\label{eq: generic centered classifier}
    c_{k,f_k}(h)\triangleq f_k(h)-\bmu_{k,\emp}(f_k),
\end{equation}
and the corresponding class-conditional means
\begin{equation}\label{eq: expected decision statistic under class +1}
    \bmu_k^+(f_k)\triangleq\E_{\h_k}^{(+1)}c_{k,f_k}(\h_k),
\end{equation}
\begin{equation}\label{eq: expected decision statistic under class -1}
    \bmu_k^-(f_k)\triangleq\E_{\h_k}^{(-1)}c_{k,f_k}(\h_k),
\end{equation}
where $\E_{\h_k}^{(\gamma)}$ denotes expectation under $P_{\gamma,k}$. The corresponding network means are
\begin{equation}\label{eq: expected decision statistic for the network}
    \bmu^+(f)=\sum_{k=1}^K\pi_k\bmu_k^+(f_k),\quad
    \bmu^-(f)=\sum_{k=1}^K\pi_k\bmu_k^-(f_k).
\end{equation}
Before conditioning on the training phase, these quantities are random because the deployed local functions and empirical training means $\bmu_{k,\emp}(f_k)$ depend on the realized training data $\D_k$ and, when applicable, local optimization randomness $\mathcal{S}_k$. The inclusion of the Perron vector $\pi$ in~\eqref{eq: expected decision statistic for the network} is consistent with the collective nature of the prediction phase, where the same weights determine the common decision statistic in~\eqref{eq: common decision statistic}.

For a collection of learned functions $\widehat{\f}=(\widehat{\f}_1,\ldots,\widehat{\f}_K)$, we say that the classifier network satisfies the $\delta$-margin consistent training condition if
\begin{equation}\label{eq: delta-margin consistent training condition}
    \bmu^+(\widehat{\f})>\delta  \quad \text{and} \quad \bmu^-(\widehat{\f})<-\delta
\end{equation}
where $\delta\geq0$ is the prescribed \emph{decision margin}. In view of~\eqref{eq: single-sample ensemble classifier} and \eqref{eq: generic centered classifier}--\eqref{eq: expected decision statistic for the network}, this condition requires that the class-conditional mean of the aggregate score $\c_\ave(\h^*)$ lies on the correct side of the zero decision threshold for both classes, with a margin exceeding $\delta$. The performance of the training phase is then characterized by the probability that condition~\eqref{eq: delta-margin consistent training condition} holds for the classifier network. Define the corresponding event
\begin{equation}\label{eq: event of delta-margin consistent learning}
    \C_\delta \triangleq \Bigl\{ \bmu^+(\widehat{\f})>\delta,\, \bmu^-(\widehat{\f})<-\delta\Bigr\},
\end{equation}
and let
\begin{equation}
    P_{c,\delta}\triangleq\P(\C_\delta),
\end{equation}
where the probability is taken over the randomness from training. Analyzing $P_{c,\delta}$ is non-trivial, as it depends on various factors, including the size of the training sets, the complexity of the function class, the loss function, and the combination policy. A lower bound on $P_{c,\delta}$ under exact ERM, i.e., zero optimization gap in~\eqref{eq: local optimization error}, was established in~\cite{hu2025non-asymptotic} using tools from~\cite{bordignon2023learning}. We next extend it to approximate local optimization. The auxiliary quantities $N_{\max}$, $\alpha$, $\rho$, $\mathdutchcal E_\Phi(r,\delta)$, and $\delta_{\max}(r)$ appearing below are defined in Appendix~\ref{appendix: training auxiliary quantities}.

\begin{proposition}[\textbf{$\delta$-margin consistent training under approximate optimization}]
\label{prop: P_c_delta}
Suppose that, for every $k$, there exists a deterministic tolerance $\varepsilon_{k,\opt}\geq0$ such that
\begin{equation}\label{eq: approximate ERM condition}
    \mathrm{Gap}_{k,\opt}(\widehat{\f}_k) \leq \varepsilon_{k,\opt}
\end{equation}
almost surely, and define
\begin{equation}\label{eq: network optimization error}
    \varepsilon_{\opt} \triangleq \sum_{k=1}^K\pi_k\varepsilon_{k,\opt}.
\end{equation}
If $\mathsf R^o+\varepsilon_{\opt}<\Phi(0)$, $0\leq\delta<\delta_{\max}(\mathsf R^o+\varepsilon_{\opt})$, and $\rho<\mathdutchcal E_\Phi(\mathsf R^o+\varepsilon_{\opt},\delta)$, then under Assumptions~\ref{assump: risk function}--\ref{assump: feasibility},
\begin{equation}\label{eq: P_c_delta}
    P_{c,\delta} \geq 1 - 2\exp\Biggl\{-\frac{8N_{\max}}{\alpha^2\beta^2}\Bigl(\mathdutchcal E_\Phi(\mathsf R^o+\varepsilon_{\opt},\delta)-\rho \Bigr)^2 \Biggr\}.
\end{equation}
\end{proposition}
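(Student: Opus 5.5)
The argument reduces the margin event $\C_\delta$ to a bound on the network empirical risk of $\widehat{\f}$. A uniform concentration inequality then links that empirical risk to the population target risk ${\sf R}^o$. The auxiliary quantities from Appendix~\ref{appendix: training auxiliary quantities} are presumably built for exactly this. I read $\mathdutchcal E_\Phi(r,\delta)$ as the slack between a risk level $r$ and the smallest network risk compatible with a violated $\delta$-margin, $\rho$ as a Rademacher-type complexity term, $\alpha$ as a Lipschitz/weighting constant, and $N_{\max}$ as an effective sample size. The approach is to repeat the exact-ERM argument of~\cite{hu2025non-asymptotic,bordignon2023learning} with ${\sf R}^o$ replaced by ${\sf R}^o+\varepsilon_{\opt}$ throughout.

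\textbf{Step 1 (margin failure forces large risk).} On the complement $\C_\delta^c$, say $\bmu^+(\widehat{\f})\le\delta$ (the other case is symmetric). I will use convexity of $\Phi$ and Jensen's inequality on the $\pi$-weighted, class-averaged centered score. Combined with $\Phi'(0)<0$ and monotonicity, this gives a lower bound on the network risk $\sum_k\pi_k\E\,\Phi(\bgamma c_{k,\widehat{\f}_k}(\h_k))$ by a quantity that depends only on $\delta$. The empirical centers must be handled by comparing them to the population midpoint, in the spirit of Proposition~\ref{prop: effect of additive centering}. The upshot is that, on $\C_\delta^c$, the network empirical risk of $\widehat{\f}$ exceeds $\mathsf R^o+\varepsilon_{\opt}$ by at least $\mathdutchcal E_\Phi(\mathsf R^o+\varepsilon_{\opt},\delta)$, up to fluctuation terms. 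The conditions $\delta<\delta_{\max}(\mathsf R^o+\varepsilon_{\opt})$ and $\mathsf R^o+\varepsilon_{\opt}<\Phi(0)$ are what make this gap positive.

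\textbf{Step 2 (approximate ERM).} By~\eqref{eq: approximate ERM condition},
\begin{equation*}
\R_{k,\emp}(\widehat{\f}_k)\le \inf_{f_k}\R_{k,\emp}(f_k)+\varepsilon_{k,\opt}\le \R_{k,\emp}(f_k^\circ)+\varepsilon_{k,\opt}
\end{equation*}
for any near-optimal comparator $f_k^\circ$ with population risk close to ${\sf R}_k^o$. Averaging with weights $\pi_k$ gives
\begin{equation*}
\sum_k\pi_k\R_{k,\emp}(\widehat{\f}_k)\le\sum_k\pi_k\R_{k,\emp}(f_k^\circ)+\varepsilon_{\opt}.
\end{equation*}
This is the only new ingredient relative to the exact-ERM case: $\varepsilon_{\opt}$ simply shifts the reference risk level. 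Since $\varepsilon_{k,\opt}$ is deterministic, this shift does not interact with the probabilistic part.

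\textbf{Step 3 (uniform deviation).} Combining Steps 1 and 2, $\C_\delta^c$ is contained in the event that the supremum over $\F$ of a $\pi$-weighted deviation between empirical and population quantities exceeds $\mathdutchcal E_\Phi-\rho$. The deviation covers both the losses $\Phi(\gamma f_k)$ and the centers $\bmu_{k,\emp}$. I will bound the expected supremum by $\rho$ using symmetrization and Talagrand's contraction with $L_\Phi$. The tail is then controlled by McDiarmid's inequality. Cross-agent samples with the same index may be dependent, so the bounded-difference argument has to be run over the index $n$, treating the tuple $(\h_{1,n},\dots,\h_{K,n})$ as one coordinate. Each coordinate moves the weighted statistic by at most a multiple of $\beta/N_{k}$, and this bounded difference is what produces the exponent $8N_{\max}/(\alpha^2\beta^2)$. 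Applying the bound once for the $+$ class and once for the $-$ class gives the factor $2$ via a union bound.

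\textbf{Main obstacle.} The delicate step is Step 1: turning a violated margin into an excess risk while the scores are centered by random empirical means. Here I would rely on the exact-ERM proof in~\cite{hu2025non-asymptotic} and check that every inequality there uses the ERM property only through the empirical risk being at most the target level. If so, replacing that level by $\mathsf R^o+\varepsilon_{\opt}$ carries the argument through verbatim. Ensuring that the unequal sample sizes $N_k$ are captured correctly by $N_{\max}$ and $\alpha$ is secondary bookkeeping.
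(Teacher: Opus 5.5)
\paragraph{Where the proposal holds and where it breaks.} Your Step~2 is the one new ingredient in the paper's proof. Approximate ERM gives $\R(\widehat{\f})\le \mathsf R^o+\varepsilon_{\opt}+2U$ with $U=\sup_{f\in\F}|\R_{\emp}(f)-\R(f)|$, and your fallback of rerunning the exact-ERM argument with $r=\mathsf R^o+\varepsilon_{\opt}$ in place of $\mathsf R^o$ is what the paper does. The reduction you actually sketch in Step~1, however, would not go through, for three reasons.
\begin{itemize}
\item You lower-bound the risk of the \emph{centered} scores, $\sum_k\pi_k\E\,\Phi(\bgamma c_{k,\widehat{\f}_k}(\h_k))$. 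But the training objective, the optimization gap and $\mathsf R^o$ are all defined for the \emph{raw} scores $\widehat{\f}_k$. Subtracting the random center $\bmu_{k,\emp}(\widehat{\f}_k)$ can change the surrogate risk by as much as $L_\Phi\beta$, so Steps~1 and~2 do not chain.
\item A one-class Jensen bound on the event $\bmu^+(\widehat{\f})\le\delta$ controls only the $+1$ half of the risk. The $-1$ half is bounded below only trivially, so you do not get $\mathdutchcal E_\Phi(r,\delta)$.
\item The factor $2$ in the bound does not come from a union over the two classes.
\end{itemize}

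\paragraph{The missing idea.} Route everything through the centering-invariant half-separation $D(f)$.
\begin{itemize}
\item By Proposition~\ref{prop: effect of additive centering}, $\C_\delta=\{\Delta_{\rm cen}(\widehat{\f})>\delta\}$ with $\Delta_{\rm cen}(\widehat{\f})=D(\widehat{\f})-|\bmu_{\emp}(\widehat{\f})-\mu_{\rm mid}(\widehat{\f})|$. Hence, for any $d>\delta$, $\overline{\C_\delta}\subseteq\{V_\mu\ge d-\delta\}\cup\{D(\widehat{\f})\le d\}$, where $V_\mu=\sup_{f\in\F}|\bmu_{\emp}(f)-\mu_{\rm mid}(f)|$.
\item Jensen's inequality applied to \emph{both} classes under the uniform prior gives the raw-risk bound $\R(f)\ge\Phi(D(f))$. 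Combined with Step~2, this yields $\{D(\widehat{\f})\le d\}\subseteq\{U\ge(\Phi(d)-r)/2\}$.
\item The two events are controlled by the tail bounds the paper imports from the exact-ERM analysis: $\exp\{-N_{\max}(x-4\rho)^2/(2\alpha^2\beta^2)\}$ for $V_\mu$, and $\exp\{-N_{\max}(x-4L_\Phi\rho)^2/(2\alpha^2\beta^2L_\Phi^2)\}$ for $U$.
\item Choosing $d=d_\delta^\star(r)$, so that $d-\delta=(\Phi(d)-r)/(2L_\Phi)=4\mathdutchcal E_\Phi(r,\delta)$, makes the two tails equal. Each becomes $\exp\{-8N_{\max}(\mathdutchcal E_\Phi(r,\delta)-\rho)^2/(\alpha^2\beta^2)\}$.
\end{itemize}
The factor $2$ is therefore the union over the centering deviation and the risk deviation. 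The hypotheses $\delta<\delta_{\max}(r)$ and $\rho<\mathdutchcal E_\Phi(r,\delta)$ are exactly what make this $d$ admissible: they give $\Phi(d)>r$ and make both thresholds exceed their offsets $4\rho$ and $4L_\Phi\rho$. A class-wise union stacked on a separate centering deviation would produce extra terms and would not reproduce the stated constant.
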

\begin{proof}
See Appendix~\ref{appendix: proof approximate ERM lemma}.
\end{proof}
\noindent
The key implication of Proposition~\ref{prop: P_c_delta} is that $P_{c,\delta}$ admits an exponential guarantee when the network Rademacher complexity $\rho$ is smaller than $\mathdutchcal E_\Phi(\mathsf R^o+\varepsilon_{\opt},\delta)$. The relevant properties of these auxiliary quantities are collected in Appendix~\ref{appendix: training auxiliary quantities}. In particular, for a fixed feasible decision margin $\delta$, $\mathdutchcal E_\Phi(r,\delta)$ decreases as the risk argument $r$ increases, while the feasible margin range characterized by $\delta_{\max}(r)$ also becomes smaller. Consequently, increasing $\varepsilon_{\opt}$ reduces $\mathdutchcal E_\Phi(\mathsf R^o+\varepsilon_{\opt},\delta)$, making the condition $\rho<\mathdutchcal E_\Phi(\mathsf R^o+\varepsilon_{\opt},\delta)$ more restrictive. Moreover, whenever this condition remains satisfied, the quantity $\mathdutchcal E_\Phi(\mathsf R^o+\varepsilon_{\opt},\delta)-\rho$ decreases, thereby reducing the exponential rate in~\eqref{eq: P_c_delta}. Thus, a larger optimization residual makes it more difficult to certify a prescribed decision margin and weakens the resulting training guarantee. Setting $\varepsilon_{k,\opt}=0$ for all agents recovers the exact-ERM specialization in~\cite{hu2025non-asymptotic}.

The bound also helps characterize the statistical cost of independent local training as reflected in the training guarantee, a defining feature of our framework when data pooling or joint training is unavailable. To quantify this cost, we compare independent training with a centralized benchmark under a fixed budget of available training samples. In the homogeneous setting considered in Appendix~\ref{appendix: centralized training comparison}, under matched optimization accuracy and when the relevant Rademacher complexity decreases with the number of training samples as described there, specializing Proposition~\ref{prop: P_c_delta} shows that the centralized benchmark yields a tighter lower bound on $P_{c,\delta}$ than an even split of the same samples among independently trained agents. The detailed comparison is provided in Appendix~\ref{appendix: centralized training comparison}.

\subsection{Sufficient-Communication Classification Guarantee}
\label{sec: convergence analysis}

We next turn to the classification performance of the trained network under sufficient communication. Section~\ref{sec: consistent training} characterizes the training performance through the probability that the learned network attains a prescribed decision margin $\delta$. We now examine how this margin condition translates into the probability of classification error for a fresh testing example. To distinguish the randomness arising from training from that of prediction, let
\begin{equation}\label{eq: training sigma field}
    \Ttrain \triangleq \sigma(\D_1,\ldots,\D_K, \mathcal{S}_1,\ldots,\mathcal{S}_K),
\end{equation}
which collects the training data and the randomness of the local training algorithms. Conditional on $\Ttrain$, the learned classifiers $\c_k$ and their empirical training means are therefore fixed. Moreover, the testing sample $(\h^*,\bgamma^*)$ is independent of $\Ttrain$. Since the training event $\C_\delta$ associated with~\eqref{eq: delta-margin consistent training condition} is determined by the training phase, it is $\Ttrain$-measurable.

From~\eqref{eq: label of single-sample classification}, under sufficient communication, the collective prediction $\widehat{\bgamma}$ is determined by the sign of the aggregate score $\c_\ave(\h^*)$. Let
\begin{equation}\label{eq: event of wrong classification-single sample case}
    \M\triangleq\{\bgamma^*\c_\ave(\h^*)\leq0\}
\end{equation}
denote the event that the signed aggregate score is nonpositive. Under any fixed tie-breaking rule at zero, the actual misclassification event is contained in $\M$. With $P_e\triangleq\P(\widehat{\bgamma}\neq\bgamma^*)$, the tower property and the $\Ttrain$-measurability of $\C_\delta$ give
\begin{align}
    \nonumber
    P_e &\leq\P(\M) =\E\!\left[\P(\M\givensmall\Ttrain)\right]\\
    \nonumber
    &= \E\!\left[\mathsf{1}[\C_\delta]\P(\M\givensmall\Ttrain)\right] + \E\!\left[\mathsf{1}[\overline{\C_\delta}]\P(\M\givensmall\Ttrain)\right]\\
    \label{eq: P_e decomposition}
    &\leq\E\!\left[\mathsf{1}[\C_\delta]\P(\M\givensmall\Ttrain)\right] + \P(\overline{\C_\delta}).
\end{align}
This decomposition separates the conditional prediction error for training realizations in $\C_\delta$ from the probability that training fails to attain the prescribed $\delta$-margin condition in~\eqref{eq: delta-margin consistent training condition}. The latter is characterized by Proposition~\ref{prop: P_c_delta}. Based on the event $\C_\delta$, the class-conditional mean of the aggregate score $\c_\ave(\h^*)$ lies on the correct side of the zero decision threshold with a margin greater than $\delta$. The remaining prediction error depends on the variability of the realized aggregate score $\c_\ave(\h^*)$ around its corresponding class-conditional mean.

The local observations $\h_1^*,\ldots,\h_K^*$ are different views of the same testing example and may therefore remain statistically dependent even after conditioning on the class label. Rather than requiring conditional independence, we allow their joint class-conditional distribution $P_\gamma$ to satisfy the following approximate tensorization property~\cite{caputo2015tensorization}. For a probability law $P$ and a bounded nonnegative function $G$, define
\begin{equation}
    \Ent_P(G) \triangleq \E_P[G\log G] - \E_P[G]\log\E_P[G].
\end{equation}
For each $k$, let $\h_{-k}$ collect all components of $\h$ except $\h_k$, and let $P_\gamma^{k|-k}(\cdot|\h_{-k})$ denote the conditional law of $\h_k$ given the remaining views under $P_\gamma$.
\begin{assump}[\textbf{Class-conditional approximate tensorization of entropy (ATE)}]
\label{assump: class conditional ATE}
For every $\gamma\in\Gamma$, there exists $\tau_\gamma\geq1$ such that
\begin{equation}\label{eq: class conditional ATE}
\Ent_{P_\gamma}(G)\leq\tau_\gamma\sum_{k=1}^K\E_{P_\gamma}\!\left[\Ent_{P_\gamma^{k|-k}(\cdot|\h_{-k})}\bigl(G(\cdot,\h_{-k})\bigr)\right]
\end{equation}
for every bounded nonnegative measurable $G$.\qed
\end{assump}
\noindent Assumption~\ref{assump: class conditional ATE} allows statistically dependent local views across agents beyond the product case. The coefficient $\tau_\gamma$ serves as a dependence parameter for the ATE condition: product class-conditional distributions satisfy the condition with $\tau_\gamma=1$, while larger values accommodate greater departures from this product structure. For convenience, define
\begin{equation}\label{eq: tau_max}
    \tau_{\max}\triangleq\max_{\gamma\in\Gamma}\tau_\gamma.
\end{equation}
Together with the training guarantee in Proposition~\ref{prop: P_c_delta}, the ATE condition~\eqref{eq: class conditional ATE} leads to the following sufficient-communication classification guarantee.

\begin{theorem}[\textbf{Sufficient-communication classification guarantee}]
\label{theorem: classification error}
Suppose Assumptions~\ref{assump: bound},~\ref{assump: network}, and~\ref{assump: class conditional ATE} hold, and let $\delta\geq0$. If the agents label their testing samples $\h_k^*$ with $\widehat{\bgamma}$ according to~\eqref{eq: label of single-sample classification}, then, almost surely on $\C_\delta$,
\begin{equation}
    \P(\M\givensmall\Ttrain) \leq \exp\Biggl\{ -\frac{\delta^2}{2\tau_{\max}\beta^2\sum_{k=1}^K\pi_k^2} \Biggr\}.
    \label{eq: P_e for single-sample case under delta-margin consistent training}
\end{equation}
If, in addition, the conditions of Proposition~\ref{prop: P_c_delta} hold, combining the two guarantees through \eqref{eq: P_e decomposition} yields an explicit upper bound on $P_e$.
\end{theorem}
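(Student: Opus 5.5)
Conditional on $\Ttrain$, the centered classifiers $\c_k$ are fixed functions with $|\c_k(h)|\le 2\beta$ under Assumption~\ref{assump: bound}. I will bound $\P(\M\givensmall\Ttrain)$ separately for each class. Split
\[
\P(\M\givensmall\Ttrain)=\tfrac12\P^{(+1)}\bigl(\c_\ave(\h^*)\le0\bigr)+\tfrac12\P^{(-1)}\bigl(\c_\ave(\h^*)\ge0\bigr),
\]
where $\P^{(\gamma)}$ is taken with $\h^*\sim P_\gamma$, which is valid because the test sample is independent of training. On $\C_\delta$ we have $\E^{(+1)}\c_\ave(\h^*)=\bmu^+(\widehat{\f})>\delta$. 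Hence $\{\c_\ave\le0\}\subseteq\{\c_\ave-\E^{(+1)}\c_\ave<-\delta\}$, and the class $-1$ case is symmetric with $\bmu^-(\widehat{\f})<-\delta$. It therefore suffices to establish a sub-Gaussian concentration inequality for $g(\h)\triangleq\c_\ave(\h)=\sum_k\pi_k\c_k(\h_k)$ under each $P_\gamma$, with variance proxy $\tau_\gamma\beta^2\sum_k\pi_k^2$.

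The concentration step uses the Herbst argument driven by Assumption~\ref{assump: class conditional ATE}. Take $G=e^{\lambda g}$, which is bounded and nonnegative because $g$ is bounded. ATE reduces $\Ent_{P_\gamma}(e^{\lambda g})$ to a sum of conditional entropies in which only coordinate $k$ varies. For fixed $\h_{-k}$, the function $g(\cdot,\h_{-k})$ equals $\pi_k\c_k(\cdot)$ plus a constant. The range of $\c_k$ equals the range of $\widehat{\f}_k$, which lies in an interval of length $2\beta$, so $g(\cdot,\h_{-k})$ varies over an interval of length at most $2\beta\pi_k$. The standard entropy bound for a function of bounded range then applies. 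This bound underlies Hoeffding's lemma in the entropy method and states that $\Ent(e^{\lambda X})\le \frac{\lambda^2 (b-a)^2}{8}\E e^{\lambda X}$ for $X\in[a,b]$. Here it gives $\Ent_{P^{k|-k}_\gamma}(e^{\lambda g})\le \frac{\lambda^2\beta^2\pi_k^2}{2}\E_{P^{k|-k}_\gamma}e^{\lambda g}$. Taking expectations over $\h_{-k}$ and summing over $k$ yields
\[
\Ent_{P_\gamma}(e^{\lambda g})\le \tau_\gamma\frac{\lambda^2\beta^2\sum_k\pi_k^2}{2}\E_{P_\gamma}e^{\lambda g}.
\]

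Herbst's argument is then routine. The derivative of $\lambda^{-1}\log\E e^{\lambda(g-\E g)}$ is bounded by $\tau_\gamma\beta^2\sum_k\pi_k^2/2$, which gives $\log\E e^{\lambda(g-\E g)}\le \lambda^2\tau_\gamma\beta^2\sum_k\pi_k^2/2$. A Chernoff bound then yields the one-sided tail $\exp\{-\delta^2/(2\tau_\gamma\beta^2\sum_k\pi_k^2)\}$. Replacing $\tau_\gamma$ by $\tau_{\max}$ and averaging the two classes produces \eqref{eq: P_e for single-sample case under delta-margin consistent training}. Strict versus non-strict inequalities are harmless. The last sentence of the theorem follows by substituting this bound and Proposition~\ref{prop: P_c_delta} into \eqref{eq: P_e decomposition}, using $\P(\overline{\C_\delta})=1-P_{c,\delta}$.

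The step I expect to need the most care is the conditional entropy bound. I must verify that the exact constant in the bounded-difference entropy inequality, $(b-a)^2/8$ with $b-a=2\beta\pi_k$, gives $\beta^2\pi_k^2/2$ rather than a constant off by a factor of 4. I must also verify that the range of $\c_k$ is $2\beta$ and not $4\beta$: centering shifts but does not widen the range, so using the range rather than the sup-norm avoids losing a factor. A further point to check is measurability and conditioning. I need to confirm that conditioning on $\Ttrain$ leaves $P_\gamma$ unchanged for $\h^*$, so that the ATE assumption applies directly to the fixed function $e^{\lambda g}$.
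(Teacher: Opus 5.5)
Your proposal is correct and follows the paper's proof. The paper splits by class, uses $\bmu^{\pm}(\widehat{\f})$ to place the class-conditional mean beyond $\pm\delta$ on $\C_\delta$, bounds the coordinate oscillation of $\c_\ave$ by $2\beta\pi_k$, applies a bounded-differences tail bound under ATE, and then replaces $\tau_\gamma$ by $\tau_{\max}$ and averages over the two classes. The only difference is that you derive the ATE bounded-differences inequality yourself via the per-coordinate entropy bound and Herbst's argument, with the constants checked correctly. The paper instead invokes this inequality as Lemma~\ref{lemma: ATE bounded differences}, citing the entropy method with approximate tensorization without reproving it.
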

\begin{proof}
See Appendix~\ref{appendix: classification error}.
\end{proof}
\noindent Theorem~\ref{theorem: classification error} establishes the connection between the training performance and the reliability of subsequent collaborative prediction. On $\C_\delta$, the conditional prediction error is controlled by the decision margin $\delta$, the dependence factor $\tau_{\max}$, and the Perron weights, while Proposition~\ref{prop: P_c_delta} characterizes the probability of attaining this favorable training event. Through~\eqref{eq: P_e decomposition}, the two guarantees jointly characterize the overall classification performance. We discuss several implications of this result.

\subsubsection{Relation to margin theory}

The decision margin $\delta$ plays complementary roles in training and prediction. Proposition~\ref{prop: P_c_delta} shows that requiring a larger $\delta$ makes the training condition more difficult to attain, whereas Theorem~\ref{theorem: classification error} shows that, once it is attained, a larger $\delta$ yields a stronger conditional prediction guarantee. This tradeoff is reminiscent of the role of margins in classical margin-based generalization theory~\cite{mohri2018foundations}. Given the classifier $\c_\ave$ from~\eqref{eq: single-sample ensemble classifier}, the quantity ${\bgamma}^*\c_\ave(\h^*)$ is known as the ``margin'' associated with the testing sample $\h^*$. The class-conditional expectations of the margin distribution associated with $\c_\ave$ are $\bmu^+(\widehat{\f})$ under class $+1$ and $-\bmu^-(\widehat{\f})$ under class $-1$. Therefore, the $\delta$-margin condition in~\eqref{eq: delta-margin consistent training condition} provides a first-order statistical characterization of the margin distribution for classifier $\c_\ave$. According to Theorem~\ref{theorem: classification error}, this characterization leads to a classification guarantee without specifying the exact distributions of the local decision statistics, in contrast to some distribution-specific soft combination analyses~\cite{kuncheva2002theoretical}.

\subsubsection{Role of the Perron weights}

From~\eqref{eq: common decision statistic}, the Perron entries $\pi_k$ determine the relative influence of the local classifiers in the sufficient-communication aggregate. Fusion weight design has been widely studied in ensemble learning, typically for classifiers sharing the same feature representation~\cite{fumera2005theoretical,heskes1997selecting, kuncheva2014combining}. The extensions to heterogeneous representations are more problem-dependent~\cite{polikar2006ensemble,kuncheva2014combining}. Here, rather than optimizing the fusion weights under arbitrary heterogeneity, we characterize the effect of the Perron weights induced by the communication policy. Their role extends beyond the factor $\sum_k\pi_k^2$ in the conditional prediction bound~\eqref{eq: P_e for single-sample case under delta-margin consistent training}: the training quantities $\rho$, $\mathsf R^o$, $\alpha$, and $\varepsilon_{\opt}$, as well as the achievable decision margin, also depend on $\pi$. Consequently, although the uniform Perron vector minimizes $\sum_k\pi_k^2$ when $\delta$ and $\tau_{\max}$ are fixed, it need not optimize the overall classification performance. Importantly, the simple averaging rule for constructing $A$, where each agent assigns equal combination weights to all neighbors, does not guarantee uniform $\pi$ on arbitrary topologies. Several useful rules for constructing doubly-stochastic $A$ are available in~\cite{sayed2014adaptation}.

\subsubsection{Benefits of cooperative prediction}

To isolate the benefit of cooperation during prediction, suppose that the same target decision margin $\delta$ is attained collaboratively and by a non-cooperative agent. The conditional error bound in Theorem~\ref{theorem: classification error} is tighter under collaboration whenever
\begin{equation}\label{eq: prediction benefit}
    \tau_{\max}\sum_{k=1}^K\pi_k^2<1.
\end{equation}
This condition reflects the balance between the gain from averaging the local scores and the penalty induced by their statistical dependence. In particular, for product class-conditional distributions, we may take $\tau_{\max}=1$, and \eqref{eq: prediction benefit} reduces to $\sum_k\pi_k^2<1$, which holds for any nontrivial network satisfying Assumption~\ref{assump: network}. Cooperation also provides a distinct benefit through the network-level margin condition. The event $\C_\delta$ requires only the $\pi$-weighted class-conditional means to attain the prescribed two-sided margin; individual classifiers need not satisfy the same condition. Consequently, agents with weak local evidence can be compensated by sufficiently informative or complementary observations from other agents. This is consistent with Assumption~\ref{assump: feasibility}, which likewise requires informativeness only at the network level.

\subsection{Finite-Round Communication}

Theorem~\ref{theorem: classification error} characterizes the sufficient-communication limit. When communication is restricted to a finite number of rounds, however, consensus may not yet be reached and the prediction remains agent-dependent. We next characterize how limited communication affects the classification performance at each agent.

According to~\eqref{eq: distributed learning rule}, after $t$ communication rounds, agent $k$ uses the following decision statistic
\begin{equation}\label{eq: decision statistic at time t}
    \blambda_{k,t} = \sum_{\ell=1}^K[A^t]_{\ell k}\c_\ell(\h_\ell^*),
\end{equation}
and predicts the label
\begin{equation}
    \widehat{\bgamma}_{k,t} \triangleq \sign(\blambda_{k,t}).
\end{equation}
Define the corresponding event
\begin{equation}\label{eq: finite round nonpositive margin event}
    \M_{k,t} \triangleq \{\bgamma^*\blambda_{k,t}\leq0\}.
\end{equation}
Under any fixed tie-breaking rule at zero, the misclassification event is contained in $\M_{k,t}$. Thus, with $P_{k,t}\triangleq\P(\widehat{\bgamma}_{k,t}\neq\bgamma^*)$, applying the same decomposition as in~\eqref{eq: P_e decomposition} gives
\begin{align}
    P_{k,t} \leq \P(\M_{k,t})\leq \E\!\left[ \mathsf{1}[\C_\delta]\P(\M_{k,t}\givensmall\Ttrain) \right] +\P(\overline{\C_\delta}).
    \label{eq: P_e equation at t-th communication round}
\end{align}
On the event $\C_\delta$, a positive decision margin is guaranteed for the $\pi$-weighted aggregate under sufficient communication~\eqref{eq: single-sample ensemble classifier}. At a finite communication round, from~\eqref{eq: decision statistic at time t}, agent $k$ combines the local scores $\c_\ell(\h_\ell^*)$ using the weights $[A^t]_{\ell k}$.  To characterize how much of this guaranteed margin is retained after $t$ rounds, fix any $\sigma\in(\sigma_A,1)$. Matrix theory ensures that there exists a constant $C(A,\sigma)>0$ such that~\cite{horn2012matrix}
\begin{equation}\label{eq: convergence constant main}
    |[A^t]_{\ell k}-\pi_\ell|\leq C(A,\sigma)\sigma^t
\end{equation}
for all agents $k,\ell$ and communication rounds $t$. Accordingly, define the residual finite-round margin
\begin{equation}\label{eq: finite round residual margin}
    r_t(\delta) \triangleq \delta-2\beta K C(A,\sigma)\sigma^t.
\end{equation}
The second term bounds the possible reduction of the guaranteed margin after only $t$ communication rounds.

\begin{theorem}[\textbf{Finite-round classification guarantee}]
\label{theorem: finite-time classification error}
Suppose Assumptions~\ref{assump: bound},~\ref{assump: network} and~\ref{assump: class conditional ATE} hold, and let $\delta>0$. Fix $\sigma\in(\sigma_A,1)$ and a corresponding $C(A,\sigma)$. Then, for every agent $k$ and every integer $t\geq0$ such that $r_t(\delta)>0$, almost surely on $\C_\delta$,
\begin{equation}\label{eq: P_e for t-th round single-sample case under delta-margin consistent training}
    \P(\M_{k,t}\givensmall\Ttrain) \leq \exp\Biggl\{ -\frac{r_t^2(\delta)}{2\tau_{\max}\beta^2\sum_{\ell=1}^K([A^t]_{\ell k})^2} \Biggr\}.
\end{equation}
If, in addition, the conditions of Proposition~\ref{prop: P_c_delta} hold, combining the two guarantees through \eqref{eq: P_e equation at t-th communication round} yields an explicit upper bound on $P_{k,t}$.
\end{theorem}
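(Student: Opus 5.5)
Conditional on $\Ttrain$, the plan is to view $\blambda_{k,t}$ as a function of the testing sample alone. Write $G_t(\h)\triangleq\sum_{\ell}[A^t]_{\ell k}\c_\ell(\h_\ell)$, which is a weighted sum of the fixed centered scores. Each centered score satisfies $|\c_\ell|\le 2\beta$ by Assumption~\ref{assump: bound}, since both $\widehat{\f}_\ell$ and its empirical mean lie in $[-\beta,\beta]$. Changing only the $\ell$-th view therefore moves $G_t$ by at most $2\beta[A^t]_{\ell k}$. I would handle the class $\bgamma^*=+1$ and treat $-1$ symmetrically. The argument has two steps: first lower bound the class-conditional mean of $G_t$ by $r_t(\delta)$, then apply a sub-Gaussian concentration inequality for functions of dependent views under ATE.

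For the mean step, use $\E^{(+1)}G_t=\sum_\ell[A^t]_{\ell k}\bmu_\ell^+(\widehat{\f}_\ell)$ and decompose it as $\bmu^+(\widehat{\f})+\sum_\ell([A^t]_{\ell k}-\pi_\ell)\bmu_\ell^+(\widehat{\f}_\ell)$. On $\C_\delta$ the first term exceeds $\delta$. The second term is bounded using \eqref{eq: convergence constant main} together with $|\bmu_\ell^+|\le 2\beta$, which gives $K\cdot C(A,\sigma)\sigma^t\cdot 2\beta$. Hence $\E^{(+1)}G_t>r_t(\delta)>0$, and likewise $\E^{(-1)}G_t<-r_t(\delta)$.

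For the concentration step, I would reuse the mechanism that presumably underlies Theorem~\ref{theorem: classification error}. Under ATE with constant $\tau_\gamma$, the Herbst argument yields a sub-Gaussian bound
\[
P_\gamma\bigl(G-\E G\le -s\bigr)\le\exp\Bigl\{-\tfrac{s^2}{2\tau_\gamma\sum_\ell c_\ell^2}\Bigr\},
\]
where $c_\ell$ is a bound on the conditional oscillation of coordinate $\ell$. The argument goes as follows. Apply \eqref{eq: class conditional ATE} to $G=e^{\lambda F}$. Bound each conditional entropy by the one-dimensional bounded-difference (Hoeffding-type) entropy estimate $\Ent(e^{\lambda F})\le\frac{\lambda^2 c_\ell^2}{8}\E e^{\lambda F}$. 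This gives $\Ent(e^{\lambda F})\le \tau\frac{\lambda^2\sum c_\ell^2}{8}\E e^{\lambda F}$, and Herbst then gives $\log\E e^{\lambda(F-\E F)}\le\tau\lambda^2\sum c_\ell^2/8$.

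The main obstacle is the constant in the exponent. With range $c_\ell=2\beta[A^t]_{\ell k}$ I get $\exp\{-s^2/(2\tau\beta^2\sum_\ell [A^t]_{\ell k}^2)\}$, which matches the stated form with $\sum_\ell\pi_\ell^2$ replaced by $\sum_\ell([A^t]_{\ell k})^2$. The centering constant is common to all views, so it does not enlarge the oscillation beyond $[\widehat{\f}_\ell]$'s range $2\beta$. I would double-check that constant bookkeeping, since it is the only step where the precise exponent could be lost. Then set $s=\E G_t$, use $s>r_t(\delta)$ and $\tau_\gamma\le\tau_{\max}$ to obtain \eqref{eq: P_e for t-th round single-sample case under delta-margin consistent training} for each class, and average over the uniform prior. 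Combining this with Proposition~\ref{prop: P_c_delta} through \eqref{eq: P_e equation at t-th communication round} gives the final claim.
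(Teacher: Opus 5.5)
Your proposal is correct and follows the paper's proof essentially step for step. The paper also bounds the coordinate oscillations by $2\beta[A^t]_{\ell k}$, bounds the transient mean error by $2\beta\sum_\ell|[A^t]_{\ell k}-\pi_\ell|\le 2\beta K C(A,\sigma)\sigma^t$ to get $\E^{(+1)}\blambda_{k,t}>r_t(\delta)$, and then applies the ATE bounded-difference lemma (which it cites rather than rederiving via Herbst as you do) before averaging the two classes with $\tau_{\max}$. The only fix is a constant in your displayed tail bound: Chernoff applied to your own MGF bound $\tau\lambda^2\sum_\ell c_\ell^2/8$ gives $\exp\{-2s^2/(\tau_\gamma\sum_\ell c_\ell^2)\}$, not $\exp\{-s^2/(2\tau_\gamma\sum_\ell c_\ell^2)\}$. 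It is the former that yields the stated exponent once $c_\ell=2\beta[A^t]_{\ell k}$.
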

\begin{proof}
See Appendix~\ref{appendix: finite round communication}.
\end{proof}
\noindent The condition $r_t(\delta)>0$ provides a sufficient communication horizon for retaining a positive guaranteed margin. In particular, when $2\beta K C(A,\sigma)>\delta$, it is equivalent to
\begin{equation}\label{eq: t condition for t-th round}
    t>\frac{\log\frac{\delta}{2\beta K C(A,\sigma)}}{\log\sigma}.
\end{equation}
Thus, faster decay of the transient term $C(A,\sigma)\sigma^t$ allows a positive margin to be guaranteed with fewer communication rounds. The bound in \eqref{eq: P_e for t-th round single-sample case under delta-margin consistent training} is both round- and agent-dependent. Finite communication affects the prediction guarantee in two ways. First, the residual margin $r_t(\delta)$ reflects the reduction in the guaranteed decision margin caused by incomplete mixing. Second, the column $\{[A^t]_{\ell k}\}_{\ell=1}^K$ specifies the  fusion profile at agent $k$ and determines the corresponding concentration term. Therefore, finite-round performance depends on the full combination matrix $A$, through both the transient fusion weights and their convergence toward the Perron vector. The second-largest eigenvalue magnitude $\sigma_A$ determines the admissible geometric rates in \eqref{eq: convergence constant main}, while $C(A,\sigma)$ captures the associated matrix-power constant. This is consistent with the role of spectral properties in the convergence behavior of distributed averaging~\cite{xiao2004fast}. As $t\to\infty$, $A^t\to\pi\mathbbm{1}^\top$ and $r_t(\delta)\to\delta$. Accordingly, the finite-round conditional bound converges to the sufficient-communication bound in Theorem~\ref{theorem: classification error}.

\subsection{Finite-Precision Communication}
\label{sec: finite bit theory}

Theorem~\ref{theorem: finite-time classification error} considers finite-round communication with real-valued messages. In practical implementations, however, transmitted scalars are typically represented with finite precision. We therefore introduce a finite-bit version of the prediction-stage communication rule and characterize its effect on classification performance. The local training procedures, empirical centering, and the training event $\C_\delta$ remain unchanged.

Since both a bounded score and its empirical training mean lie in $[-\beta,\beta]$, every centered score satisfies
\begin{equation}\label{eq: centered score quantization range}
    |c_{k,f_k}(h)|\leq2\beta.
\end{equation}
Let $b\geq1$ denote the available bits per transmitted scalar, and let $2\leq L_b\leq2^b$ reconstruction levels be used. Over the interval $[-2\beta,2\beta]$, consider the uniform alphabet
\begin{align}
\nonumber
    \mathcal V_b &=\{v_j=-2\beta+j\Delta_b:\ j=0,\ldots,L_b-1\},\\
    \label{eq: finite bit alphabet theory}
    \Delta_b &= \frac{4\beta}{L_b-1}.
\end{align}
We consider here an adjacent-level stochastic rounding protocol~\cite{aysal2007probabilistic}. For $u\in[v_j,v_{j+1}]$, $j=0,\ldots,L_b-2$, the quantizer $\mathsf Q_b$ returns one of the two adjacent reconstruction levels $v_j$ and $v_{j+1}$ according to
\begin{equation}\label{eq: stochastic rounding protocol}
    \mathsf Q_b(u)=
    \begin{cases}
        v_j, & \text{with probability\;} \dfrac{v_{j+1}-u}{\Delta_b},\\
        v_{j+1}, & \text{with probability\;} \dfrac{u-v_j}{\Delta_b}.
    \end{cases}
\end{equation}
An important feature of this protocol is that the quantizer $\mathsf Q_b$ is unbiased, that is,
\begin{equation}
    \E[\mathsf Q_b(u)\givensmall u]=u.
\end{equation}
Let $x_{k,t}^{(b)}$ denote the finite-precision decision statistic at agent $k$ after $t$ communication rounds. Starting from $x_{k,0}^{(b)}=\c_k(\h_k^*)$, each agent quantizes its current decision statistic before transmission, and the receiving agents combine the quantized values according to
\begin{equation}\label{eq: finite bit recursion theory}
    q_{\ell,t}^{(b)} = \mathsf Q_b(x_{\ell,t}^{(b)}), \quad x_{k,t+1}^{(b)} = \sum_{\ell=1}^K a_{\ell k}q_{\ell,t}^{(b)}.
\end{equation}
The random draws used for stochastic rounding are independent across agents and communication rounds, with each agent sending the same quantized value to all of its neighbors. After $t$ rounds, agent $k$ predicts according to
\begin{equation}
    \widehat{\bgamma}_{k,t}^{(b)} \triangleq \sign(x_{k,t}^{(b)}).
\end{equation}
Define the event for the finite-precision decision statistic
\begin{equation}\label{eq: M_k_t for finite precision}
    \M_{k,t}^{(b)} \triangleq \{\bgamma^*x_{k,t}^{(b)}\leq0\},
\end{equation}
and let
\begin{equation}
    P_{k,t}^{(b)}\triangleq\P(\widehat{\bgamma}_{k,t}^{(b)}\neq\bgamma^*)
\end{equation}
denote the classification error probability. As in the real-valued finite-round case, the decomposition in~\eqref{eq: P_e decomposition}, with $\M$ replaced by $\M_{k,t}^{(b)}$, separates the conditional prediction error $\P(\M_{k,t}^{(b)}\givensmall\Ttrain)$ on $\C_\delta$ from the probability of the unfavorable training event $\overline{\C_\delta}$.

Since quantization is performed at every communication round, the perturbations introduced at earlier rounds are propagated through subsequent neighbor-aggregation steps. To state the resulting finite-precision guarantee compactly, define the accumulated quantization term
\begin{align}
\nonumber
    V_{k,t}^{\rm q} &\triangleq \frac{\Delta_b^2}{4} \sum_{r=1}^{t}\sum_{j=1}^K([A^r]_{jk})^2  \\
    \label{eq: quantization proxy theory}
    &=\frac{4\beta^2}{(L_b-1)^2}\sum_{r=1}^{t}\sum_{j=1}^K([A^r]_{jk})^2.
\end{align}
The quantity $V_{k,t}^{\rm q}$ depends jointly on the communication precision, the number of communication rounds, and the network combination policy.

\begin{theorem}[\textbf{Finite-precision classification guarantee}]
\label{theorem: finite bit communication}
Suppose Assumptions~\ref{assump: bound},~\ref{assump: network}, and~\ref{assump: class conditional ATE} hold, and let $\delta>0$. Suppose the agents follow the stochastic finite-precision recursion~\eqref{eq: finite bit recursion theory}. Fix $\sigma\in(\sigma_A,1)$ and a corresponding $C(A,\sigma)$. Then, for every agent $k$ and every integer $t\geq0$ such that $r_t(\delta)>0$, almost surely on $\C_\delta$,
\begin{equation}\label{eq: finite bit classification theory}
    \P(\M_{k,t}^{(b)}\givensmall\Ttrain) \leq \exp\Biggl\{ -\frac{r_t^2(\delta)}{2\left[ \tau_{\max}\beta^2\sum_{\ell=1}^K([A^t]_{\ell k})^2 + V_{k,t}^{\rm q} \right]} \Biggr\}.
\end{equation}
If, in addition, the conditions of Proposition~\ref{prop: P_c_delta} hold, combining the two bounds in~\eqref{eq: finite bit classification theory} and~\eqref{eq: P_c_delta} yields an explicit upper bound on $P_{k,t}^{(b)}$.
\end{theorem}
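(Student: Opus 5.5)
We will condition on $\Ttrain$ and work on $\C_\delta$, so that the centered scores $\c_\ell$ are fixed functions bounded by $2\beta$ in absolute value. We then decompose the finite-precision statistic into the ideal real-valued finite-round statistic plus an accumulated quantization noise. Write $e_{\ell,r}\triangleq q_{\ell,r}^{(b)}-x_{\ell,r}^{(b)}$ for the rounding error at round $r$. Unrolling~\eqref{eq: finite bit recursion theory} gives
\begin{equation*}
x_{k,t}^{(b)}=\blambda_{k,t}+\sum_{r=1}^{t}\sum_{j=1}^K[A^{r}]_{jk}\,e_{j,t-r},
\end{equation*}
where $\blambda_{k,t}$ is the real-valued statistic~\eqref{eq: decision statistic at time t}.

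First we must check that every transmitted value stays in $[-2\beta,2\beta]$, so that $\mathsf Q_b$ is well defined. This holds by induction: reconstruction levels lie in the interval, and convex combinations with column-stochastic weights $a_{\ell k}$ preserve it.

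Second, the error sequence is a martingale difference sequence. Take the filtration $\mathcal G_r$ generated by $\Ttrain$, $(\h^*,\bgamma^*)$, and the rounding draws up to round $r-1$. Conditional on $\mathcal G_r$, the variables $e_{j,r}$ have zero mean by unbiasedness. They are independent across $j$ and take values in an interval of length $\Delta_b$, since $q-x\in[v_i-x,\,v_{i+1}-x]$. By Hoeffding's lemma, each $e_{j,r}$ is conditionally sub-Gaussian with variance proxy $\Delta_b^2/4$.

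The main step treats the class-conditional case $\bgamma^*=\gamma$; take $\gamma=+1$ without loss of generality. We bound the moment generating function
\begin{equation*}
\E\bigl[e^{-s x_{k,t}^{(b)}}\given \Ttrain,\bgamma^*=\gamma\bigr], \qquad s>0.
\end{equation*}
We peel off the quantization errors round by round, starting from the last one. Each factor contributes at most $\exp\{s^2\Delta_b^2\sum_j([A^{r}]_{jk})^2/8\}$, and the weights are deterministic, so the total contribution is $\exp\{s^2V_{k,t}^{\rm q}/2\}$. This leaves $\E[e^{-s\blambda_{k,t}}]$ under $P_\gamma$.

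For that remaining term we reuse the argument of Theorem~\ref{theorem: finite-time classification error}. The function $\h\mapsto\sum_\ell[A^t]_{\ell k}\c_\ell(\h_\ell)$ has coordinate oscillation at most $2\beta[A^t]_{\ell k}$. Assumption~\ref{assump: class conditional ATE}, via the Herbst argument together with the conditional (tensorized) Hoeffding bound, then gives sub-Gaussianity with variance proxy $\tau_{\max}\beta^2\sum_\ell([A^t]_{\ell k})^2$ around its mean.

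Next we bound that mean. It equals $\sum_\ell[A^t]_{\ell k}\bmu_\ell^+$, which differs from $\bmu^+(\widehat{\f})>\delta$ by at most $\sum_\ell|[A^t]_{\ell k}-\pi_\ell|\cdot 2\beta\le 2\beta K C(A,\sigma)\sigma^t$, using~\eqref{eq: convergence constant main}. Hence the mean is at least $r_t(\delta)>0$.

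Finally we apply the Chernoff bound with the combined proxy and optimize over $s$, which yields~\eqref{eq: finite bit classification theory}. Averaging over $\bgamma^*$ preserves the bound. The combination with Proposition~\ref{prop: P_c_delta} follows from the decomposition~\eqref{eq: P_e decomposition}.

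The main obstacle is the dependence structure: the quantization errors depend on the data through $x^{(b)}$, so they are not independent of $\blambda_{k,t}$. The nested conditioning, which integrates the rounding noise first using bounds uniform in the conditioning variables, is what makes the two variance proxies add. The index bookkeeping (errors at round $t-r$ multiplied by $A^{r}$, for $r=1,\ldots,t$) must match the definition of $V_{k,t}^{\rm q}$.
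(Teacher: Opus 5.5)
Your proposal is correct and follows essentially the same route as the paper's proof. Both arguments unroll the quantize-then-average recursion into the full-precision statistic $\blambda_{k,t}$ plus rounding errors propagated through $A^{r}$, check range preservation, and peel off the conditionally independent, zero-mean, $\Delta_b$-bounded rounding errors round by round via Hoeffding's lemma and the tower property; this is exactly how the paper avoids needing independence between the quantization noise and the testing views. Both then combine this with the ATE bounded-difference MGF bound, the finite-round mean estimate by $r_t(\delta)$, and a Chernoff optimization, and the differences (taking $\gamma=+1$ and bounding the moment generating function of $-x_{k,t}^{(b)}$ instead of the paper's signed variables $\gamma x_{k,t}^{(b)}$, and indexing by $A^{r}$ rather than $W=A^\top$) are cosmetic.
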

\begin{proof}
See Appendix~\ref{appendix: finite bit communication}.
\end{proof}
\noindent Theorem~\ref{theorem: finite bit communication} extends the real-valued finite-round guarantee~\eqref{eq: P_e for t-th round single-sample case under delta-margin consistent training} established in Theorem~\ref{theorem: finite-time classification error} by quantifying the additional effect of finite communication precision. The same residual finite-round margin term $r_t(\delta)$ from~\eqref{eq: finite round residual margin} appears in both guarantees, while finite precision introduces the additional term $V_{k,t}^{\rm q}$ in the denominator of the prediction exponent~\eqref{eq: finite bit classification theory}. This result also reveals a tradeoff pertaining to the number of communication rounds. Increasing $t$ increases the guaranteed residual margin $r_t(\delta)$ by reducing the effect of incomplete mixing, but also allows additional quantization effects to accumulate through $V_{k,t}^{\rm q}$. Moreover, the combination policy $A$ influences both effects through the finite-round fusion weights and the propagation of quantization noise across rounds. Consequently, at fixed precision, the bound need not improve monotonically with the number of communication rounds.

The communication precision also determines the communication cost. Let
\begin{equation}\label{eq: off diagonal communication links theory}
    E_{\rm off}(A) \triangleq \sum_{k=1}^K \sum_{\substack{\ell=1\\\ell\neq k}}^K \mathsf{1}[a_{\ell k}>0]
\end{equation}
denote the number of active directed off-diagonal links. If each quantizer index is represented by a fixed-length $b$-bit word, the per-round and total communication costs per testing example are
\begin{equation}\label{eq: finite bit communication cost theory}
    B_{\rm round} = bE_{\rm off}(A),  \quad  B_{\rm total}(t) = tbE_{\rm off}(A),
\end{equation}
respectively, excluding packet headers and coding overhead. Self-loops correspond to the local operations and are therefore not counted.  Moreover, for any fixed $t$, we have from~\eqref{eq: quantization proxy theory} that
\begin{equation}
    V_{k,t}^{\rm q} = O\bigl((L_b-1)^{-2}\bigr).
\end{equation}
Here, the $O(\cdot)$ notation describes the decay with respect to $L_b$ for fixed $t$. Under the full-code convention $L_b=2^b$, this becomes $V_{k,t}^{\rm q}=O(2^{-2b})$. Therefore, increasing the bit width increases the communication cost linearly, while the finite-precision term in the bound~\eqref{eq: finite bit classification theory} decays as $O(2^{-2b})$. For fixed $t$, as $b\to\infty$, the finite-precision guarantee converges to the corresponding real-valued finite-round guarantee.

\subsection{PAC-Style Generalization Guarantee}
\label{sec: PAC-style generalization bounds}

The preceding results characterize collaborative prediction through the lens of expected network margin. We now consider a complementary PAC-style perspective that relates the population classification error of the realized classifiers to their performance observed on the training data. This is the core idea of classical margin-based generalization bounds~\cite{mohri2018foundations}. Specifically, to quantify this performance, for $\eta>0$, we define the $\eta$-margin loss~\cite{mohri2018foundations}
\begin{equation}\label{eq: definition of eta-margin loss function}
    \Phi_\eta(x) = \min\biggl(1,\max\Bigl(0,1-\frac{x}{\eta}\Bigr)\biggr).
\end{equation}
It satisfies
\begin{equation}\label{eq: property of eta-margin loss}
    \mathsf{1}[\sign(f(\h))\neq\bgamma] \leq \Phi_\eta(\bgamma f(\h)) \leq \mathsf{1}[\bgamma f(\h)\leq\eta].
\end{equation}
The margin parameter $\eta$ used here is distinct from the decision-margin level $\delta$ in \eqref{eq: delta-margin consistent training condition}. For a training set $\{(h_n,\gamma_n)\}_{n=1}^{m}$, the empirical margin loss for an arbitrary function $f$ is given by
\begin{equation}\label{eq: empirical eta-margin loss}
    P_{e,\emp}^\eta(f)\triangleq \frac{1}{m}\sum_{n=1}^m\Phi_\eta(\gamma_nf(h_n)).
\end{equation}
By~\eqref{eq: property of eta-margin loss}, $P_{e,\emp}^\eta(f)$ upper bounds the empirical classification error on the same sample. When the classifiers depend on the training data, their classification errors on a fresh example are understood conditionally on the realized training information. Specifically, for a local classifier $\c_k$ and the aggregate classifier $\c_\ave$, respectively, we define
\begin{align}
\label{eq: PAC conditional local population error}
    P_e(\c_k) &\triangleq \P\bigl(\sign(\c_k(\h_k^*))\neq\bgamma^* \given \Ttrain\bigr), \\
    P_e(\c_\ave) &\triangleq \P\bigl(\sign(\c_\ave(\h^*))\neq\bgamma^* \given \Ttrain\bigr).
    \label{eq: PAC conditional network population error}
\end{align}
To evaluate the empirical margin loss of the aggregate classifier $\c_\ave$ with~\eqref{eq: empirical eta-margin loss}, the training samples must provide the aligned collection of local views on which $\c_\ave$ operates. Therefore, for the PAC analysis, we specialize the training setup to equal local sample sizes $N_k=N$ and aligned network examples. Specifically, let
\begin{equation}\label{eq: training set D for the network}
    \D \triangleq \bigl\{(\h_n,\bgamma_n)\bigr\}_{n=1}^{N}
\end{equation}
consist of $N$ i.i.d.\ draws from the joint distribution $Q$, where $\h_n=(\h_{1,n},\ldots,\h_{K,n})$. The local training set at agent $k$ is the corresponding projection
\begin{equation}\label{eq: local training set D_k}
    \D_k \triangleq \bigl\{(\h_{k,n},\bgamma_n)\bigr\}_{n=1}^{N}.
\end{equation}
This specialization leaves the local training procedures in Section~\ref{sec: the training phase} unchanged. The network examples are independent across $n$, whereas the $K$ views within each example may remain statistically dependent.

Based on the network training set $\D$ in~\eqref{eq: training set D for the network}, we establish the following PAC-style generalization bounds for the classifiers $\c_k$ and $\c_\ave$ using their empirical margin losses. Here, $\rho_k$ and $\rho$ denote the individual and network Rademacher complexities defined in Appendix~\ref{appendix: training auxiliary quantities}, evaluated at the common sample size $N_k=N$.

\begin{theorem}[\textbf{PAC-style margin generalization guarantee}]
\label{theorem: generalization bound for c_k and c_ave}
Under Assumption~\ref{assump: bound} and the sampling setup above, fix any $\eta>0$. Then, for any $\epsilon\in(0,1)$ and every fixed agent $k$,
\begin{equation}\label{eq: margin-based generalization bound for c_k}
    P_e(\c_k)\leq P_{e,\emp}^\eta(\c_k) + \frac{4}{\eta}\rho_k + \mathcal{R}(\epsilon,N)
\end{equation}
with probability at least $1-\epsilon$. Moreover,
\begin{equation}\label{eq: margin-based generalization bound for c_ave}
    P_e(\c_\ave)\leq P_{e,\emp}^\eta(\c_\ave) + \frac{4}{\eta}\rho + \mathcal{R}(\epsilon,N)
\end{equation}
with probability at least $1-\epsilon$, where
\begin{equation}\label{eq: definition of R(e,N)}
    \mathcal{R}(\epsilon,N)\triangleq \frac{4\beta}{\eta\sqrt{N}} \Biggl(1+\sqrt{2\log\frac{1}{\epsilon}}\Biggr).
\end{equation}
\end{theorem}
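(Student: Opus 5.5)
The plan is to apply the standard margin-based Rademacher generalization bound (as in \cite{mohri2018foundations}) uniformly over a function class that contains the realized classifier. The only subtlety relative to the textbook setting is the data-dependent empirical centering. For agent $k$, I would introduce the centered class
\begin{equation*}
\mathcal G_k\triangleq\{h\mapsto f_k(h)-m:\ f_k\in\F_k,\ |m|\leq\beta\},
\end{equation*}
which contains $\c_k$ for every training realization, since $|\bmu_{k,\emp}(\widehat{\f}_k)|\leq\beta$ by Assumption~\ref{assump: bound}. For the network, I would use the analogous class $\mathcal G\triangleq\{\h\mapsto\sum_k\pi_k(f_k(\h_k)-m_k):\ f\in\F,\ |m_k|\leq\beta\}$, which contains $\c_\ave$. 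The loss class is $\{(h,\gamma)\mapsto\Phi_\eta(\gamma g(h)):\ g\in\mathcal G_k\}$, whose values lie in $[0,1]$. Because the network examples in $\D$ are i.i.d.\ draws from $Q$, McDiarmid's inequality combined with symmetrization gives, with probability at least $1-\epsilon$ and simultaneously for all $g$ in the class,
\[
\E\,\Phi_\eta(\bgamma g(\h)) \leq \frac1N\sum_n\Phi_\eta(\bgamma_n g(\h_n)) + 2\,\mathfrak R_N(\Phi_\eta\circ\mathcal G)+\sqrt{\log(1/\epsilon)/(2N)}.
\]
The left-hand side upper bounds the conditional error $P_e(g)$ by \eqref{eq: property of eta-margin loss}, and this remains valid when $g$ is taken to be the data-dependent $\c_k$ or $\c_\ave$, because the bound holds uniformly and the test sample is independent of $\Ttrain$.

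Next I would bound the Rademacher term. By Talagrand's contraction lemma, since $\Phi_\eta$ is $1/\eta$-Lipschitz and multiplication by $\gamma_n\in\{\pm1\}$ leaves Rademacher variables invariant in distribution, $\mathfrak R_N(\Phi_\eta\circ\mathcal G_k)\leq\frac1\eta\mathfrak R_N(\mathcal G_k)$. Subadditivity of Rademacher complexity splits $\mathfrak R_N(\mathcal G_k)$ into two pieces. The first is $\mathfrak R_N(\F_k)$, which should correspond to $\rho_k$ up to the constant convention of the Appendix; this accounts for the factor $4/\eta$ (a factor $2$ from symmetrization and a factor $2$ from the Appendix's definition of $\rho_k$). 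The second is the constant-shift class $\{m:|m|\leq\beta\}$, whose complexity is $\beta\,\E|\frac1N\sum_n\sigma_n|\leq\beta/\sqrt N$. I would then merge the resulting $O(\beta/(\eta\sqrt N))$ term and the McDiarmid deviation into $\mathcal R(\epsilon,N)$.

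For this merging, I would rescale the McDiarmid step. Loss values change by at most $\min(1,4\beta/\eta)$ when one sample is altered, because $|g|\leq2\beta$ and $\Phi_\eta$ is $1/\eta$-Lipschitz. Using the bounded difference $4\beta/(\eta N)$ yields a deviation of $\frac{4\beta}{\eta\sqrt N}\sqrt{\log(1/\epsilon)/2}$, which is absorbed by the $\sqrt{2\log(1/\epsilon)}$ term. The shift complexity contributes $2\beta/(\eta\sqrt N)\leq4\beta/(\eta\sqrt N)$, giving the leading "$1$" in $\mathcal R$.

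The network bound \eqref{eq: margin-based generalization bound for c_ave} follows identically. Here $\mathfrak R_N(\sum_k\pi_k\F_k)\leq\sum_k\pi_k\mathfrak R_N(\F_k)$, which I expect matches the definition of $\rho$. For the shift, $\sum_k\pi_k m_k$ ranges over $[-\beta,\beta]$ because $\sum_k\pi_k=1$, so the same $\beta/\sqrt N$ term appears. The main obstacle is bookkeeping rather than substance. I need to check that $\rho_k$ and $\rho$ in Appendix~\ref{appendix: training auxiliary quantities} are normalized so that the constant is exactly $4/\eta$ and not $2/\eta$. I also need to make sure the centering is handled through the enlarged class and not by treating $\bmu_{k,\emp}$ as fixed, since the latter would invalidate the i.i.d.\ argument.
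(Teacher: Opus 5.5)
Your proof is correct, but it takes a cleaner route than the paper's main argument and actually proves a sharper statement.

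\textbf{The paper's route.} The paper works with the two-sided deviation $\Omega(\D)=\sup_{f\in\F}|\R^{\D}(f)-\R_{\emp}^{\D}(f)|$ of the sample-dependent centered aggregates, for a generic $L_\Phi$-Lipschitz loss. It applies McDiarmid to $\Omega(\D)$ itself. This forces it to track how the empirical centers shift at all $N$ points when one example is replaced, which gives the bounded-difference constant $8\beta L_\Phi/N$ and hence the $\sqrt{2\log(1/\epsilon)}$ term. It invokes the fixed offset class $\mathcal G_{\mathrm{off}}$ only to bound $\E\Omega(\D)$. That step uses two-sided symmetrization and the absolute-value contraction inequality, giving a factor $2\cdot 2L_\Phi$, which is where $4/\eta$ comes from.

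\textbf{Your route.} You apply the one-sided symmetrization/McDiarmid bound directly to the fixed enlarged class. The centering then never enters the bounded-difference step: the constant is $\min(1,4\beta/\eta)/N$. The non-absolute Talagrand contraction costs only $1/\eta$. This is essentially the refinement the paper records in the remark after its proof. You obtain $\frac{2}{\eta}\rho_k+\frac{2\beta}{\eta\sqrt N}+\frac{4\beta}{\eta\sqrt N}\sqrt{\log(1/\epsilon)/2}$. This is dominated term by term by the stated bound for every $\eta>0$. Your choice of the $4\beta/(\eta N)$ difference rather than $1/N$ is what keeps this true when $\eta$ is large.

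\textbf{What each buys.} Your approach is shorter and tighter. The paper's approach buys a two-sided uniform deviation that holds for any Lipschitz surrogate with bounded scores, not just the $[0,1]$-valued $\Phi_\eta$.

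\textbf{One bookkeeping correction.} The factor $2$ you attribute to ``the Appendix's definition of $\rho_k$'' does not exist. The paper defines $\rho_k$ with an absolute value inside the supremum. So the non-absolute complexity in your bound is at most $\rho_k$, and you get $2\rho_k/\eta\le 4\rho_k/\eta$ directly. There is no constant-matching issue to resolve. The same holds for $\rho=\sum_k\pi_k\rho_k$ in the network case, where your product-class split of the supremum is exact.
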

\begin{proof}
See Appendix~\ref{appendix: PAC generalization bound}.
\end{proof}
\noindent Theorem~\ref{theorem: generalization bound for c_k and c_ave} provides a complementary guarantee for the trained classifiers without investigating the prescribed network-margin event $\C_\delta$. It also does not require Assumption~\ref{assump: class conditional ATE}: dependence among the local views within each network example is incorporated through their joint distribution $Q$. Moreover, the guarantee does not require exact empirical risk minimization or a prescribed optimization gap. Therefore, it holds \emph{uniformly} across classifiers returned by the local training procedures in Section~\ref{sec: the training phase}.

From~\eqref{eq: margin-based generalization bound for c_k} and~\eqref{eq: margin-based generalization bound for c_ave}, the local and collaborative bounds have the same form, but differ in their empirical margin losses and complexity terms. Structurally, $\c_k$ operates on a single local feature space $\H_k$, whereas $\c_\ave$ operates on the joint feature space $\H_1\times\cdots\times\H_K$. The empirical margins of $\c_\ave$ can therefore reflect how information from different local views is combined. The theorem does not, however, impose a universal ordering between the local and collaborative bounds. Any collaborative advantage must be reflected in the aggregate margin loss $P_{e,\emp}^\eta(\c_\ave)$ relative to the corresponding complexity terms $\rho$. This differs from the conventional ensemble setting, where multiple models typically operate on the same feature representation and the benefit of combining their outputs is often discussed in terms of variance reduction~\cite{mohri2018foundations,fumera2008theoretical}.

\subsection{Relation to Social Learning}
\label{sec: Relation to SL}

The DeGroot rule~\eqref{eq: distributed learning rule} and conventional social learning (SL) strategies both aggregate information across a network, but differ in how evidence enters the inference process. In the present framework, agents repeatedly communicate decision statistics initialized from a fixed testing example, whereas conventional SL recursions continually incorporate newly arriving observations. To clarify the relation between these two inference mechanisms, Appendix~\ref{appendix: classification error using SL rule} considers a standard geometric-pooling SL recursion under a \emph{static-observation} specialization, in which each agent reuses the same local testing observation at every communication round. This specialization places the SL rule in the same single-testing-example setting as the DeGroot rule and enables a direct comparison, while remaining distinct from the conventional streaming operation of SL.

Under this specialization, repeated SL updates accumulate the same local evidence rather than new temporal evidence. After normalization by the number of communication rounds, the SL decision statistic converges to the same $\pi$-weighted aggregate $\c_\ave$ as the DeGroot rule~\eqref{eq: distributed learning rule}. Consequently, whenever $\c_\ave(\h^*)\neq0$, the two strategies yield the same prediction under sufficient communication. Their finite-round decision statistics are generally different, however, and therefore lead to different finite-round guarantees. The complete comparison and the corresponding finite-round SL guarantee are provided in Appendix~\ref{appendix: classification error using SL rule}.

\medskip
\noindent\textbf{Raw-score counterparts.}
The preceding analysis considers the empirically centered local scores $\c_k(\h_k)$ defined in~\eqref{eq: trained classifier}. As shown by Proposition~\ref{prop: effect of additive centering}, centering affects threshold alignment rather than being a structural requirement for collaboration. We refer to the learned scores $\widehat{\f}_k(\h_k)$ used without empirical centering as \emph{raw scores}. In Appendix~\ref{appendix: raw score counterparts}, we provide an analogous analysis for raw scores. This analysis shows that the main collaborative guarantees persist without empirical centering, although the conditions for attaining a prescribed zero-threshold margin and the constants appearing in the finite-round bounds change. Neither representation universally dominates the other: centering can improve threshold alignment, whereas raw scores avoid estimating the empirical center and have a smaller worst-case score magnitude than the centered scores.

\section{Numerical Simulations}
\label{sec:simulations}

We evaluate the proposed test-time collaborative classification framework under two complementary observation models. The first is a controlled patch-partition benchmark based on CIFAR-10, where each agent observes a spatial patch extracted from a common image. The second is a multi-view benchmark constructed from rendered objects in ModelNet40, where each agent observes the same 3D object from a different viewpoint. The two benchmarks introduce different forms of local heterogeneity, namely spatial partitioning and viewpoint variation. We further examine dependence across agents, heterogeneous local models, communication topology, quantization, adaptive stopping, and corrupted reports. Unless otherwise stated, all curves are averaged over multiple Monte Carlo repetitions and the error bars indicate $95\%$ confidence intervals. Additional implementation details and extended numerical results are provided in Appendix~\ref{app:exp_details}.

\begin{table*}[!t]
\centering
\caption{Training and inference coordination of the methods considered in the main baseline comparisons.}
\label{tab:baseline_regimes}
\scriptsize
\setlength{\tabcolsep}{4pt}
\renewcommand{\arraystretch}{1}
\begin{tabular}{
    p{0.19\textwidth}
    p{0.24\textwidth}
    p{0.27\textwidth}
    p{0.22\textwidth}}
\hline
\textbf{Method} & \textbf{Training coordination} & \textbf{Additional fitting} & \textbf{Inference coordination}\\
\hline
Non-cooperative & None & None & None\\
Avg-stat / Vote & None & None & One-shot centralized averaging / voting\\
Learned (simplex) fusion & None & Fusion rule fitted on aligned validation outputs & One-shot centralized fusion\\
AdaBoost & Centralized & Boosting-based fitting of local predictors & One-shot weighted fusion\\
Ours / Ours (no centering) & None & None & Iterative peer-to-peer scalar exchange\\
VFL-JT & Joint training on aligned labeled samples & Server-side fusion head learned jointly & One-shot server fusion\\
Central oracle & Centralized full-information training & None & Centralized full-information prediction\\
\hline
\end{tabular}
\end{table*}

\subsection{CIFAR-10 patch-partition benchmark}
\label{subsec:cifar_patch}

We begin with a controlled patch-partition benchmark based on CIFAR-10~\cite{krizhevsky2009learning}. We consider the binary task of distinguishing \emph{cats} from \emph{dogs}. Each $32\times 32$ RGB image is divided into a $3\times 3$ grid of non-overlapping spatial patches, yielding $K=9$ agents, and agent $k$ receives only the patch associated with its grid location. In this way, the agents observe different local regions of the same image and collaborate only during inference. The communication network is taken to be a directed Erd\"{o}s--R\'enyi graph~\cite{newman2018networks}, generated randomly at the beginning of the experiment. The observation map of the nine agents and the communication topology are illustrated in Fig.~\ref{fig: network and feature map}.

\begin{figure}[!t]
    \centering
    \subfloat[]{\includegraphics[width=0.4\linewidth]{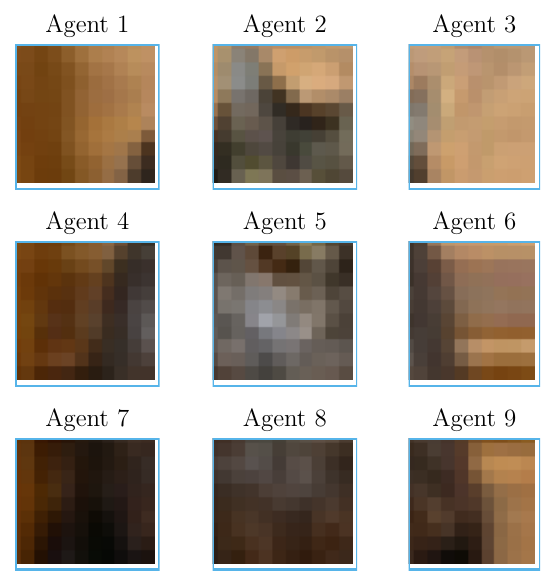}\label{fig: feature map}}
    \hfil
    \subfloat[]{\includegraphics[width = 0.45\linewidth]{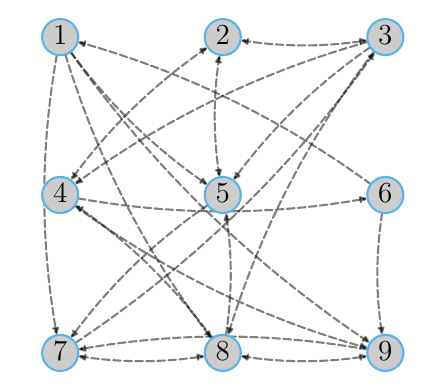}\label{fig: network}}
    \caption{CIFAR-10 patch-partition benchmark. Panel~(a) shows the local patch observed by each agent for a representative cat image, and panel~(b) shows the communication topology used for collaboration.}
    \label{fig: network and feature map}
\end{figure}

At each agent, a convolutional neural network (CNN) is trained independently on the corresponding patch dataset. For simplicity, we use the same local labeled dataset size $N_0$ for all agents. For each value of $N_0$, we construct balanced local datasets, train the classifiers independently, perform collaborative classification at test time, and average the results over multiple repetitions. The detailed training protocol is provided in Appendix~\ref{app:cifar_details}.

\begin{figure}[!t]
    \centering
    \subfloat[Decision margin]{\includegraphics[width=0.8\linewidth]{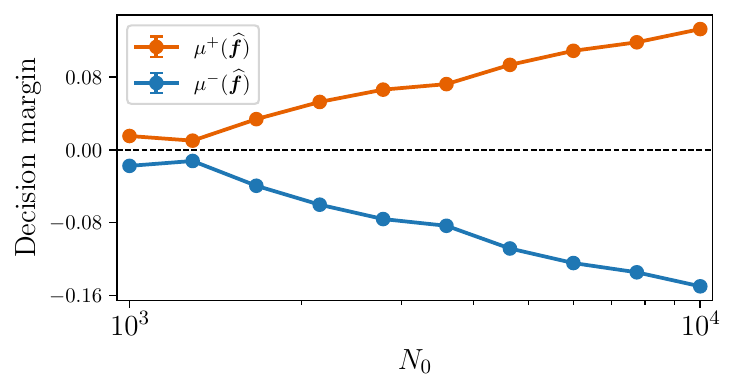}\label{fig:cifar_patch_margin}}
    \hfill
    \subfloat[Probability of error\label{fig:cifar_patch_pe}]{%
        \begin{minipage}{\linewidth}
            \centering
            \includegraphics[width=0.8\linewidth]{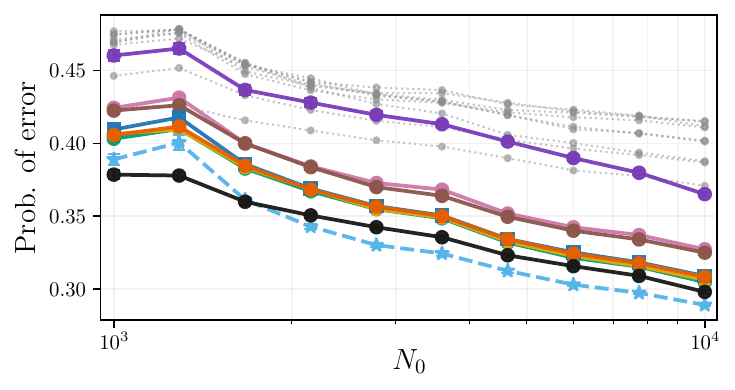}
            \vspace{-0.4em}
            \includegraphics[width=0.95\linewidth]{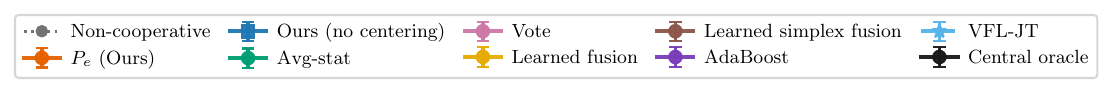}
        \end{minipage}%
    }
    \hfill
    \subfloat[Instantaneous probability of error]{\includegraphics[width=0.8\linewidth]{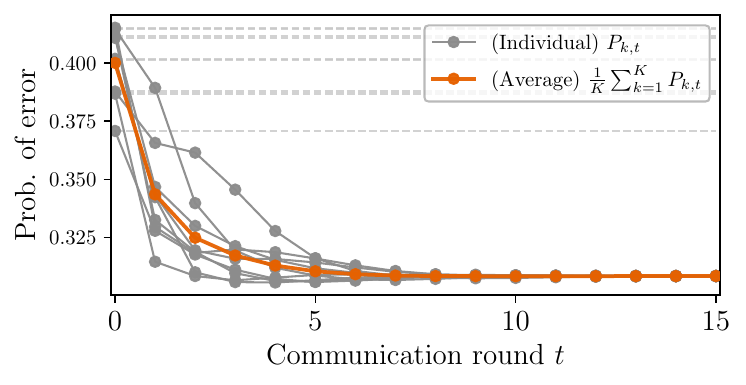}\label{fig:cifar_patch_process}}
    \caption{Main CIFAR-10 patch-partition results. Panel~(a) reports the empirical class-conditional decision statistics, panel~(b) compares the probability of error under different values of $N_0$, and panel~(c) shows the error evolution over the communication rounds for $N_0=10000$.}
    \label{fig:cifar_patch_main}
\end{figure}

Figure~\ref{fig:cifar_patch_main} summarizes the main performance results for this CIFAR-10 benchmark. Panel~(a) reports the empirical class-conditional decision statistics under different values of $N_0$. In the notation of Section~\ref{sec: main results}, these curves correspond to the empirical behavior of $\bmu^+(\widehat{\f})$ and $\bmu^-(\widehat{\f})$ in~\eqref{eq: expected decision statistic for the network}. As $N_0$ increases, $\bmu^+(\widehat{\f})$ moves farther above zero and $\bmu^-(\widehat{\f})$ moves farther below zero, so that the separation between the two curves grows overall, consistent with a larger achievable decision margin in the sense of condition~\eqref{eq: delta-margin consistent training condition}. This suggests that the network of trained classifiers becomes more informative as more training samples are available, which is beneficial for classification at test time.

Panel~(b) shows the corresponding classification performance. We compare the proposed method with several rules from ensemble learning and decision fusion, together with the non-cooperative baseline, an explicit centering ablation \emph{Ours (no centering)}, the VFL-JT baseline, and a central oracle trained on the full image as a full-information benchmark. The VFL-JT baseline uses a score-level joint-training scheme inspired by~\cite{hu2019fdml}, where the local models and a server-side fusion head are optimized jointly using aligned labeled samples. Table~\ref{tab:baseline_regimes} summarizes the training and inference coordination of the methods considered in the main baseline comparison. Since several of these reference rules depend on centralized aggregation of local outputs, the comparison is carried out in the regime of sufficient communication, where the distributed recursion~\eqref{eq: distributed learning rule} approaches its $\pi$-weighted consensus limit given in \eqref{eq: common decision statistic}. From Fig.~\ref{fig:cifar_patch_pe}, the probability of error decreases as $N_0$ increases, consistent with the increased decision margins observed in Fig.~\ref{fig:cifar_patch_margin}. At the same time, the proposed method consistently improves upon non-cooperative prediction and remains competitive with the stronger fusion baselines. With joint training, VFL-JT achieves lower error than the proposed method over the tested range. We also note that under sufficient communication, \emph{Avg-stat} coincides with the proposed method when $A$ is doubly-stochastic, while \emph{Ours} and \emph{Ours (no centering)} exhibit similar performance over the tested range.

Panel~(c) illustrates the evolution of $P_{k,t}$ during the collaboration process. As the communication round $t$ grows, the decision statistic $\blambda_{k,t}$ converges to $\blambda_\ave$, and $P_{k,t}$ approaches the error of the limiting aggregate classifier. It is worth noting that $P_{k,0}$ at $t=0$ corresponds to the error incurred by agent $k$ in the non-cooperative scenario. The light grey dashed horizontal lines in Fig.~\ref{fig:cifar_patch_process} indicate these baseline error levels for reference. The average of $P_{k,t}$ across all $K$ agents is also shown in Fig.~\ref{fig:cifar_patch_process}, which decreases over the communication rounds before stabilizing. This highlights the improvement in the network-wide performance under collective prediction.

We next present two additional studies for the CIFAR-10 patch-partition benchmark.

\subsubsection{Conditional mutual information between patches}
\label{sec:cifar_patch_cmi}

\begin{figure}[!ht]
    \centering
    \subfloat[Conditional mutual information matrix]{\includegraphics[width=0.75\linewidth]{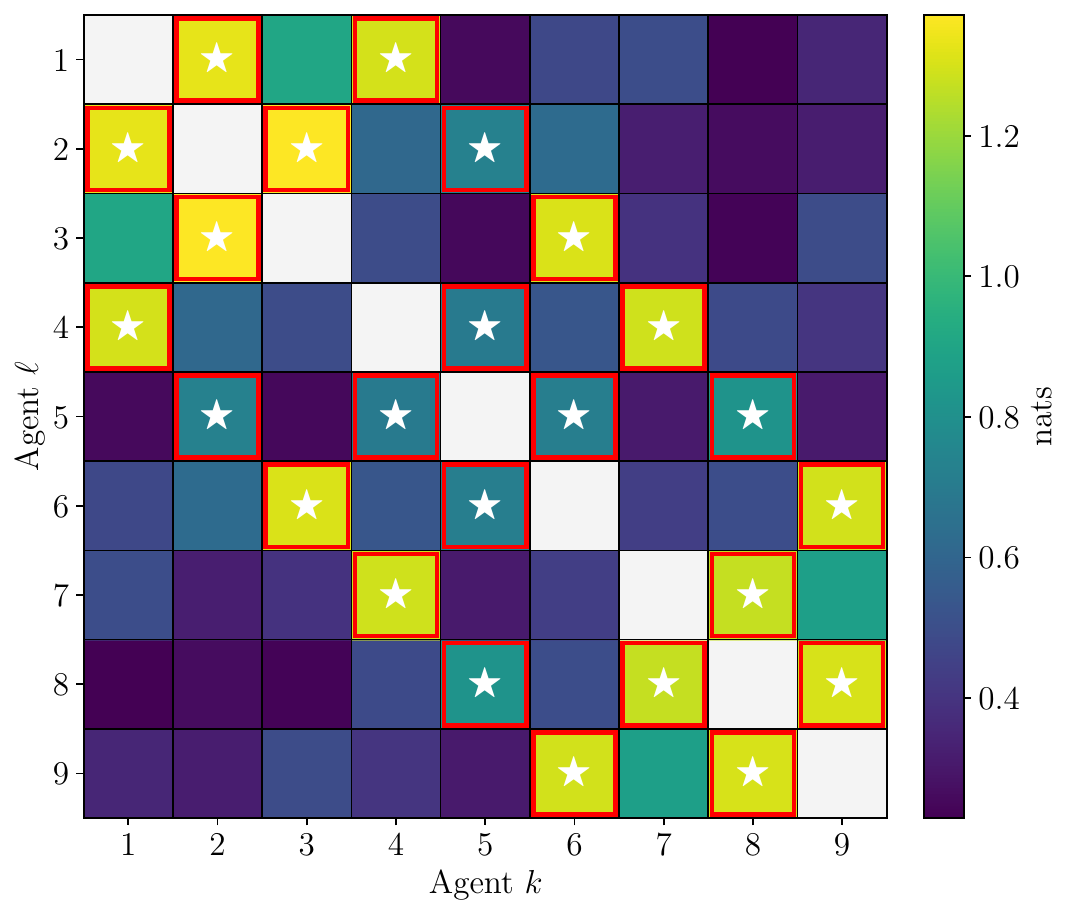}\label{fig:cifar_patch_cmi_heatmap}}
    \hfill
    \subfloat[Grouped conditional mutual information]{\includegraphics[width=0.85\linewidth]{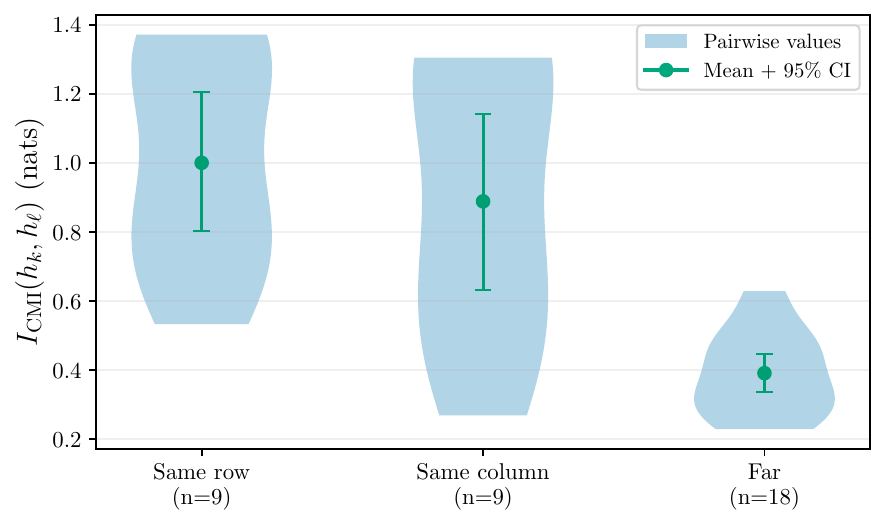}\label{fig:cifar_patch_cmi_bucket}}
    \caption{Conditional mutual information between CIFAR-10 patches. Panel~(a) shows the estimated pairwise conditional mutual information matrix across the nine patch agents. White stars mark patch pairs that are adjacent on the $3\times3$ grid shown in Fig.~\ref{fig: feature map}, while red boxes indicate, in each row, the largest off-diagonal entries whose number matches the number of grid neighbors of the corresponding patch. Panel~(b) shows the grouped summary over same-row, same-column, and far patch pairs.}
    \label{fig:cifar_patch_cmi}
\end{figure}

Since all patches are extracted from the same natural image, nearby patches may remain statistically dependent even after conditioning on the class label. To quantify this effect, we estimate the class-conditional mutual information for each label value and then average according to the class probabilities:
\begin{equation}\label{eq:cifar_cmi}
    I_{\mathrm{CMI}}(h_k,h_\ell)\triangleq\sum_{\gamma\in\Gamma}\P(\bgamma=\gamma)\,I(h_k;h_\ell\givensmall\bgamma=\gamma)
\end{equation}
where $h_k$ denotes the raw pixel vector of the patch observed by agent $k$, $\bgamma$ denotes the class label, and $I(h_k;h_\ell\givensmall\bgamma=\gamma)$ is the mutual information under the conditional distribution given $\bgamma=\gamma$. The reported estimates are computed by first reducing the patch dimension through principal component analysis (PCA)~\cite{sayed2022inference} and then applying a $k$-nearest-neighbor (kNN) mutual information estimator~\cite{kraskov2004estimating}. More details about this estimation process are provided in Appendix~\ref{app:cifar_cmi_details}.

Figure~\ref{fig:cifar_patch_cmi} shows that the estimated conditional mutual information is clearly nonzero for many patch pairs and tends to be strongest among spatially related patches. Thus, the CIFAR-10 patch-partition benchmark induces a structured observation model in which nearby local views exhibit stronger statistical dependence than distant ones. This observation motivates the dependence-aware prediction analysis in Section~\ref{sec: main results}. While the CMI diagnostic is not an estimate of the ATE coefficient in Assumption~\ref{assump: class conditional ATE}, it does provide empirical evidence that the local views are not independent in this benchmark.

\subsubsection{Heterogeneous local model families}
\label{sec:cifar_hetero}

We next relax the assumption that all agents use the same local architecture. Different agents are then assigned different local model families, while the communication graph and the collaboration rule remain unchanged. The local models include the patch-level convolutional network used in the main CIFAR-10 benchmark, a logistic-regression classifier on raw patch pixels, and a small residual convolutional network~\cite{he2016deep}. The detailed protocol and the precise model specifications are given in Appendix~\ref{app:cifar_hetero_details}.

Figure~\ref{fig:cifar_patch_hetero} reports a balanced assignment study in which the model families are rotated across the $3\times3$ grid so that each family appears equally often at every spatial location. The results indicate that collaboration remains beneficial across all three model families and across all patch locations, although the magnitude of the improvement varies with both the family and the position. Furthermore, under one representative heterogeneous assignment, the main performance results summarized in Fig.~\ref{fig:cifar_patch_hetero_main} show that the same qualitative behavior observed in the homogeneous benchmark is preserved here: the empirical decision margins become more separated as $N_0$ increases, the collaborative classifier improves over non-cooperative inference, and most of the performance gain is realized within the first few communication rounds.

\begin{figure*}[t]
  \centering
  \includegraphics[width=0.85\linewidth]{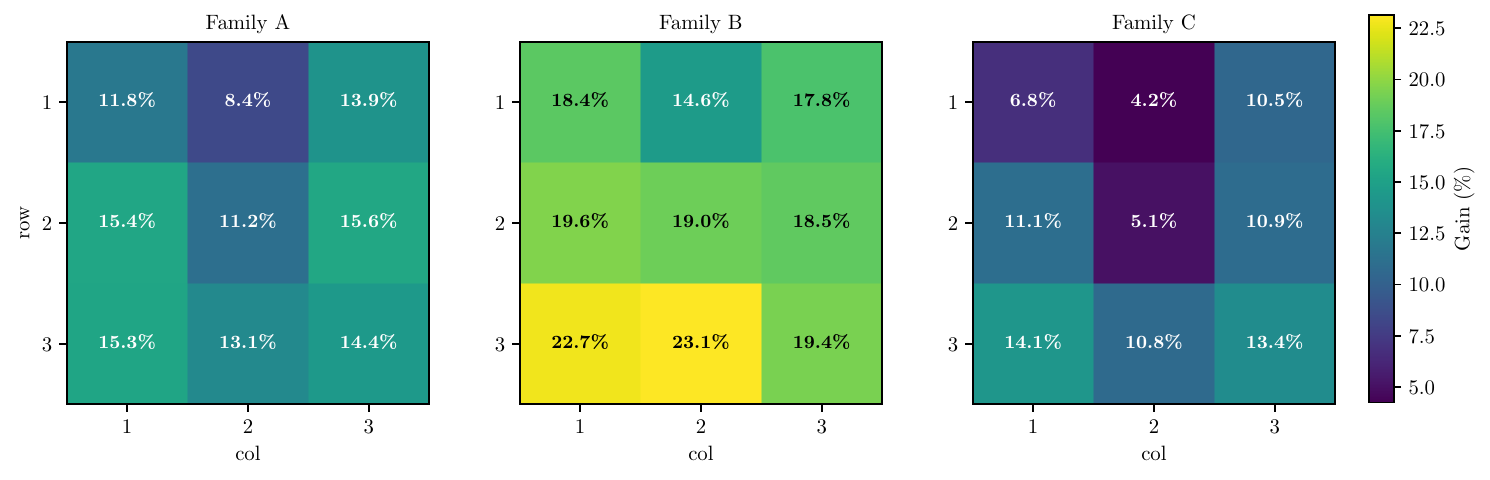}
  \caption{Collaboration gain under heterogeneous local model families on CIFAR-10. Each cell shows the relative gain when the corresponding model family is assigned to that patch location under the balanced rotation protocol.}
  \label{fig:cifar_patch_hetero}
\end{figure*}

\begin{figure*}[t]
    \centering
    \subfloat[Decision margin]{\includegraphics[width=0.33\linewidth]{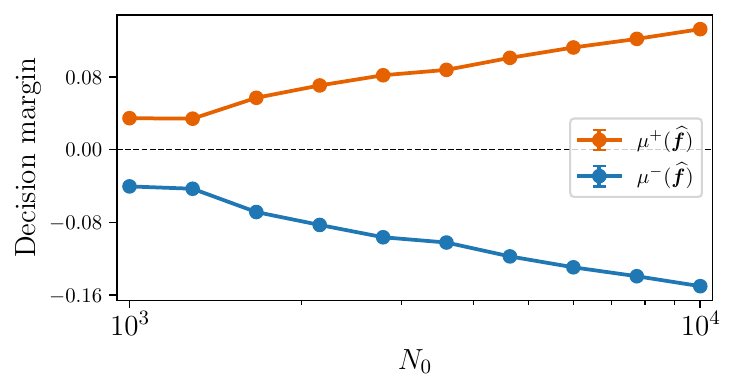}\label{fig:cifar_patch_hetero_margin}}
    \hfill
    \subfloat[Probability of error]{\includegraphics[width=0.33\linewidth]{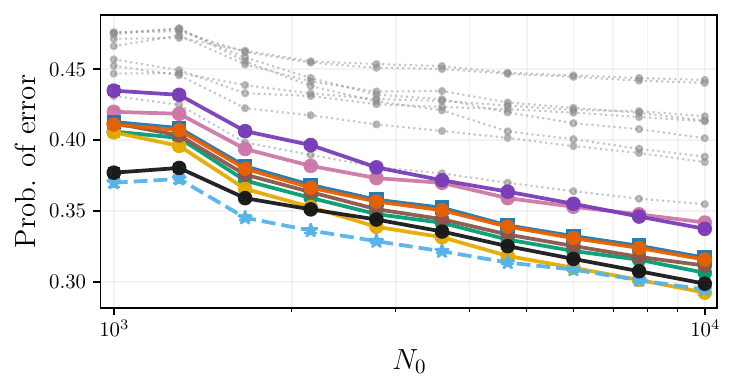}\label{fig:cifar_patch_hetero_pe}}
    \hfill
    \subfloat[Instantaneous probability of error]{\includegraphics[width=0.33\linewidth]{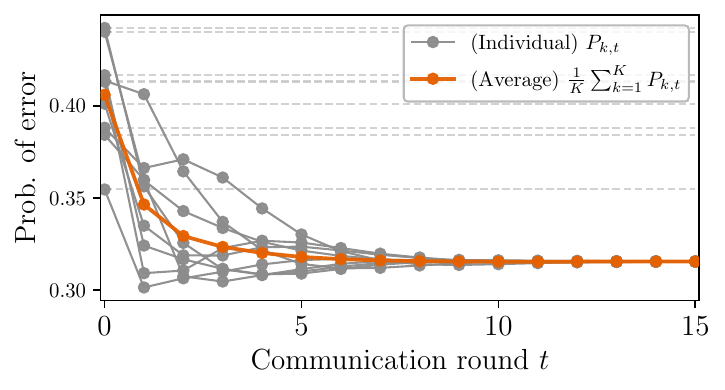}\label{fig:cifar_patch_hetero_process}}
    \caption{Performance of heterogeneous local model families on CIFAR-10. The curve labels in panel~(b) and the value of $N_0$ in panel~(c) are the same as those in Fig.~\ref{fig:cifar_patch_main}.}
    \label{fig:cifar_patch_hetero_main}
\end{figure*}

\subsection{ModelNet40 multi-view benchmark}
\label{subsec:modelnet_real}

In this section, we consider a multi-view benchmark derived from ModelNet40~\cite{wu20153d}. Unlike the CIFAR-10 patch-partition setting, where the agents observe different \emph{spatial} regions of the same image, here each agent receives one rendered \emph{view} of the same 3D object. The heterogeneity across agents therefore arises from viewpoint rather than patch location.

\begin{figure}[h]
    \centering
    \subfloat[]{\includegraphics[width=0.42\linewidth]{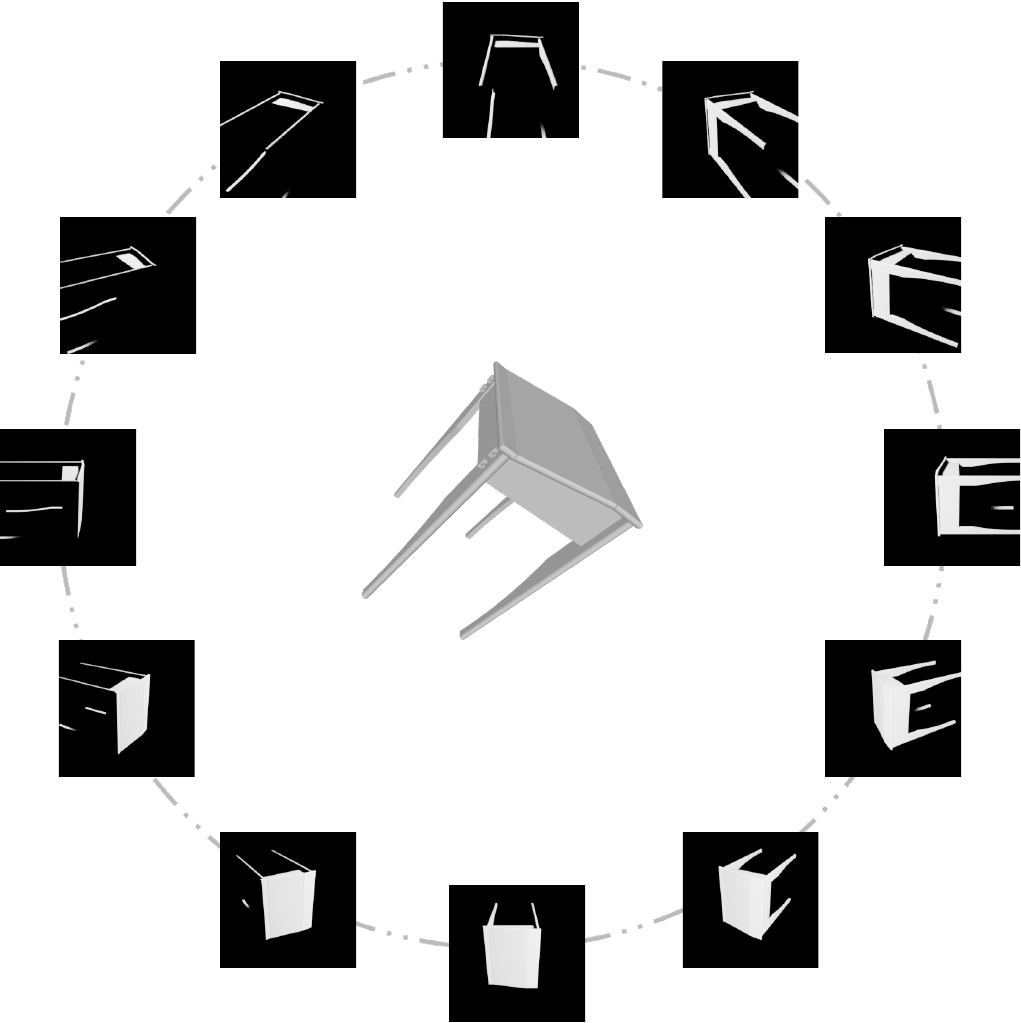}\label{fig:modelnet_views}}
    \hfil
    \subfloat[]{\includegraphics[width=0.49\linewidth]{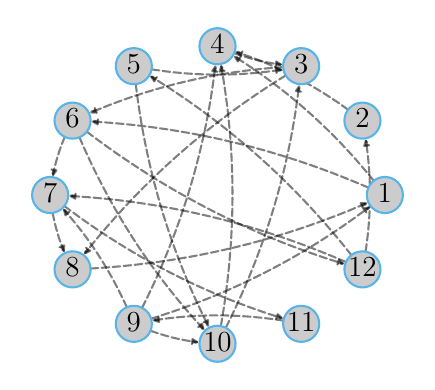}\label{fig:modelnet_net}}
    \caption{ModelNet40 multi-view benchmark. Panel~(a) shows the 12 rendered views associated with a nightstand object, and panel~(b) shows the communication topology used for collaboration.}
    \label{fig:modelnet_setup}
\end{figure}

For each object, we generate $K=12$ RGB renderings from fixed camera positions whose azimuth angles are uniformly spaced around the object, with a common elevation angle of $30^\circ$. One rendered view is then assigned to each agent. In the experiments, we consider the binary task pair of \emph{dresser} and \emph{nightstand}. The multi-view construction and a randomly-generated communication topology used in our experiments are illustrated in Fig.~\ref{fig:modelnet_setup}. Each local classifier is a lightweight convolutional network, referred to as \emph{MicroCNN}; further experimental details are given in Appendix~\ref{app:modelnet_details}.

\begin{figure}[!h]
\centering
\subfloat[Decision margin]{\includegraphics[width=0.75\linewidth]{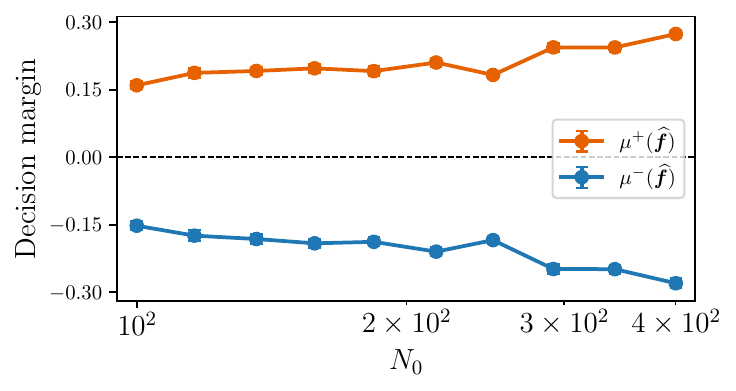}\label{fig:modelnet_margin}}
\hfill
\subfloat[Probability of error]{\includegraphics[width=0.75\linewidth]{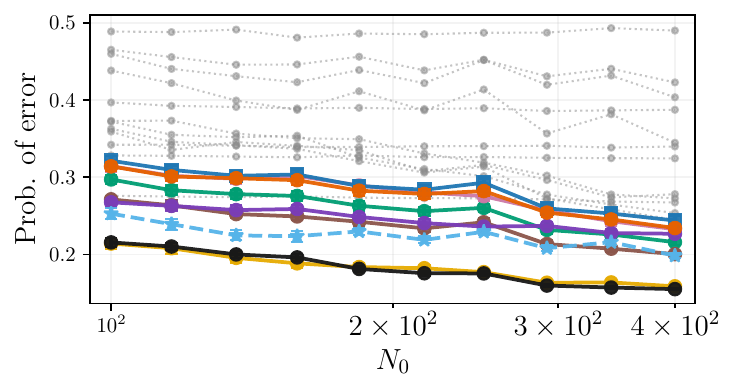}\label{fig:modelnet_pe}}
\hfill
\subfloat[Instantaneous probability of error]{\includegraphics[width=0.75\linewidth]{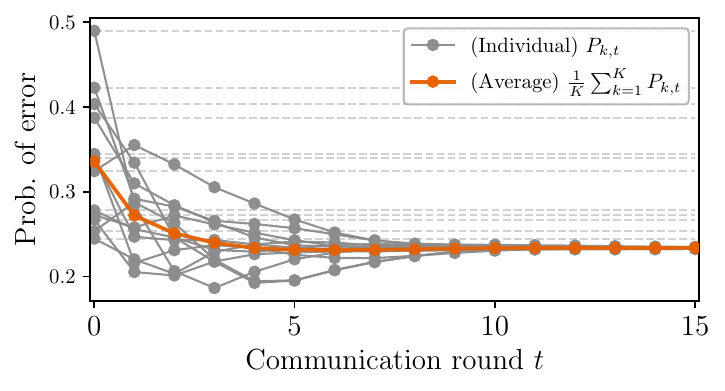}\label{fig:modelnet_process}}
\caption{Main ModelNet40 multi-view results. Panel~(a) reports the empirical class-conditional decision statistics, panel~(b) compares the probability of error under different values of $N_0$, and panel~(c) shows the error evolution over the communication rounds for $N_0=400$.}
\label{fig:modelnet_main}
\end{figure}

Figure~\ref{fig:modelnet_main} summarizes the main performance results for this ModelNet40 benchmark. In panel~(a), the empirical statistics $\bmu^{+}(\widehat{\f})$ and $\bmu^{-}(\widehat{\f})$ remain separated across  $N_0$, indicating that the classifier network remains informative under the viewpoint heterogeneity. Compared with the CIFAR-10 benchmark in Fig.~\ref{fig:cifar_patch_main}, their variation with $N_0$ is milder and exhibits more visible fluctuations, which is consistent with the smaller amount of training data available in this benchmark. The corresponding probability of error curves in Fig.~\ref{fig:modelnet_pe} also decrease overall as $N_0$ increases, in agreement with the larger decision margins attained at larger values of $N_0$. Meanwhile, the improvement of the proposed method is more limited at smaller $N_0$, where some local training sets may be too small for the agents to learn informative classifiers from their views. In this regime, some non-cooperative agents can achieve lower error when acting alone than under collaboration. The relative ordering of the score-fusion baselines is also more sensitive than in the CIFAR-10 benchmark, which may reflect the fact that classifiers trained on different viewpoints need not produce scores on comparable scales. In particular, since \emph{Avg-stat} corresponds to a special case of the proposed method under sufficient communication when $A$ is doubly-stochastic, the gap between \emph{Avg-stat} and \emph{Ours} indicates that the choice of the combination policy $A$ can have a visible effect on collaborative performance. The instantaneous error curves in Fig.~\ref{fig:modelnet_process} further show that most of the performance gain is achieved within the first few communication rounds.

\subsubsection{Temperature scaling and score calibration}\label{sec:temp_scale}
The sensitivity of the score fusion baselines in Fig.~\ref{fig:modelnet_main} suggests that score calibration across viewpoints deserves closer examination. To this end, we consider temperature scaling as a post-hoc calibration step in our experiments~\cite{guo2017calibration}. For each agent, a positive temperature is fitted on the local validation set by minimizing the cross-entropy loss of the rescaled logits, as detailed in Appendix~\ref{app:modelnet_temp_details}. The local decision statistic is then formed from the difference between the two calibrated logits, followed by the same centering step~\eqref{eq: trained classifier} as in the uncalibrated setting. Since temperature scaling does not change the predicted class of an individual local classifier, its role here is not to alter the local decisions themselves, but to improve the comparability of the real-valued decision statistics exchanged during collaboration.

\begin{figure*}[t]
\centering
\subfloat[Decision margin]{\includegraphics[width=0.33\linewidth]{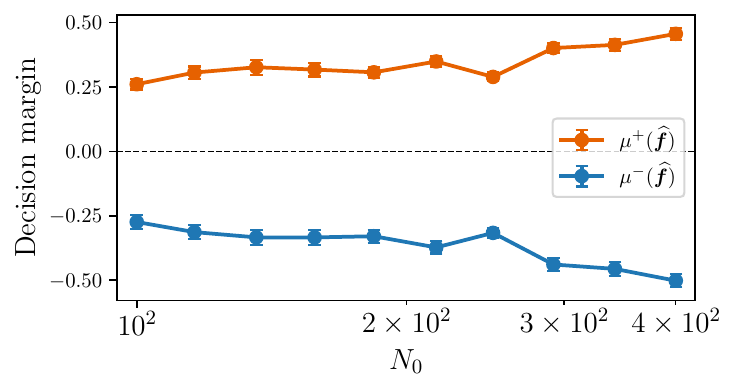}\label{fig:modelnet_margin_temp}}
\hfill
\subfloat[Probability of error]{\includegraphics[width=0.33\linewidth]{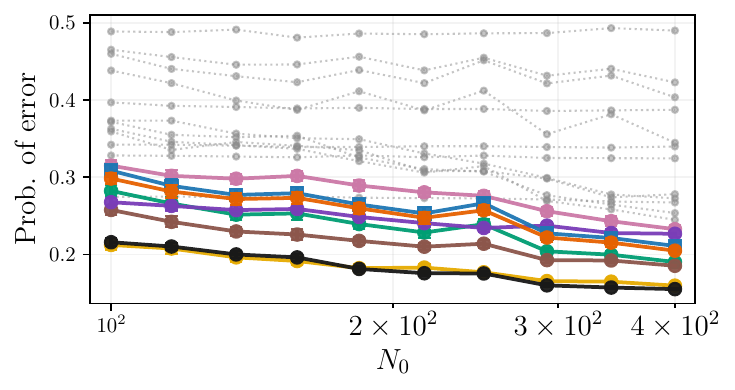}\label{fig:modelnet_pe_temp}}
\hfill
\subfloat[Instantaneous probability of error]{\includegraphics[width=0.33\linewidth]{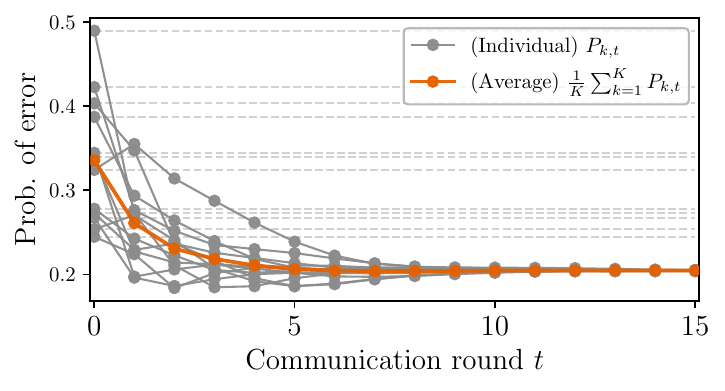}\label{fig:modelnet_process_temp}}
\caption{Performance of temperature scaling on ModelNet40. The curve labels in panel~(b) and the value of $N_0$ in panel~(c) are the same as those in Fig.~\ref{fig:modelnet_main}.}
\label{fig:modelnet_temp}
\end{figure*}

Figure~\ref{fig:modelnet_temp} shows that, relative to Fig.~\ref{fig:modelnet_main}, the class-conditional decision statistics become more clearly separated after calibration, and the probability of error curve of the proposed method is slightly improved over most of the tested range of $N_0$. This suggests that, under viewpoint heterogeneity, score calibration can improve the alignment of the exchanged decision statistics across agents and thereby improve collaborative prediction. VFL-JT is not included in this study because temperature scaling of its local scores is coupled with the jointly optimized fusion head. Additional figures are provided in Appendix~\ref{app:modelnet_temp_details}.

\subsubsection{Controlled correlation stress test}\label{sec:corr_stress_test}
In the multi-view setting, different agents observe the same underlying object and can therefore produce statistically dependent local decision statistics. To examine how the collaborative classifier behaves as this dependence becomes stronger, we perform a controlled stress test directly on the local decision statistics, with more details provided in Appendix~\ref{app:modelnet_corr_details}. Let $x_{i,k}$ denote the clean local decision statistic produced by agent $k$ for the $i$-th testing sample. To control the correlation between $x_{i,k}$ across agents, we consider a perturbed local decision statistic at each agent, denoted by $x'_{i,k}$:
\begin{equation}\label{eq:corr_stress}
x'_{i,k}=x_{i,k}+\sigma_{\mathrm{pert}}\!\left(\omega_c(r)\,u_i + \omega_i(r)\,e_{i,k}\right)
\end{equation}
where $u_i\sim\mathcal N(0,1)$ is a shared standard Gaussian perturbation for sample $i$, $e_{i,k}\sim\mathcal N(0,1)$ are independent standard Gaussian perturbations across samples and agents, and
\begin{equation}\label{eq:corr_weights}
\omega_c(r)=\frac{r}{\sqrt{1+r^2}},\quad\omega_i(r)=\frac{1}{\sqrt{1+r^2}}.
\end{equation}
Here, $r\geq 0$ controls the relative strength of the common-mode component, while $\sigma_{\mathrm{pert}}>0$ controls the overall perturbation scale. Since $\omega_c(r)^2+\omega_i(r)^2=1$, varying $r$ changes the balance between common and idiosyncratic perturbations without substantially changing the total injected variance. This construction follows a standard common-factor decomposition with shared and idiosyncratic components. At test time, the agents employ the same rule~\eqref{eq: distributed learning rule}, but with the perturbed local decision statistics $x'_{i,k}$ in place of the clean statistics $x_{i,k}$.

\begin{figure*}[t]
\centering
\subfloat[Average pairwise correlation]{\includegraphics[width=0.48\linewidth]{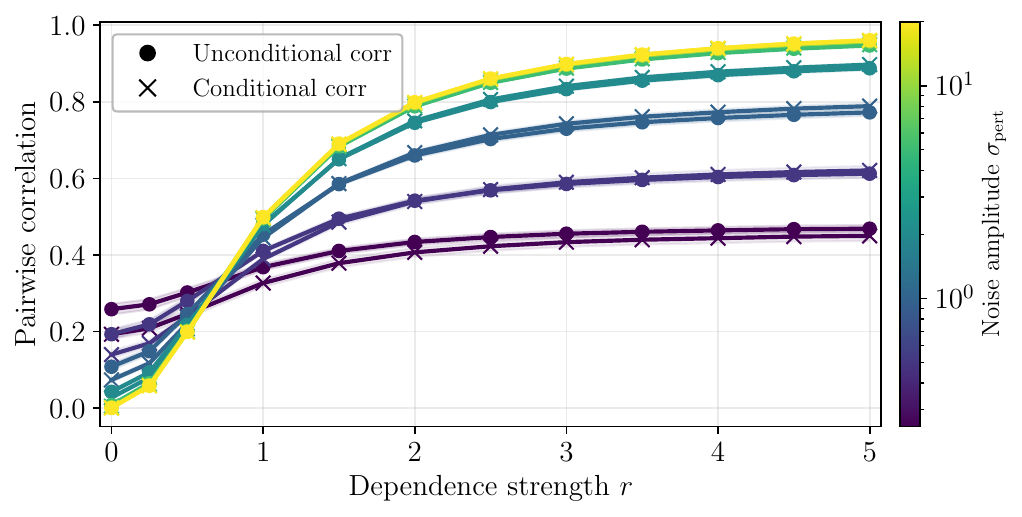}\label{fig:modelnet_corr_vs_alpha}}
\hfill
\subfloat[Collaboration gain]{\includegraphics[width=0.46\linewidth]{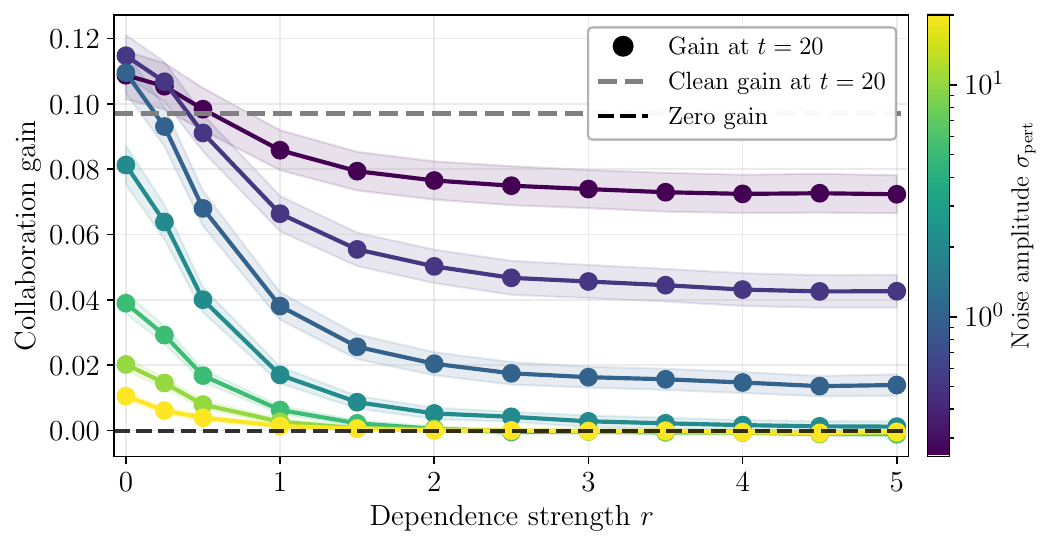}\label{fig:modelnet_gain_vs_alpha}}
\caption{Controlled correlation stress test on ModelNet40. Panel~(a) shows the average pairwise Pearson correlation across agent decision statistics, both unconditional and label-conditional, and panel~(b) shows the collaboration gain after $t=20$ communication rounds.}
\label{fig:modelnet_corr_stress}
\end{figure*}

Figure~\ref{fig:modelnet_corr_stress} shows that the average pairwise correlations across agents, both unconditional and label-conditional, increase with $r$. The corresponding collaboration gain, which is measured by the error reduction relative to non-cooperative inference after $t=20$ communication rounds, decreases with $r$ for each fixed perturbation amplitude. This is consistent with the fact that stronger dependence makes the local decision statistics more redundant and therefore reduces the benefit of information sharing across the network. At the same time, the degradation is gradual, indicating that the collaborative classifier continues to provide measurable gains over a broad range of dependence levels.

\subsection{Communication design, adaptive stopping, and robustness}
\label{subsec:comm_robust}

In this section, we examine several aspects of the communication protocol using the ModelNet40 benchmark. In particular, we study how collaborative performance is influenced by the graph topology, the combination policy, the communication precision, adaptive stopping rules, and corrupted messages.

\subsubsection{Topology, combination policy, and communication precision}
\label{sec:topology_sweep}

We begin by examining the impact of the communication design across three graph topologies: ring, grid, and directed Erd\"{o}s--R\'{e}nyi. For each topology, the combination policy $A$ is constructed using two different schemes: the uniform averaging rule and the Metropolis rule. For this quantization study only, we represent the already trained probabilistic classifier by the bounded posterior-probability difference
\begin{equation}\label{eq: bounded score transform theory}
    g_k(h)\triangleq\widehat p_k(+1|h)-\widehat p_k(-1|h)=\tanh\!\left(\frac{\widehat{\f}_k(h)}{2}\right)\in[-1,1].
\end{equation}
No retraining process is required for this representation. Under this transformation, the resulting centered local score
\begin{equation}
    c_k^g(h) \triangleq g_k(h)-\frac{1}{N_k}\sum_{n=1}^{N_k}g_k(h_{k,n})
\end{equation}
satisfies $|c_k^g(h)|\leq 2$. We compare the resulting matched full-precision recursion with 4-, 6-, and 8-bit stochastic communication. Further details of the quantization protocol are provided in Appendix~\ref{app:topology_details}.

\begin{figure}
    \centering
    \includegraphics[width=0.9\linewidth]{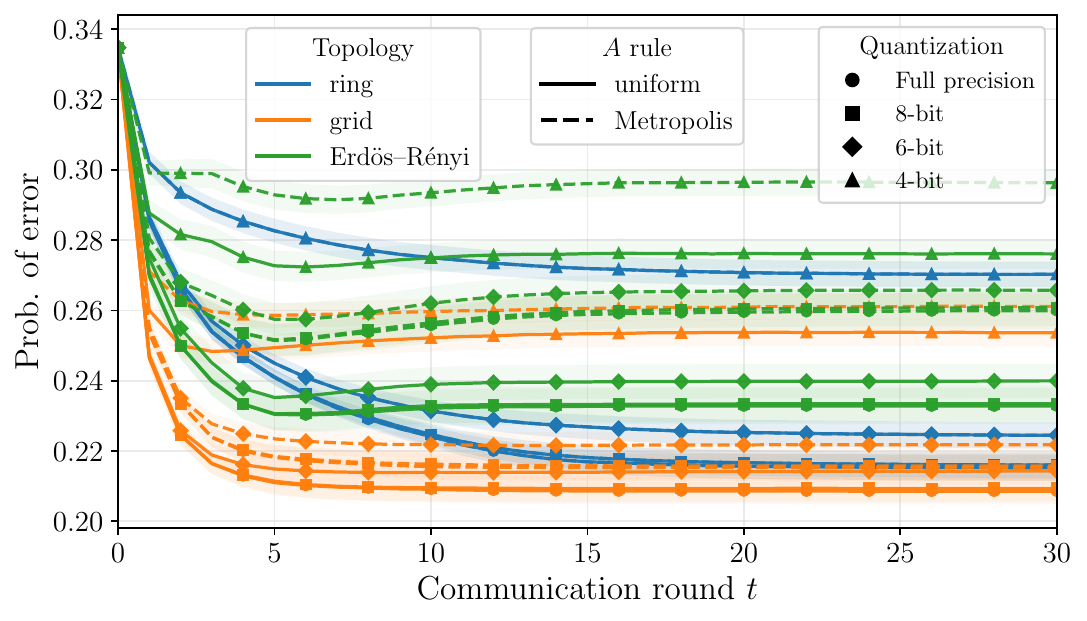}
    \caption{Effect of topology, the construction rule of $A$, and quantization noise.}
    \label{fig:topology_sweep_main}
\end{figure}

Figure~\ref{fig:topology_sweep_main} reports the evolution of the average probability of error under these communication designs. All settings benefit from communication in the early rounds, but both the finite-round improvement and the error level eventually reached depend visibly on the topology, the construction rule of $A$, and the communication precision. Among the graph families considered here, the grid generally attains the lowest error, while the ring also performs competitively. By contrast, the directed Erd\"{o}s--R\'{e}nyi topology is more sensitive to the choice of combination policy and communication precision. The precision hierarchy is systematic: 8-bit stochastic communication is practically indistinguishable from the matched bounded full-precision reference, 6-bit communication produces a modest intermediate degradation, and 4-bit communication produces the largest loss. More detailed grouped views of these results are provided in Appendix~\ref{app:topology_details}.

\subsubsection{Adaptive stopping rules}
\label{sec:stopping_rule}

In the previous experiments, the collaboration framework is evaluated under a fixed communication budget $t$ chosen in advance. In practice, however, selecting this budget is not straightforward: a larger $t$ increases communication cost, while a smaller $t$ may be insufficient for collaboration. This motivates the study of adaptive stopping rules, in which the communication pattern is allowed to depend on the evolution of local decision statistics. We consider two local event-triggered families: a \emph{$\Delta$-trigger} rule based on the change in the communicated statistic, and a \emph{label-stability with confidence} rule based on the local prediction and a confidence threshold (see Appendix~\ref{app:stopping_details}). Since transmissions may stop at different times for different agents and different samples, we measure the resulting communication cost by the total number of transmissions per testing sample.

\begin{figure}[!]
    \centering
    \subfloat[$t_{\max}=2$]{\includegraphics[width=0.49\linewidth]{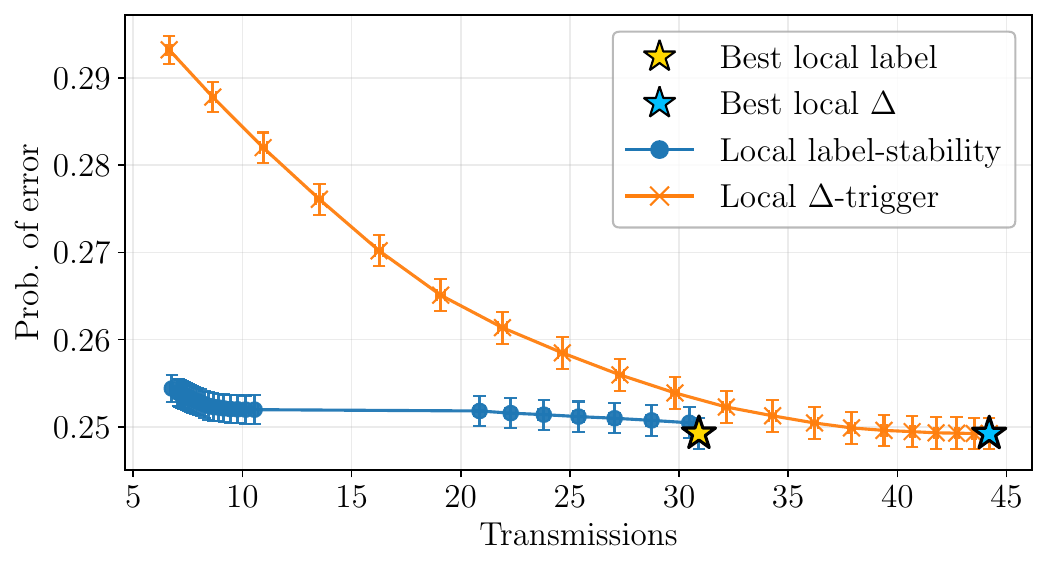}}
    \hfill
    \subfloat[$t_{\max}=5$]{\includegraphics[width=0.49\linewidth]{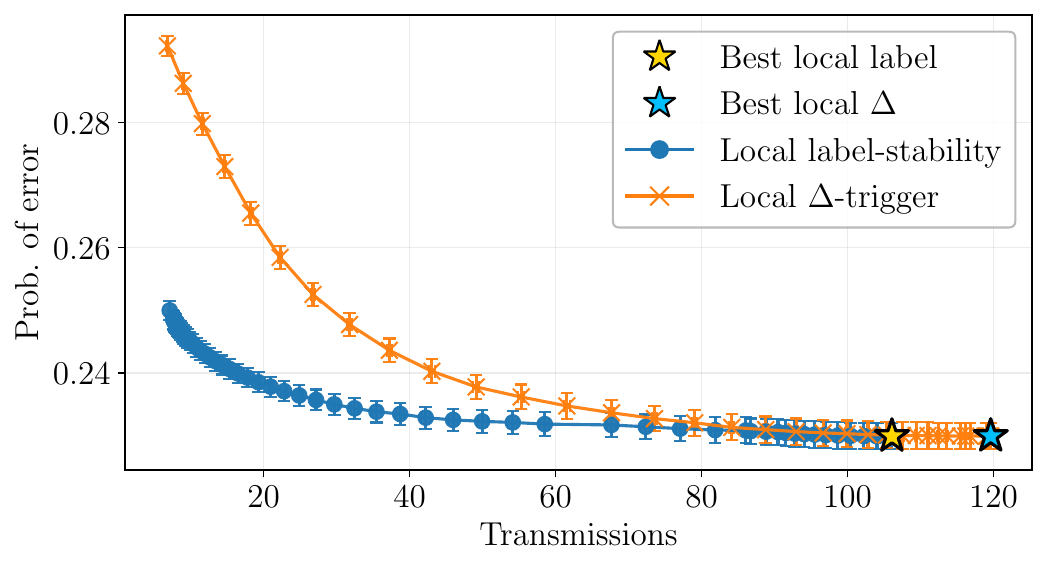}}
    \hfill
    \subfloat[$t_{\max}=10$]{\includegraphics[width=0.49\linewidth]{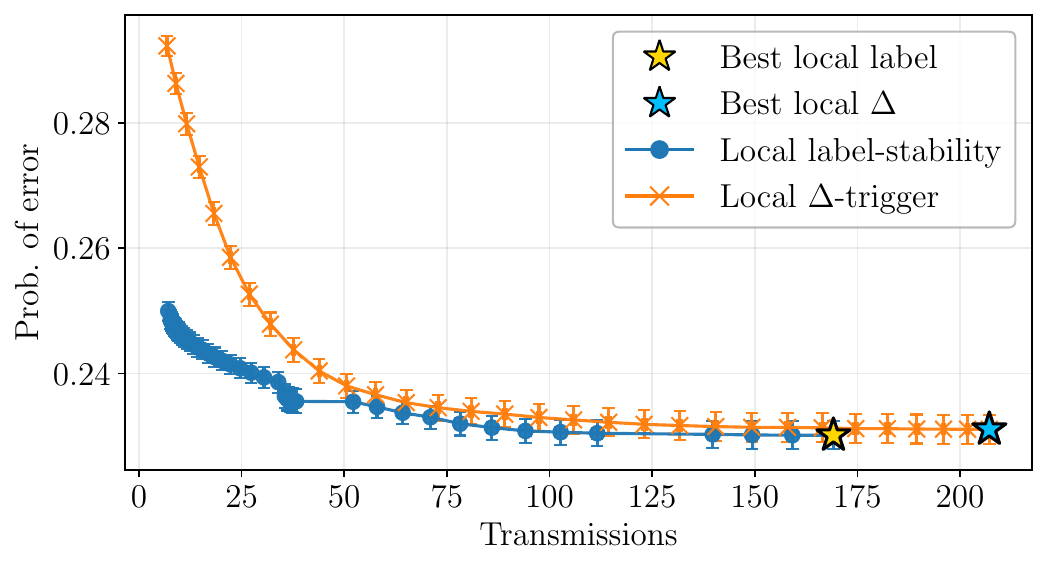}}
    \hfill
    \subfloat[$t_{\max}=20$]{\includegraphics[width=0.49\linewidth]{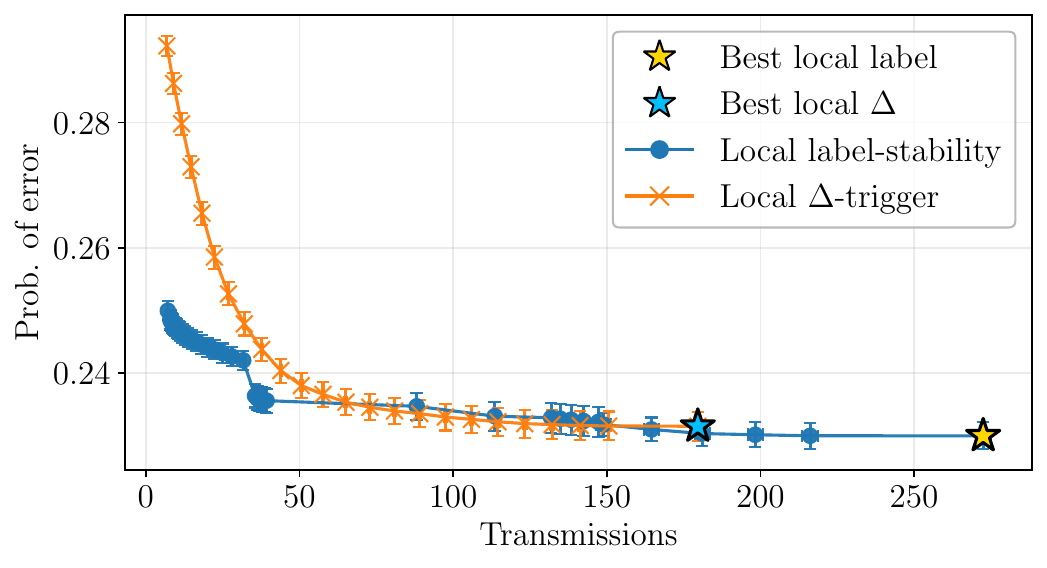}}
    \caption{Tradeoff between the probability of error and the total number of transmissions (per testing sample) under local adaptive stopping rules.}
    \label{fig:stopping_frontier}
\end{figure}

Figure~\ref{fig:stopping_frontier} illustrates the tradeoff between the probability of error and the total communication cost for four values of the maximum communication budget $t_{\max}$. In each panel, only the Pareto-efficient operating points are shown, namely those for which no alternative configuration attains both lower communication cost and lower probability of error. Both stopping rule families exhibit a clear communication--accuracy tradeoff, but the label-stability rule generally yields a better Pareto frontier than the $\Delta$-trigger rule, especially in the low communication regime. The star markers denote the best points selected from the two families of stopping rules. Their positions indicate that communication can often be terminated well before reaching the maximum communication budget.

\subsubsection{Robustness to noisy, faulty, and adversarial agents}
\label{sec:robustness}

We next examine how the collaborative inference mechanism behaves when a subset of agents transmits corrupted messages at test time. The perturbation is introduced only at the level of the local decision statistics before communication begins. For each corruption scenario, we compare the resulting probability of error with that of the corresponding clean run obtained from the same trained models, the same testing set, and the same combination policy. The degradation is measured by the \emph{paired excess error}, namely the increase in probability of error relative to the clean run within the same Monte Carlo repetition. In each repetition, we corrupt exactly $m\in\{1,\dots,K\}$ agents and consider three families of perturbations: benign Gaussian noise, faulty agents (including stuck-at-zero and constant-bias failures), and adversarial reports modeled by scaled sign flips. Additional details on the perturbation models are provided in Appendix~\ref{app:robust_details}.

\begin{figure*}[!t]
    \centering
    \subfloat[Benign Gaussian noise]{\includegraphics[width=0.33\linewidth]{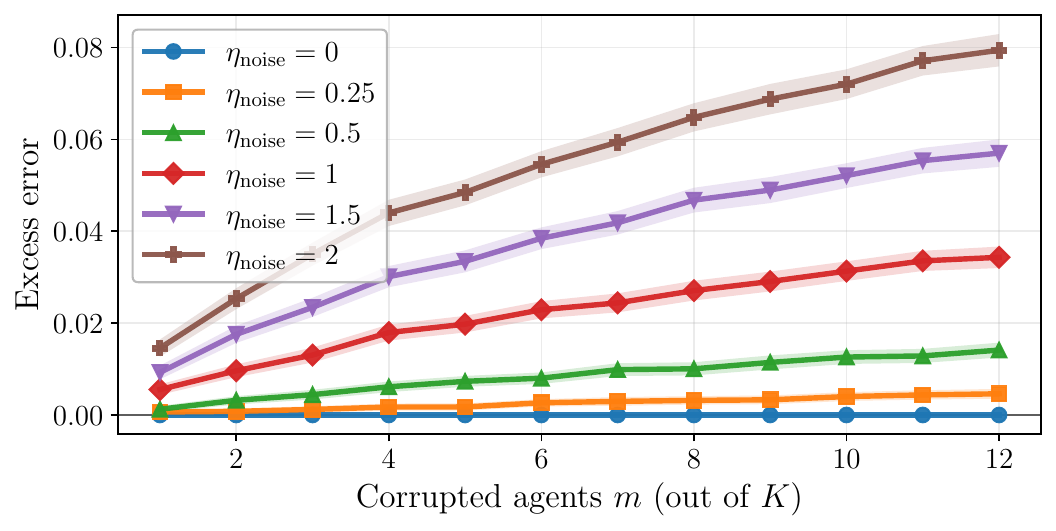}}
    \hfill
    \subfloat[Faulty agents]{\includegraphics[width=0.33\linewidth]{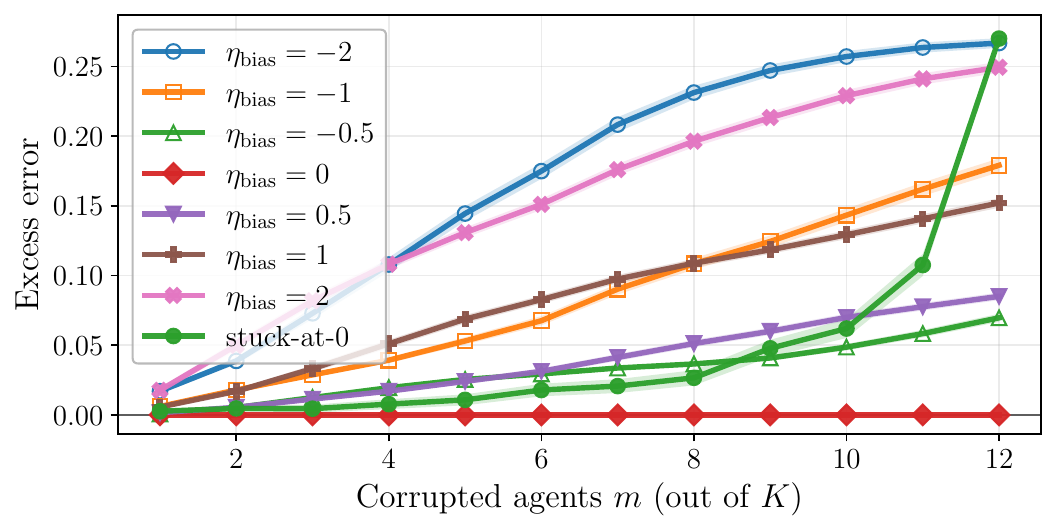}}
    \hfill
    \subfloat[Adversarial reports]{\includegraphics[width=0.33\linewidth]{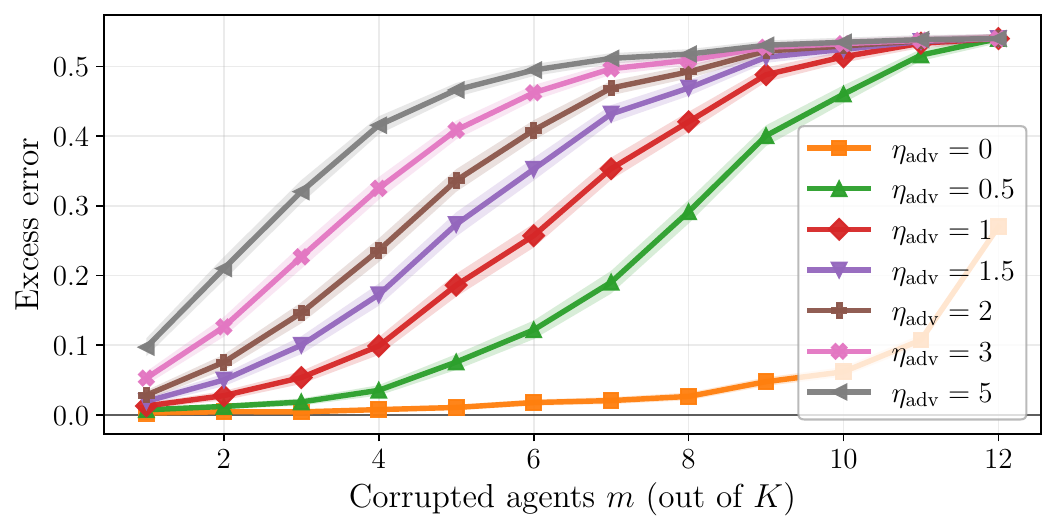}}
    \caption{Robustness of collaborative classification to corrupted agents, measured by the paired excess error at the communication round $t=20$. Panel~(a) varies the Gaussian noise multiplier $\eta_{\mathrm{noise}}$, panel~(b) varies the fault type and the normalized bias level $\eta_{\mathrm{bias}}$, and panel~(c) varies the adversarial scale $\eta_{\mathrm{adv}}$.}
    \label{fig:robust_agents_main}
\end{figure*}

Figure~\ref{fig:robust_agents_main} summarizes the robustness results after 20 rounds of communication. In all three perturbation settings, the excess error increases with both the number of corrupted agents and the perturbation severity. Among the three cases, adversarial scaled sign flips are the most damaging, indicating that persistently misleading decision statistics can have a particularly strong effect on collaborative prediction. By contrast, benign Gaussian noise and faulty agents yield a more gradual degradation, especially when only a few agents are affected.

\section{Concluding Remarks}
\label{sec: conclusions}

In this work, we proposed a distributed test-time collaborative classification framework for multi-agent networks. We considered a group of independently trained agents, potentially with heterogeneous feature spaces and model architectures, that collaborates at inference time by exchanging local beliefs with graph neighbors to produce collective predictions. Our analysis established classification error bounds that characterize the impact of the agent heterogeneity, network topology, combination policies, and communication budgets on collective prediction performance. These bounds provide theoretical insight into the interplay between local training and distributed inference in multi-agent systems, in addition to highlighting the benefits of collaborative classification relative to the non-cooperative scenario. Importantly, this study shows that cooperation can be valuable at test time even when the local models are trained independently, which is becoming an increasingly important paradigm in real-world implementations of distributed machine learning. Future directions include extensions to multi-class settings, theoretical analysis of adaptive and communication-efficient protocols, privacy-preserving implementations, and more general communication settings, including asynchronous updates, time-varying topologies, and lossy communication.

\bibliographystyle{IEEEtran}
\bibliography{refs}

\begin{IEEEbiographynophoto}{Ping Hu} received her Ph.D. in electrical engineering from EPFL, Switzerland in 2026. She is currently a postdoctoral researcher with the Adaptive Systems Laboratory, EPFL. Her research interests include distributed learning, inference and decision-making over networked systems.
\end{IEEEbiographynophoto}

\begin{IEEEbiographynophoto}{Mert Kayaalp} received his Ph.D. in computer and communication sciences from EPFL, Switzerland in 2025. He is currently a research scientist at Dalle Molle Institute for Artificial Intelligence (IDSIA USI-SUPSI), Switzerland, where he is affiliated with the UBS-IDSIA AI Lab. His research focuses on machine learning and signal analysis in networked and dynamical systems.
\end{IEEEbiographynophoto}

\begin{IEEEbiographynophoto}{Ali H. Sayed} is professor and former dean of engineering at EPFL, Switzerland, where he also directs the Adaptive Systems Laboratory. He served before as a distinguished professor and chair of electrical engineering with UCLA. He is a member of the US National Academy of Engineering (NAE) and The World Academy of Sciences (TWAS). He served as a president of the IEEE Signal Processing Society in 2018 and 2019. An author of over 650 scholarly publications and 10 books, his research involves several areas including adaptation and learning theories, statistical inference, and multi-agent systems. His work has been recognized with several awards including the 2022 IEEE Fourier Technical Field Award and the 2020 IEEE Wiener Society Award. He is also a fellow of EURASIP and the American Association for the Advancement of Science (AAAS).
\end{IEEEbiographynophoto}

\end{document}


\maketitle
\section*{Organization of the Supplementary Material}
This supplementary material provides detailed proofs, auxiliary results, and additional experimental information supporting the main paper. The theoretical appendices follow the order of the main development. Appendix~\ref{appendix: learning complexity definitions} presents the network margin and training analysis and collects the auxiliary quantities used in the Theorems and subsequent proofs.
Appendices~\ref{appendix: sufficient communication}--%
\ref{appendix: finite bit communication} establish the sufficient-communication, finite-round, and finite-precision classification guarantees, respectively. Appendix~\ref{appendix: PAC generalization bound} provides the Probably Approximately Correct (PAC)-style generalization analysis, while Appendix~\ref{appendix: classification error using SL rule} develops the relation to social learning. Appendix~\ref{appendix: raw score counterparts} collects the corresponding raw score results when the empirical centering step~\eqref{eq: trained classifier} is omitted. Finally, Appendix~\ref{app:exp_details} presents additional experimental details. Cross-references to theorem, equation, section, and figure numbers in the main paper are retained throughout.

\appendices

\renewcommand{\thesubsection}{\Alph{section}.\arabic{subsection}}
\renewcommand{\thesubsectiondis}{\arabic{subsection}.}
\renewcommand{\thesubsubsection}{\Alph{section}.\arabic{subsection}.\arabic{subsubsection}}
\renewcommand{\thesubsubsectiondis}{\arabic{subsubsection})}
\renewcommand{\theHsubsection}{Appendix.\Alph{section}.\arabic{subsection}}
\renewcommand{\theHsubsubsection}{Appendix.\Alph{section}.\arabic{subsection}.\arabic{subsubsection}}

\numberwithin{equation}{section}
\counterwithin{figure}{section}
\counterwithin{table}{section}
\counterwithin{inlemma}{section}
\counterwithin{proposition}{section}
\counterwithin{corollary}{section}


\section{Network Margin and Training Guarantee}
\label{appendix: learning complexity definitions}

This appendix provides the technical details supporting the network margin and training guarantees in Section~\ref{sec: consistent training} of the main paper. We first collect the complexity and margin quantities used in the finite-sample analysis. We then prove the effect of additive centering and the $\delta$-margin training guarantee under approximate optimization. The final subsection compares the resulting bound with homogeneous centralized training.

\subsection{Definitions of Auxiliary Quantities}
\label{appendix: training auxiliary quantities}

This subsection collects the definitions and relevant properties of several auxiliary quantities used throughout the theoretical analysis.

\paragraph{Rademacher complexity}
We begin with introducing the Rademacher complexity of a general function class $\mathcal G$. Let $\mathcal{Z}\triangleq\{z_1,z_2,\ldots,z_m\}$ be a fixed sample set of size $m$. The empirical Rademacher complexity of $\mathcal G$ with respect to $\mathcal{Z}$ is defined as
\begin{equation}
\label{eq: empirical Rademacher complexity}
    \widehat{\mathfrak R}_{\mathcal{Z}}(\mathcal G) \triangleq \E_{\bm r}\sup_{g\in\mathcal G} \left|\frac{1}{m}\sum_{n=1}^{m}\bm r_n g(z_n)\right|
\end{equation}
where $\bm r_1,\ldots,\bm r_m$ are independent Rademacher random variables. For an i.i.d.\ random sample set $\mathcal{Z}'$ of size $m$, the Rademacher complexity of $\mathcal G$ is
\begin{equation}
\label{eq: definition of Rademacher complexity}
    \mathfrak R_m(\mathcal G)\triangleq \E_{\mathcal{Z}'}\widehat{\mathfrak R}_{\mathcal{Z}'}(\mathcal G).
\end{equation}
For the local function class $\F_k$ and training set size $N_k$, we define the individual Rademacher complexity by
\begin{equation}
\label{eq: individual Rademacher complexity}
    \rho_k\triangleq\mathfrak R_{N_k}(\F_k),
\end{equation}
and the corresponding network Rademacher complexity by the weighted average involving the Perron vector $\pi$:
\begin{equation}
\label{eq: network Rademacher complexity}
    \rho\triangleq\sum_{k=1}^K\pi_k\rho_k.
\end{equation}
Here, $\mathfrak R_{N_k}(\F_k)$ is evaluated for $N_k$ i.i.d.\ local feature samples drawn from the feature marginal induced by $Q_k$.

\paragraph{Sample-size factor}
We define two quantities related to the size of local training sets:
\begin{equation}
\label{eq: alpha and N_max}
    N_{\max}\triangleq\max_k N_k, \quad \alpha\triangleq\sum_{k=1}^K\pi_k\frac{N_{\max}}{N_k}.
\end{equation}
The factor $\alpha$ captures the effect of unequal local training set sizes. Since $N_{\max}/N_k\geq1$ and $\sum_k\pi_k=1$, we have $\alpha\geq1$. In particular, $\alpha=1$ when all local training sets have the same size.

\paragraph{Function $\mathdutchcal E_\Phi(r,\delta)$}
Following Eqs.~(A.16), (A.18), and (A.20) of~\cite{hu2025non-asymptotic}, we use the same construction with a generic risk level $r<\Phi(0)$. For any prescribed margin $\delta\geq0$, let $d_\delta^\star(r)$ denote the unique nonnegative solution of
\begin{equation}
\label{eq: root definition E Phi}
    d-\delta-\frac{\Phi(d)-r}{2L_\Phi}=0.
\end{equation}
The corresponding auxiliary function is
\begin{equation}
\label{eq: definition of E_Phi_R_delta}
    \mathdutchcal E_\Phi(r,\delta) \triangleq \frac{d_\delta^\star(r)-\delta}{4}
    = \frac{\Phi(d_\delta^\star(r))-r}{8L_\Phi}.
\end{equation}
For $r=\mathsf R^o$, this reduces to the quantity $\mathdutchcal E_\Phi(\mathsf R^o,\delta)$ introduced in~\cite{hu2025non-asymptotic}. An important property of the quantity $d_\delta^\star(r)$ is that, for fixed $r$, it increases with the prescribed margin $\delta$. Since $\Phi$ is non-increasing under Assumption~\ref{assump: risk function}, $\mathdutchcal E_\Phi(r,\delta)$ decreases as $\delta$ increases. For a fixed feasible $\delta$, the same construction shows that $\mathdutchcal E_\Phi(r,\delta)$ decreases as the risk level $r$ increases.

\paragraph{Maximum feasible margin $\delta_{\max}(r)$}
According to the definition in Eq.~(A.21) of~\cite{hu2025non-asymptotic}, we have
\begin{equation}
    d_r\triangleq \inf\{x\geq0:\Phi(x)\leq r\},
\end{equation}
with $d_r=+\infty$ if the set is empty. The quantity $\delta_{\max}(r)$ is defined as the largest prescribed margin for which the solution $d_\delta^\star(r)$ remains below $d_r$, namely,
\begin{equation}
\label{eq: definition delta max generic}
    \delta_{\max}(r)\triangleq\sup\{\delta\geq0: d_\delta^\star(r)<d_r\}.
\end{equation}
From~\eqref{eq: definition of E_Phi_R_delta}, $d_\delta^\star(r)<d_r$ is precisely the condition that ensures $\mathdutchcal E_\Phi(r,\delta)>0$. Therefore, $0\leq\delta<\delta_{\max}(r)$ specifies the feasible range of prescribed  margins. Since $d_r$ is non-increasing in $r$ under Assumption~\ref{assump: risk function}, this feasible range becomes more restrictive as the risk level $r$ increases. In Proposition~\ref{prop: P_c_delta}, the generic risk level is $r=\mathsf R^o+\varepsilon_{\opt}$.

\subsection{Proof of Proposition~\ref{prop: effect of additive centering}}
\label{appendix: proof additive centering}

\begin{proof}
Since $\bmu_{\rm raw}^{+}=\mu_{\rm mid}+D$ and $\bmu_{\rm raw}^{-}=\mu_{\rm mid}-D$, the two shifted expected margins in~\eqref{eq: margin under additive shift} are
\begin{equation}
    D+(\mu_{\rm mid}-\chi_\pi) \quad \text{and} \quad D-(\mu_{\rm mid}-\chi_\pi).
\end{equation}
Their minimum is $D-|\mu_{\rm mid}-\chi_\pi|$, which proves \eqref{eq: margin under additive shift}. Since the absolute value term is minimized at zero, $\Delta_\chi(f)$ is maximized when $\chi_\pi=\mu_{\rm mid}$. Setting respectively $\chi_\pi=0$ and $\chi_\pi=\bmu_{\emp}(f)$ gives the two identities in~\eqref{eq: raw margin}, and comparing them yields the condition~\eqref{eq: centering improvement condition}. Finally, $\bmu_{k,\emp}(f'_k)=\bmu_{k,\emp}(f_k)+o_k$, and hence \eqref{eq: empirical centering invariance} follows directly. Therefore, the centered local scores are identical before and after the additive offsets, and substituting them into \eqref{eq: distributed learning rule} gives the same DeGroot iterates at every communication round.
\end{proof}

\noindent\emph{Remarks on centering.}
The achieved two-sided expected margin $\Delta_\chi(f)$ should be distinguished from the prescribed target $\delta\geq0$ used in the subsequent guarantees. In particular, $\Delta_\chi(f)$ may be negative. An additive shift changes only the location of the two class means relative to threshold zero and leaves their separation $2D(f)$ unchanged. Hence, if $D(f)\leq0$, no additive center can produce a positive two-sided expected margin. The proposition concerns this expected margin criterion and does not imply that centering universally improves classification accuracy. For example, if the score is already the exact equal-prior log-likelihood ratio, zero is the Bayes threshold and a nonzero shift can worsen the resulting decision rule.

\subsection{Proof of Proposition~\ref{prop: P_c_delta}}
\label{appendix: proof approximate ERM lemma}

First, we recall the $\delta$-margin consistent training event associated with condition~\eqref{eq: delta-margin consistent training condition}:
\begin{equation}
    \C_\delta \triangleq \left\{\bmu^+(\widehat{\f})>\delta,\,\bmu^-(\widehat{\f})<-\delta\right\},
\end{equation}
and $P_{c,\delta}\triangleq\P(\C_\delta)$. The proof extends the exact-ERM analysis of \cite{hu2025non-asymptotic}, which in turn uses uniform deviation bounds established in~\cite{bordignon2023learning}. We retain the uniform concentration and margin-to-risk arguments from that analysis, while accounting explicitly for the nonzero empirical optimization gaps of the classifiers returned by the local training algorithms from~\eqref{eq: actual trained f_k}. The key new step is to show that these gaps replace the target risk $\mathsf R^o$ in the exact-ERM argument by the effective risk level $\mathsf R^o+\varepsilon_{\opt}$.

\begin{proof}
\medskip
\noindent\emph{Uniform deviations.}
We introduce the empirical and population surrogate risks for the network as
\begin{align}
\label{eq: network empirical risk appendix}
    \R_{\emp}(f) &\triangleq \sum_{k=1}^K\pi_k\R_{k,\emp}(f_k),\\
    \R(f) &\triangleq \sum_{k=1}^K\pi_k \E\Phi(\bgamma f_k(\h_k)).
    \label{eq: network population risk appendix}
\end{align}
Since $\F=\F_1\times\cdots\times\F_K$ and the population risk separates across agents,
\begin{equation}
    \inf_{f\in\F}\R(f) = \sum_{k=1}^K\pi_k\inf_{f_k\in\F_k}\E_{(\h_k,\bgamma)\sim Q_k}\Phi\!\left(\bgamma f_k(\h_k)\right) =\mathsf R^o
\end{equation}
where $\mathsf R^o$ is defined in~\eqref{eq: expected target risk}. Let
\begin{align}
\label{eq: uniform risk deviation appendix}
    U &\triangleq \sup_{f\in\F}|\R_{\emp}(f)-\R(f)|, \\
    V_\mu &\triangleq \sup_{f\in\F}|\bmu_{\emp}(f)-\mu_{\rm mid}(f)|.
\end{align}
The following two uniform deviation bounds are the concentration results used in the exact-ERM analysis of \cite{hu2025non-asymptotic}, recalled there from Theorem~3 of~\cite{bordignon2023learning}. For every $x>4\rho$,
\begin{equation}
\label{eq: uniform centering bound appendix}
    \P(V_\mu\geq x) \leq \exp\left\{ -\frac{N_{\max}(x-4\rho)^2}{2\alpha^2\beta^2} \right\},
\end{equation}
and, for every $x>4L_\Phi\rho$,
\begin{equation}
\label{eq: uniform risk bound appendix}
    \P(U\geq x) \leq \exp\left\{ -\frac{N_{\max}(x-4L_\Phi\rho)^2}{2\alpha^2\beta^2L_\Phi^2} \right\}.
\end{equation}
We use these established bounds directly.

\medskip
\noindent\emph{Effect of approximate empirical optimization.}
From the definition in~\eqref{eq: local optimization error}, condition~\eqref{eq: approximate ERM condition} is equivalent to
\begin{equation}
\label{eq: local approximate ERM appendix}
    \R_{k,\emp}(\widehat{\f}_k)\leq \inf_{f_k\in\F_k}\R_{k,\emp}(f_k) +\varepsilon_{k,\opt}
\end{equation}
for every agent $k$. Multiplying by $\pi_k$ and summing gives
\begin{equation}
    \R_{\emp}(\widehat{\f}) \leq \sum_{k=1}^K \pi_k\inf_{f_k\in\F_k}\R_{k,\emp}(f_k) +\varepsilon_{\opt}.
\end{equation}
Because the coordinates $f_k$ can be selected independently over the Cartesian product class $\F$, the weighted sum of the local infima is the infimum of the network empirical risk. Hence,
\begin{equation}
\label{eq: network approximate ERM}
    \R_{\emp}(\widehat{\f})\leq \inf_{f\in\F}\R_{\emp}(f) +\varepsilon_{\opt}.
\end{equation}
The infimum defining $\mathsf R^o$ need not be attained. By the definition of $U$, for every $f\in\F$, $\R_{\emp}(f)\leq \R(f)+U$. Hence, using \eqref{eq: network approximate ERM},
\begin{align}
    \R(\widehat{\f}) &\leq \R_{\emp}(\widehat{\f})+U
    \nonumber\\
    &\leq \inf_{f\in\F}\R_{\emp}(f) +\varepsilon_{\opt}+U
    \nonumber\\
    &\leq \inf_{f\in\F}\bigl\{\R(f)+U\bigr\} +\varepsilon_{\opt}+U
    \nonumber\\
    &=\mathsf R^o+\varepsilon_{\opt}+2U
\end{align}
where the last equality follows because $U$ does not depend on $f$ and $\mathsf R^o=\inf_{f\in\F}\R(f)$. Therefore,
\begin{equation}
\label{eq: approximate ERM excess risk}
    \R(\widehat{\f}) \leq \mathsf R^o+\varepsilon_{\opt}+2U.
\end{equation}
For convenience, define the effective risk level
\begin{equation}
\label{eq: effective risk appendix}
    r \triangleq \mathsf R^o+\varepsilon_{\opt}.
\end{equation}
Thus,
\begin{equation}
    \R(\widehat{\f})\leq r+2U.
\end{equation}
This is the step that differs from the exact-ERM analysis, for which $r=\mathsf R^o$.

\medskip
\noindent\emph{Reduction of margin failure to the deviation events.}
By Proposition~\ref{prop: effect of additive centering},
\begin{equation}
\label{eq: centered margin identity appendix}
    \Delta_{\rm cen}(\widehat{\f}) = D(\widehat{\f}) - \Bigl|\bmu_{\emp}(\widehat{\f}) - \mu_{\rm mid}(\widehat{\f})
    \Bigr|.
\end{equation}
Fix any constant $d>\delta$. If $D(\widehat{\f})>d$ and $V_\mu<d-\delta$ hold simultaneously, then
\begin{equation}
    \Delta_{\rm cen}(\widehat{\f})> d-(d-\delta) = \delta.
\end{equation}
Consequently,
\begin{equation}
\label{eq: centered failure inclusion concise}
    \overline{\C_\delta}\subseteq\!\bigl\{V_\mu\geq d-\delta\bigr\}\cup\bigl\{D(\widehat{\f})\leq d\bigr\}.
\end{equation}
We next use the same surrogate risk reduction as in the proof of the exact-ERM result in~\cite{hu2025non-asymptotic}. Under the uniform class prior, convexity of $\Phi$ and Jensen's inequality give
\begin{align}
    \R(f)&=\frac12\sum_{k=1}^K\pi_k\biggl[\E^{(+1)}\Phi(f_k(\h_k))+\E^{(-1)}\Phi(-f_k(\h_k))\biggr]
    \nonumber\\
    &\geq\frac12\Bigl[\Phi\!\left(\bmu_{\rm raw}^{+}(f)\right)+\Phi\!\left(-\bmu_{\rm raw}^{-}(f)\right)\Bigr]
    \nonumber\\
    &\geq\Phi\!\left(\frac{\bmu_{\rm raw}^{+}(f)-\bmu_{\rm raw}^{-}(f)}{2}\right)
    \nonumber\\
    &=\Phi(D(f)).
\label{eq: risk separation Jensen appendix}
\end{align}
Thus, since $\Phi$ is non-increasing, $D(\widehat{\f})\leq d$ implies $\R(\widehat{\f})\geq\Phi(d)$. Combining this implication with \eqref{eq: approximate ERM excess risk} yields
\begin{equation}
\label{eq: separation failure approximate ERM}
    \bigl\{D(\widehat{\f})\leq d\bigr\} \subseteq \left\{U\geq\frac{\Phi(d)-r}{2}\right\}.
\end{equation}
Using \eqref{eq: centered failure inclusion concise} and \eqref{eq: separation failure approximate ERM}, followed by the established deviation bounds in~\eqref{eq: uniform centering bound appendix}--\eqref{eq: uniform risk bound appendix}, gives
\begin{align}
\label{eq: approximate ERM two-term inherited bound}
    \P(\overline{\C_\delta}) &\leq \exp\Biggl\{-\frac{N_{\max}(d-\delta-4\rho)^2}{2\alpha^2\beta^2}\Biggr\}
    \nonumber\\
    &\quad+\exp\Biggl\{-\frac{N_{\max}}{2\alpha^2\beta^2L_\Phi^2}\left(\frac{\Phi(d)-r}{2}-4L_\Phi\rho\right)^2    \Biggr\},
\end{align}
for every $d>\delta$ satisfying $d-\delta>4\rho$ and $(\Phi(d)-r)/2>4L_\Phi\rho$.

\medskip
\noindent\emph{Balancing the two deviations.}
We now use the same balancing construction as in \cite{hu2025non-asymptotic}, with the exact target risk $\mathsf R^o$ replaced by the effective risk $r$ defined in~\eqref{eq: effective risk appendix}. Specifically, choose $d=d_\delta^\star(r)$ from Appendix~\ref{appendix: training auxiliary quantities}, which satisfies
\begin{equation}
    d-\delta = \frac{\Phi(d)-r}{2L_\Phi}.
\end{equation}
Under the conditions of Proposition~\ref{prop: P_c_delta}, this choice is admissible. By \eqref{eq: definition of E_Phi_R_delta}, the two terms in \eqref{eq: approximate ERM two-term inherited bound} are then equal, and
\begin{equation}
    \P(\overline{\C_\delta}) \leq 2\exp\Biggl\{ -\frac{8N_{\max}}{\alpha^2\beta^2}\Bigl(\mathdutchcal E_\Phi(r,\delta)-\rho \Bigr)^2 \Biggr\}.
\end{equation}
Since $P_{c,\delta}=1-\P(\overline{\C_\delta})$ and $r=\mathsf R^o+\varepsilon_{\opt}$, we obtain~\eqref{eq: P_c_delta}. When $\varepsilon_{\opt}=0$, the effective risk reduces to $r=\mathsf R^o$, which recovers the exact-ERM result of \cite{hu2025non-asymptotic}.
\end{proof}

\noindent\emph{Interpretation of the optimization-gap condition.}
Proposition~\ref{prop: P_c_delta} is conditional on the empirical optimality gap bounds in~\eqref{eq: approximate ERM condition} for the score functions $\widehat{\f}_k$ returned by local training. It does not provide an algorithm-specific guarantee that a prescribed $\varepsilon_{k,\opt}$ is attained. In particular, when $\F_k$ is parameterized by a neural network, convexity of the scalar surrogate loss $\Phi$ does not imply convexity of the empirical training objective with respect to the network parameters. Any optimization guarantee for a particular training algorithm is therefore separate from the statistical statement of Proposition~\ref{prop: P_c_delta}. Once such tolerances are available, their effect enters the bound through the quantity $\varepsilon_{\opt}$.

\subsection{Homogeneous Centralized-Training Comparison}
\label{appendix: centralized training comparison}

While the bound in Proposition~\ref{prop: P_c_delta} is derived for independent local training, where each agent trains its own classifier without exchanging model updates, it is instructive to compare this setup against a centralized benchmark. For clarity, we focus on the homogeneous case where all agents share the same feature space, model class, and local population distribution. In the heterogeneous case, different agents may observe different feature distributions or spaces and use distinct architectures, so a single centralized model is not directly comparable without introducing additional modeling choices.

For this comparison, suppose a total of $N_{\sf tot}$ i.i.d.\ training samples from the common local population is available. A centralized learner uses all $N_{\sf tot}$ samples to train one classifier, whereas independent training partitions the same sample budget across $K$ agents. For a symmetric comparison, we consider an even split,
\begin{equation}
    N_k=\frac{N_{\sf tot}}{K}, \quad k=1,\ldots,K.
\end{equation}
We next examine the quantities that enter the bound in~\eqref{eq: P_c_delta}.

(i) \emph{Network complexity $\rho$}: For many standard base models, such as feedforward neural networks with norm-constrained weights and kernel methods over bounded balls in reproducing kernel Hilbert spaces, the Rademacher complexity admits the standard $O(1/\sqrt{m})$ dependence on the training set size $m$~\cite{bordignon2023learning,bartlett2002rademacher,neyshabur2015norm}. For the homogeneous setting with an even split, all local complexities are identical and therefore
\begin{equation}
    \rho^{\sf ind} = \mathfrak R_{N_{\sf tot}/K}(\F), \quad \rho^{\sf cen} = \mathfrak R_{N_{\sf tot}}(\F)
\end{equation}
where $\F$ denotes the common function class. Hence, for model classes whose Rademacher complexity decreases with sample size as above,
\begin{equation}
    \rho^{\sf cen}\leq\rho^{\sf ind}.
\end{equation}

(ii) \emph{Function $\mathdutchcal E_\Phi(\mathsf R^o+\varepsilon_{\opt},\delta)$}: Because the local population distribution and model class are common, the individual target risks are identical, and hence the target risk $\mathsf R^o$ is the same for centralized and independent training. The loss function $\Phi$ is also unchanged. Let $\varepsilon_{\opt}^{\sf cen}$ denote the empirical optimization tolerance of the centralized learner and let
\begin{equation}
    \varepsilon_{\opt}^{\sf ind} \triangleq \sum_{k=1}^K\pi_k\varepsilon_{k,\opt}
\end{equation}
denote the corresponding $\pi$-weighted tolerance for independent training. If the two procedures are compared at matched optimization accuracy,
\begin{equation}
    \varepsilon_{\opt}^{\sf cen} = \varepsilon_{\opt}^{\sf ind} \triangleq\varepsilon,
\end{equation}
then $\mathdutchcal E_\Phi(\mathsf R^o+\varepsilon,\delta)$ is identical in the two bounds. If their optimization tolerances differ, the corresponding effective risk levels must instead be retained separately. The theory does not presume that either training procedure solves its empirical objective more accurately.

(iii) \emph{Admissible margin range}: At matched optimization accuracy, the effective risk $\mathsf R^o+\varepsilon$ and the loss $\Phi$ are the same under the two setups. Consequently, $\delta_{\max}(\mathsf R^o+\varepsilon)$, and hence the admissible range of $\delta$, is also identical.

The common model class also yields the same score bound $\beta$. Suppose now that $\delta$ lies in the common admissible range and that
\begin{equation}
    \rho^{\sf ind} < \mathdutchcal E_\Phi(\mathsf R^o+\varepsilon,\delta).
\end{equation}
Under the dependence of training set size described above, $\rho^{\sf cen}\leq\rho^{\sf ind}$, so Proposition~\ref{prop: P_c_delta} applies to both training setups. For centralized training, taking $K=1$, $N_{\max}=N_{\sf tot}$, and $\alpha=1$ gives
\begin{equation}
    P_{c,\delta}^{\sf cen}\geq 1-2\exp\Biggl\{-\frac{8N_{\sf tot}}{\beta^2}\Bigl(\mathdutchcal E_\Phi(\mathsf R^o+\varepsilon,\delta)-\rho^{\sf cen}\Bigr)^2\Biggr\}.
\end{equation}
For independent training with the even split, $N_{\max}=N_{\sf tot}/K$ and $\alpha=1$, which gives
\begin{equation}
    P_{c,\delta}^{\sf ind}\geq 1-2\exp\Biggl\{-\frac{8N_{\sf tot}}{K\beta^2}\Bigl(\mathdutchcal E_\Phi(\mathsf R^o+\varepsilon,\delta)-\rho^{\sf ind}\Bigr)^2\Biggr\}.
\end{equation}
Therefore, under this homogeneous comparison with matched optimization accuracy, the centralized benchmark benefits from both a larger effective size of training set in the exponent and, for the model classes considered above, a Rademacher complexity that is no larger. Consequently, it yields a tighter lower bound on $P_{c,\delta}$ than that obtained by evenly splitting the same sample budget among independently trained models. This comparison makes explicit a statistical cost associated with independent training.

\section{Sufficient Communication}
\label{appendix: sufficient communication}

This appendix provides the prediction argument underlying the sufficient-communication guarantee in Section~\ref{sec: convergence analysis} of the main paper. We first record the bounded-difference concentration inequality obtained by combining the class-conditional approximate tensorization of entropy (ATE) assumption with the entropy method. We then apply this inequality under each class-conditional testing law $P_\gamma$ and combine the resulting prediction bound with the training guarantee from Proposition~\ref{prop: P_c_delta}. The same ATE concentration tool will also be used in the subsequent prediction analyses of other appendices.

We recall the $\sigma$-field generated by the training data and local training randomness in~\eqref{eq: training sigma field}:
\begin{equation}
\label{eq: training sigma field supplement}
    \Ttrain \triangleq \sigma(\D_1,\ldots,\D_K,\mathcal{S}_1,\ldots,\mathcal{S}_K).
\end{equation}

\subsection{Bounded-Difference Inequality under ATE}
\label{appendix: ATE bounded differences}

We first record the bounded-difference concentration consequence of ATE used throughout the prediction analysis. It follows by applying the standard entropy-method bounded-difference argument of~\cite{boucheron2003entropy} with the approximate tensorization inequality of~\cite{caputo2015tensorization} in place of exact tensorization. We use this established consequence without reproving it.

\begin{inlemma}[\textbf{Bounded differences under ATE}]
\label{lemma: ATE bounded differences}
Let $X=(X_1,\ldots,X_K)$ have law $P$ satisfying ATE with coefficient $\tau$, in the sense of~\eqref{eq: class conditional ATE}. Suppose a bounded measurable function $F$ satisfies
\begin{equation}
    |F(x)-F(x')|\leq d_k
\end{equation}
whenever $x$ and $x'$ differ only in coordinate $k$. Then, for every $\theta\in\mathbb R$,
\begin{equation}
\label{eq: ATE MGF bound}
    \log\E\exp\{\theta(F(X)-\E F(X))\} \leq \frac{\tau\theta^2}{8}\sum_{k=1}^K d_k^2.
\end{equation}
Consequently, if $\sum_{k=1}^K d_k^2>0$, then, for every $u\geq0$,
\begin{equation}
\label{eq: ATE bounded differences tail}
    \P\bigl(F(X)-\E F(X)\leq-u\bigr) \leq \exp\Biggl\{-\frac{2u^2}{\tau\sum_{k=1}^K d_k^2} \Biggr\}.
\end{equation}
The same bound holds for the upper tail. If $\sum_{k=1}^K d_k^2=0$, then $F$ is constant and the concentration statement is trivial.
\end{inlemma}

\subsection{Proof of Theorem~\ref{theorem: classification error}}
\label{appendix: classification error}

Fix a realization of the training phase that belongs to $\C_\delta$. Conditional on $\Ttrain$, the learned score functions and empirical training means are fixed, whereas the fresh testing pair remains independent of the training phase. Conditional on $\bgamma^*=\gamma$, the joint testing observation $\h^*$ therefore has law $P_\gamma$. We bound the prediction error separately under the two class-conditional laws and then average over the uniform class prior. To lighten notation, throughout this proof we write $\widehat f_k$, $c_k$, and $c_\ave$ for the corresponding realized training-dependent quantities; the same convention will be used in the later prediction proofs.
\begin{proof}
Suppose first that $\bgamma^*=+1$. From the definition of the aggregate classifier $c_\ave$ in~\eqref{eq: single-sample ensemble classifier},
\begin{align}
    \E_{\h^*}^{(+1)}c_\ave(\h^*) &=\sum_{k=1}^K\pi_k\E_{\h_k^*}^{(+1)}c_k(\h_k^*)
    \nonumber\\
    &=\mu^{+}(\widehat f) > \delta
    \label{eq: expectation-single-sample case}
\end{align}
where the last inequality follows because the fixed training realization belongs to $\C_\delta$.

Now consider another observation collection $\widehat{\h}$ that differs from $\h^*$ only in coordinate $k$. From \eqref{eq: trained classifier} and Assumption~\ref{assump: bound},
\begin{align}
    \bigl|c_\ave(\h^*)-c_\ave(\widehat{\h})\bigr| &= \pi_k \bigl|c_k(\h_k^*)-c_k(\widehat{\h}_k)\bigr|
    \nonumber\\
    &=\pi_k\bigl|\widehat f_k(\h_k^*) - \widehat f_k(\widehat{\h}_k) \bigr|
    \nonumber\\
    &\leq 2\beta\pi_k.
    \label{eq: bound of variation for each agent-single-sample case}
\end{align}
This reveals that the coordinate oscillations of $c_\ave$ are $d_k=2\beta\pi_k$. Conditional on $\bgamma^*=+1$ and $\Ttrain$, the testing observation has law $P_{+1}$, which satisfies ATE with coefficient $\tau_{+1}$. Hence, following~\eqref{eq: expectation-single-sample case}, Lemma~\ref{lemma: ATE bounded differences} gives
\begin{align}
    &\;\P\Bigl( c_\ave(\h^*)\leq 0 \givenbig \bgamma^*=+1,\Ttrain \Bigr)
    \nonumber\\
    &= \P\Bigl(c_\ave(\h^*) - \E_{\h^*}^{(+1)}c_\ave(\h^*)\leq -\mu^+(\widehat f) \givenbig \bgamma^*=+1,\Ttrain \Bigr)
    \nonumber\\
    &\leq\exp\Biggl\{ -\frac{\delta^2}{2\tau_{+1}\beta^2\sum_{k=1}^K\pi_k^2} \Biggr\}.
    \label{eq: condition probability for single-sample case +1}
\end{align}
The final inequality follows from $\mu^+(\widehat f)>\delta$ on $\C_\delta$. For $\bgamma^*=-1$, we similarly have
\begin{equation}
    \E_{\h^*}^{(-1)}c_\ave(\h^*) = \mu^-(\widehat f) < -\delta
\end{equation}
on $\C_\delta$, and the same coordinate oscillations $d_k=2\beta\pi_k$ apply. Accordingly, we have
\begin{equation}\label{eq: condition probability for single-sample case -1}
    \P\Bigl(c_\ave(\h^*)\geq0\givenbig\bgamma^*=-1,\Ttrain\Bigr) \leq \exp\Biggl\{ -\frac{\delta^2}{2\tau_{-1}\beta^2\sum_{k=1}^K\pi_k^2}\Biggr\}.
\end{equation}
Since $\bgamma^*$ is independent of $\Ttrain$ and has the uniform prior, the following holds
\begin{align}
    \P(\M\givensmall\Ttrain) &= \frac12\P\Bigl(c_\ave(\h^*)\leq0 \givenbig \bgamma^*=+1,\Ttrain\Bigr)
    \nonumber\\
    &\quad+\frac12\P\Bigl(c_\ave(\h^*)\geq0 \givenbig \bgamma^*=-1,\Ttrain \Bigr)
    \nonumber\\
    &\leq \frac12 \sum_{\gamma\in\Gamma}\exp\Biggl\{ -\frac{\delta^2}{2\tau_\gamma\beta^2\sum_{k=1}^K\pi_k^2}\Biggr\}
    \nonumber\\
    &\leq\exp\Biggl\{-\frac{\delta^2}{2\tau_{\max}\beta^2\sum_{k=1}^K\pi_k^2}\Biggr\}
\label{eq: sufficient communication conditional appendix}
\end{align}
almost surely on $\C_\delta$. This proves the conditional bound \eqref{eq: P_e for single-sample case under delta-margin consistent training} in Theorem~\ref{theorem: classification error}. Since $\C_\delta$ is $\Ttrain$-measurable and the preceding conditional bound holds almost surely on $\C_\delta$, we have
\begin{align}
    \E\!\left[\mathsf{1}[\C_\delta]\P(\M\givensmall\Ttrain)\right]
    &\leq \exp\Biggl\{-\frac{\delta^2}{2\tau_{\max}\beta^2\sum_{k=1}^K\pi_k^2}\Biggr\}\P(\C_\delta)
    \nonumber\\
    &\leq\exp\Biggl\{-\frac{\delta^2}{2\tau_{\max}\beta^2\sum_{k=1}^K\pi_k^2}\Biggr\}.
\label{eq: conditional prediction contribution appendix}
\end{align}
Combining~\eqref{eq: conditional prediction contribution appendix} with the decomposition in~\eqref{eq: P_e decomposition} gives
\begin{equation}
\label{eq: sufficient communication unconditional appendix}
    P_e\leq\exp\Biggl\{-\frac{\delta^2}{2\tau_{\max}\beta^2\sum_{k=1}^K\pi_k^2}\Biggr\}+\P(\overline{\C_\delta}).
\end{equation}
If the assumptions of Proposition~\ref{prop: P_c_delta} also hold, substituting the bound on $\P(\overline{\C_\delta})$ from~\eqref{eq: P_c_delta} into \eqref{eq: sufficient communication unconditional appendix} gives
\begin{align}
    P_e &\leq\exp\Biggl\{ -\frac{\delta^2}{2\tau_{\max}\beta^2\sum_{k=1}^K\pi_k^2}\Biggr\}
    \nonumber\\
    &\quad+2\exp\Biggl\{-\frac{8N_{\max}}{\alpha^2\beta^2}\Bigl(\mathdutchcal E_\Phi(\mathsf R^o+\varepsilon_{\opt},\delta)-\rho \Bigr)^2 \Biggr\}.
\label{eq: sufficient communication explicit Pe appendix}
\end{align}
This is the explicit classification error bound under sufficient communication.
\end{proof}

\section{Finite-Round Communication}
\label{appendix: finite round communication}

Appendix~\ref{appendix: sufficient communication} establishes the classification guarantee for the limiting case of sufficient communication. At a finite communication round, a similar concentration argument applies. The proof below reuses the conditioning convention and ATE bounded-difference inequality from Appendix~\ref{appendix: sufficient communication}

\begin{proof}
Fix a realization of the training phase that belongs to $\C_\delta$. As in Appendix~\ref{appendix: classification error}, the learned functions, their empirical training means, and therefore the centered score functions are fixed conditional on $\Ttrain$. We proceed first with the case $\bgamma^*=+1$. From~\eqref{eq: finite round nonpositive margin event}, we have
\begin{equation}
    \P\Bigl(\M_{k,t}\givenbig\bgamma^*=+1,\Ttrain\Bigr)= \P\Bigl(\blambda_{k,t}(\h^*)\leq 0 \givenbig \bgamma^*=+1,\Ttrain\Bigr)
    \label{eq: P_e equation at t-th communication round for case +1}
\end{equation}
where the argument of $\h^*$ in the notation $\blambda_{k,t}(\h^*)$ emphasizes the dependence on the testing observation collection. Consider another observation collection $\widehat{\h}$ that differs from $\h^*$ only in coordinate $\ell$, \eqref{eq: decision statistic at time t} gives
\begin{align}
    \bigl|\blambda_{k,t}(\h^*)-\blambda_{k,t}(\widehat{\h})\bigr|
    &=\Bigl|[A^t]_{\ell k}\bigl(c_\ell(\h_\ell^*)-c_\ell(\widehat{\h}_\ell)\bigr)\Bigr|
    \nonumber\\
    &\leq 2\beta[A^t]_{\ell k}.
    \label{eq: bound of variation for t-th round of single-sample case}
\end{align}
Thus, the coordinate oscillation associated with the $\ell$th testing view is $d_\ell=2\beta[A^t]_{\ell k}$.

We next control the deviation in the class-conditional mean caused by incomplete mixing. Since $|\widehat f_\ell|\leq\beta$ and its empirical training mean lies in $[-\beta,\beta]$, the realized centered class mean satisfies $|\mu_\ell^+(\widehat f_\ell)|\leq2\beta$. Hence, using \eqref{eq: convergence constant main},
\begin{align}
    &\Bigl|\E_{\h^*}^{(+1)}\blambda_{k,t}(\h^*) - \mu^+(\widehat f)\Bigr|
    \nonumber\\
    &=\biggl|\sum_{\ell=1}^K \bigl([A^t]_{\ell k}-\pi_\ell\bigr)\mu_\ell^+(\widehat f_\ell)\biggr|
    \nonumber\\
    &\leq 2\beta\sum_{\ell=1}^K\Bigl|[A^t]_{\ell k}-\pi_\ell\Bigr|
    \nonumber\\
    &\leq2\beta K C(A,\sigma)\sigma^t.
    \label{eq: C3}
\end{align}
Therefore, we have $\mu^+(\widehat f)>\delta$ when $\C_\delta$ is true. Consequently, for every $t$ satisfying $r_t(\delta)>0$, the following holds:
\begin{align}
    \E_{\h^*}^{(+1)}\blambda_{k,t}(\h^*) &>\delta-2\beta K C(A,\sigma)\sigma^t
    \nonumber\\
    &= r_t(\delta)>0.
    \label{eq: C4}
\end{align}
Conditional on $\bgamma^*=+1$ and $\Ttrain$, the testing observation has law $P_{+1}$. Therefore, using the similar arguments used in \eqref{eq: condition probability for single-sample case +1} yields
\begin{equation}
    \P\Bigl(\M_{k,t}\givenbig\bgamma^*=+1,\Ttrain\Bigr) \leq \exp\Biggl\{-\frac{r_t^2(\delta)}{2\tau_{+1}\beta^2
        \sum_{\ell=1}^K\bigl([A^t]_{\ell k}\bigr)^2}\Biggr\}.
    \label{eq: condition probability for t-th round single-sample case +1}
\end{equation}
Repeating the steps in~\eqref{eq: C3}--\eqref{eq: condition probability for t-th round single-sample case +1} for the case $\bgamma^*=-1$, we can show that almost surely on $\C_\delta$,
\begin{equation}
    \P(\M_{k,t}\givensmall\Ttrain) \leq \exp\Biggl\{ -\frac{r_t^2(\delta)}{2\tau_{\max}\beta^2\sum_{\ell=1}^K\bigl([A^t]_{\ell k}\bigr)^2}\Biggr\},
\end{equation}
which proves the bound in~\eqref{eq: P_e for t-th round single-sample case under delta-margin consistent training}. If the conditions of Proposition~\ref{prop: P_c_delta} also hold, substituting the bound on $\P(\overline{\C_\delta})$ from \eqref{eq: P_c_delta} gives the following explicit upper bound for classification after $t$ communication rounds:
\begin{align}
    P_{k,t} &\leq\exp\Biggl\{-\frac{r_t^2(\delta)}{2\tau_{\max}\beta^2\sum_{\ell=1}^K\bigl([A^t]_{\ell k}\bigr)^2}
    \Biggr\}
    \nonumber\\
    &\quad+2\exp\Biggl\{-\frac{8N_{\max}}{\alpha^2\beta^2}\Bigl(\mathdutchcal E_\Phi(\mathsf R^o+\varepsilon_{\opt},\delta) - \rho\Bigr)^2\Biggr\}.
\end{align}
This completes the proof of Theorem~\ref{theorem: finite-time classification error}.
\end{proof}

\section{Finite-Precision Communication}
\label{appendix: finite bit communication}

Appendix~\ref{appendix: finite round communication} establishes the finite-round classification guarantee under exact real-valued message exchange. Theorem~\ref{theorem: finite bit communication} additionally accounts for stochastic finite-precision communication. Relative to the finite-round analysis, two new issues must be controlled: i) quantization errors accumulate as they propagate through the network, and ii) their distributions are input-dependent, so that the resulting errors need not be independent across rounds or from the fresh testing views. We first derive a conditional sub-Gaussian bound for the accumulated quantization perturbation by unrolling the recursion and conditioning on the quantization history. We then combine this bound with the finite-round concentration bound established in Appendix~\ref{appendix: finite round communication}.

\begin{proof}
For $t=0$, no quantization has yet occurred: $x_{k,0}^{(b)}=c_k(\h_k^*)=\blambda_{k,0}$ and $V_{k,0}^{\rm q}=0$. The claimed conditional bound therefore reduces to the corresponding real-valued finite-round bound. We henceforth consider $t\geq1$.

To isolate the effect of stochastic quantization, we condition on the training realization and the fresh testing example, so that only the rounding randomness remains. Let
\begin{equation}
    \x_t^{(b)} \triangleq(x_{1,t}^{(b)},\ldots,x_{K,t}^{(b)})^\top, \quad
    \x_t^{\rm fp} \triangleq (x_{1,t}^{\rm fp},\ldots,x_{K,t}^{\rm fp})^\top
\end{equation}
where $\x_t^{\rm fp}$ corresponds to the matched full-precision recursion obtained from the same initialization by replacing $\mathsf Q_b$ with the identity map. In particular,
\begin{equation}
    x_{k,t}^{\rm fp} = \blambda_{k,t}
\end{equation}
where $\blambda_{k,t}$ is defined in \eqref{eq: decision statistic at time t}. Before analyzing the rounding errors, note that the prescribed quantization range is preserved by the recursion. From~\eqref{eq: centered score quantization range}, $x_{k,0}^{(b)}=c_k(\h_k^*)\in[-2\beta,2\beta]$. The outputs of the quantizer $\mathsf Q_b$ remain in the same interval, and \eqref{eq: finite bit recursion theory} forms each subsequent decision statistic as a convex combination of the transmitted values. Hence, by induction,
\begin{equation}
    x_{k,s}^{(b)}\in[-2\beta,2\beta]  \quad  \text{for every agent $k$ and round $s$}.
\end{equation}
This indicates that the input to the stochastic quantizer remains within its prescribed range at every round.

Define the sender-level rounding errors
\begin{equation}
    \varepsilon_{\ell,s}^{\rm q} \triangleq q_{\ell,s}^{(b)}-x_{\ell,s}^{(b)}, \quad
    \bm{\varepsilon}_s^{\rm q}\triangleq(\varepsilon_{1,s}^{\rm q},\ldots,\varepsilon_{K,s}^{\rm q})^\top.
\end{equation}
Let $W=A^\top$. The finite-precision and matched full-precision recursions satisfy
\begin{equation}
    \x_{s+1}^{(b)} = W\x_s^{(b)} + W\bm{\varepsilon}_s^{\rm q}, \quad
    \x_{s+1}^{\rm fp} = W\x_s^{\rm fp},
\end{equation}
with the same initialization. Expanding the recursion therefore gives
\begin{equation}
\label{eq: finite bit unrolling theory appendix}
    \x_t^{(b)}-\x_t^{\rm fp} = \sum_{s=0}^{t-1} W^{t-s}\bm{\varepsilon}_s^{\rm q}.
\end{equation}
Therefore, a rounding error generated at round $s$ is propagated through $W^{t-s}$ before contributing to the decision statistic at round $t$.

Let $\mathcal Q_s$ denote the $\sigma$-field generated by $\Ttrain$, the fresh testing pair $(\h^*,\bgamma^*)$, and all quantizer randomness through round $s-1$. Conditional on $\mathcal Q_s$, the current decision statistics of each agent and therefore the active quantization bins are fixed. Following the stochastic rounding protocol~\eqref{eq: stochastic rounding protocol}, the sender-level rounding errors at round $s$ are conditionally independent across senders, satisfy
\begin{equation}
    \E\!\left[\varepsilon_{\ell,s}^{\rm q}\givenbig\mathcal Q_s\right] = 0,
\end{equation}
and each takes values in an interval of length at most $\Delta_b$. Hence, for any $\mathcal Q_s$-measurable weights $w_1,\ldots,w_K$, Hoeffding's lemma~\cite{mohri2018foundations} gives, for every $\theta\in\mathbb R$,
\begin{equation}
    \E\!\left[ \exp\!\left( \theta\sum_{\ell=1}^K w_\ell\varepsilon_{\ell,s}^{\rm q} \right) \middle|\mathcal Q_s
    \right]\leq \exp\!\left\{\frac{\theta^2\Delta_b^2}{8}\sum_{\ell=1}^K w_\ell^2 \right\}.
\label{eq: finite bit one round conditional mgf theory appendix}
\end{equation}
Given a fixed terminal pair $(k,t)$, the $k$th coordinate of \eqref{eq: finite bit unrolling theory appendix} can be written as
\begin{equation}
    x_{k,t}^{(b)}-x_{k,t}^{\rm fp} = \sum_{s=0}^{t-1}U_s,  \quad
    U_s \triangleq \sum_{j=1}^K[A^{t-s}]_{jk}\varepsilon_{j,s}^{\rm q}.
\end{equation}
Applying \eqref{eq: finite bit one round conditional mgf theory appendix} with $w_j=[A^{t-s}]_{jk}$ gives
\begin{equation}
    \E\!\left[ e^{\theta U_s}\givenbig\mathcal Q_s\right] \leq\exp\!\Biggl\{\frac{\theta^2\Delta_b^2}{8}
        \sum_{j=1}^K([A^{t-s}]_{jk})^2 \Biggr\}.
\label{eq: finite bit increment conditional mgf appendix}
\end{equation}
It is worth noting that the errors at different rounds need not be independent because the quantization bin at a later round depends on the preceding randomized updates. For this reason, in the following, we accumulate the bounds through repeated conditioning rather than by multiplying unconditional moment generating functions (MGFs). According to the law of total expectation, conditioning first on $\mathcal Q_{t-1}$ gives
\begin{align}
    &\;\E\!\left[\exp\!\left(\theta\sum_{s=0}^{t-1}U_s\right)\middle|\Ttrain,\h^*,\bgamma^*\right]
    \nonumber\\
    &=\E\!\left[\exp\!\left(\theta\sum_{s=0}^{t-2}U_s\right)\E\!\left[e^{\theta U_{t-1}}\givenbig\mathcal Q_{t-1}\right] \middle| \Ttrain,\h^*,\bgamma^* \right]
    \nonumber\\
    &\leq \exp\!\left\{\frac{\theta^2\Delta_b^2}{8} \sum_{j=1}^K([A]_{jk})^2 \right\}\E\!\left[\exp\!\left(\theta\sum_{s=0}^{t-2}U_s \right) \middle| \Ttrain,\h^*,\bgamma^* \right].
\end{align}
Repeating the same conditioning step for $U_{t-2},\ldots,U_0$ yields
\begin{align}
    &\;\E\!\left[\exp\!\left(\theta[x_{k,t}^{(b)}-x_{k,t}^{\rm fp}]\right) \middle| \Ttrain,\h^*,\bgamma^*\right]
    \nonumber\\
    &\leq\exp\!\left\{\frac{\theta^2\Delta_b^2}{8}\sum_{r=1}^{t}\sum_{j=1}^K([A^r]_{jk})^2\right\}
    \nonumber\\
    &=\exp\!\left\{ \frac{\theta^2V_{k,t}^{\rm q}}{2}\right\}
\label{eq: finite bit perturbation mgf theory appendix}
\end{align}
where $V_{k,t}^{\rm q}$ is defined in~\eqref{eq: quantization proxy theory}. This derivation requires conditional independence only among the sender-level rounding draws within each round; temporal independence of the rounding errors is not needed.

We now combine the accumulated quantization perturbation with the fluctuation induced by the fresh testing views. Fix $\gamma\in\Gamma$ and a training realization in $\C_\delta$, and define
\begin{equation}
    \mu_{k,t}^{(\gamma)}\triangleq\E\!\left[\gamma x_{k,t}^{\rm fp} \givenbig \Ttrain,\bgamma^*=\gamma \right],
    \quad Y_{k,t}^{(\gamma)} \triangleq\gamma x_{k,t}^{\rm fp} - \mu_{k,t}^{(\gamma)}.
\end{equation}
Since $x_{k,t}^{\rm fp}=\blambda_{k,t}$, the finite-round mean argument leading to~\eqref{eq: C4} in Appendix~\ref{appendix: finite round communication}, applied to either class, gives
\begin{equation}
\label{eq: finite bit ideal mean lower bound appendix}
    \mu_{k,t}^{(\gamma)} \geq r_t(\delta).
\end{equation}
For fixed $\Ttrain$ and $\bgamma^*=\gamma$, changing only the $\ell$th testing view changes $\gamma x_{k,t}^{\rm fp}$ by at most $2\beta[A^t]_{\ell k}$. Applying the MGF form of Lemma~\ref{lemma: ATE bounded differences}, namely~\eqref{eq: ATE MGF bound}, under $P_\gamma$ therefore gives
\begin{equation}
    \E\!\left[ e^{\theta Y_{k,t}^{(\gamma)}} \givenbig \Ttrain,\bgamma^*=\gamma \right]\leq\exp\!\left\{\frac{\theta^2}{2} \tau_\gamma\beta^2 \sum_{\ell=1}^K([A^t]_{\ell k})^2\right\}
\label{eq: finite bit ATE mgf appendix}
\end{equation}
for any $\theta\in\mathbb R$. Define the signed quantization perturbation:
\begin{equation}
    Z_{k,t}^{\rm q,\gamma}\triangleq \gamma \bigl(x_{k,t}^{(b)}-x_{k,t}^{\rm fp}\bigr).
\end{equation}
Although the quantizer uses fresh auxiliary randomness, the distribution of $Z_{k,t}^{\rm q,\gamma}$ depends on the realized testing example through the input-dependent quantization bins. Thus, $Z_{k,t}^{\rm q,\gamma}$ and $Y_{k,t}^{(\gamma)}$ are not assumed to be independent. Instead, we condition first on the complete fresh testing example. Since $Y_{k,t}^{(\gamma)}$ is then fixed, the tower property and \eqref{eq: finite bit perturbation mgf theory appendix} give
\begin{align}
    &\;\E\!\left[ e^{\theta( Y_{k,t}^{(\gamma)} + Z_{k,t}^{\rm q,\gamma})}\givenbig\Ttrain,\bgamma^*=\gamma\right]
    \nonumber\\
    &=\E\!\left[ e^{\theta Y_{k,t}^{(\gamma)}}\E\!\left[e^{\theta Z_{k,t}^{\rm q,\gamma}} \givenbig\Ttrain,\h^*,\bgamma^*=\gamma\right] \middle| \Ttrain,\bgamma^*=\gamma \right]
    \nonumber\\
    &\leq \exp\!\left\{ \frac{\theta^2V_{k,t}^{\rm q}}{2}\right\}\E\!\left[e^{\theta Y_{k,t}^{(\gamma)}} \givenbig\Ttrain,\bgamma^*=\gamma \right]
    \nonumber\\
    &\leq\exp\!\left\{ \frac{\theta^2}{2}\left[\tau_\gamma\beta^2\sum_{\ell=1}^K([A^t]_{\ell k})^2 + V_{k,t}^{\rm q}\right]\right\}.
\label{eq: finite bit combined mgf appendix}
\end{align}
Therefore, the sub-Gaussian variance proxies associated with the testing views and quantization errors add in the conditional MGF bound. In particular, no unconditional independence between the two fluctuations is required. Suppose $r_t(\delta)>0$. Since
\begin{equation}
    \gamma x_{k,t}^{(b)} = \mu_{k,t}^{(\gamma)} + Y_{k,t}^{(\gamma)} + Z_{k,t}^{\rm q,\gamma},
\end{equation}
the event $\M_{k,t}^{(b)}=\{\bgamma^*x_{k,t}^{(b)}\leq0\}$ defined in~\eqref{eq: M_k_t for finite precision}, conditional on $\bgamma^*=\gamma$, together with \eqref{eq: finite bit ideal mean lower bound appendix}, implies
\begin{equation}
    Y_{k,t}^{(\gamma)} + Z_{k,t}^{\rm q,\gamma} \leq -r_t(\delta).
\end{equation}
Let
\begin{equation}
    V_{k,t}^{(\gamma)} \triangleq \tau_\gamma\beta^2 \sum_{\ell=1}^K([A^t]_{\ell k})^2 + V_{k,t}^{\rm q}.
\end{equation}
For any $\lambda>0$, Chernoff's method and \eqref{eq: finite bit combined mgf appendix} give
\begin{align}
    &\;\P\Bigl(Y_{k,t}^{(\gamma)} + Z_{k,t}^{\rm q,\gamma} \leq -r_t(\delta) \givenbig \Ttrain,\bgamma^*=\gamma\Bigr)
    \nonumber\\
    &\leq\exp\!\left\{ -\lambda r_t(\delta) +\frac{\lambda^2}{2}V_{k,t}^{(\gamma)}\right\}.
\end{align}
Optimizing at $\lambda=r_t(\delta)/V_{k,t}^{(\gamma)}$ yields
\begin{align}
    &\;\P(\M_{k,t}^{(b)}
        \givenbig
        \Ttrain,\bgamma^*=\gamma)
    \nonumber\\
    &\leq
    \exp\!\left\{
        -\frac{r_t^2(\delta)}
        {2\left[
            \tau_\gamma\beta^2
            \sum_{\ell=1}^K([A^t]_{\ell k})^2
            +
            V_{k,t}^{\rm q}
        \right]}
    \right\}.
\label{eq: finite bit class conditional appendix}
\end{align}
Since the fresh label is independent of $\Ttrain$ and has the uniform prior, averaging the two class-conditional bounds gives, almost surely on $\C_\delta$,
\begin{align}
    \P(\M_{k,t}^{(b)}\givenbig\Ttrain)
    &\leq
    \frac12
    \sum_{\gamma\in\Gamma}
    \exp\!\left\{
        -\frac{r_t^2(\delta)}
        {2\left[
            \tau_\gamma\beta^2
            \sum_{\ell=1}^K([A^t]_{\ell k})^2
            +
            V_{k,t}^{\rm q}
        \right]}
    \right\}
    \nonumber\\
    &\leq
    \exp\!\left\{
        -\frac{r_t^2(\delta)}
        {2\left[
            \tau_{\max}\beta^2
            \sum_{\ell=1}^K([A^t]_{\ell k})^2
            +
            V_{k,t}^{\rm q}
        \right]}
    \right\},
\end{align}
which proves~\eqref{eq: finite bit classification theory}. Using the decomposition in \eqref{eq: P_e decomposition} with $\M$ replaced by $\M_{k,t}^{(b)}$, if the conditions of Proposition~\ref{prop: P_c_delta} also hold, substituting its bound on $\P(\overline{\C_\delta})$ gives
\begin{align}
    P_{k,t}^{(b)}
    &\leq
    \exp\!\left\{
        -\frac{r_t^2(\delta)}
        {2\left[
            \tau_{\max}\beta^2
            \sum_{\ell=1}^K([A^t]_{\ell k})^2
            +
            V_{k,t}^{\rm q}
        \right]}
    \right\}
    \nonumber\\
    &\quad+
    2\exp\Biggl\{
        -\frac{8N_{\max}}
        {\alpha^2\beta^2}
        \Bigl(
            \mathdutchcal E_\Phi(
                \mathsf R^o+\varepsilon_{\opt},\delta
            )
            -
            \rho
        \Bigr)^2
    \Biggr\}.
\end{align}
This is the explicit unconditional finite-precision classification bound referred to in Theorem~\ref{theorem: finite bit communication} and completes the proof.
\end{proof}

\section{PAC-Style Generalization Guarantee}
\label{appendix: PAC generalization bound}

This appendix proves Theorem~\ref{theorem: generalization bound for c_k and c_ave} of the main paper. Unlike the prediction guarantees in Appendices~\ref{appendix: sufficient communication}--\ref{appendix: finite bit communication}, the PAC-style result does not rely on the expected margin event $\C_\delta$ or the class-conditional ATE assumption. Its sampling requirement is instead that the network examples $(\h_n,\bgamma_n)$ are i.i.d.\ across $n$, while the $K$ local views within one example may remain statistically dependent.

The main technical issue is the empirical centering operation in~\eqref{eq: trained classifier}, which leads the resulting classifier class to be sample-dependent. The proof handles this issue in three steps. We first bound the sensitivity of the uniform population--empirical risk gap to replacing one network example. We then embed every realized centered aggregate in a fixed offset-augmented function class and control its expected uniform deviation by Rademacher complexity. Finally, McDiarmid's inequality yields a high-probability bound, which is specialized to the $\eta$-margin loss $\Phi_\eta$ in Section~\ref{sec: PAC-style generalization bounds}.

\begin{proof}
We prove the collaborative bound in~\eqref{eq: margin-based generalization bound for c_ave}, with the local bound following from the same argument by taking a single agent. Recall that
\begin{equation}
     \D=\{(\h_n,\bgamma_n)\}_{n=1}^{N}
\end{equation}
contains $N$ i.i.d.\ complete network examples drawn from $Q$, and that $\D_k=\{(\h_{k,n},\bgamma_n)\}_{n=1}^{N}$ is its projection at agent $k$. The proof treats each complete network example as one independent sample coordinate and therefore never requires independence among the local views within an example.

For $f_k\in\F_k$, we define its empirical training mean on $\D_k$ and centered score by
\begin{align}
    \bar f_k(\D_k)
    &\triangleq\frac{1}{N}\sum_{n=1}^{N}f_k(\h_{k,n}),
    \label{eq: empirical mean for agent k}\\
    \c_{f_k,\D_k}(h_k)
    &\triangleq f_k(h_k)-\bar f_k(\D_k).
    \label{eq: centered classifier for agent k}
\end{align}
For $f=(f_1,\ldots,f_K)\in\F$, let
\begin{equation}
    \c_{\ave,f}^{\D}(h)
    \triangleq
    \sum_{k=1}^{K}\pi_k\c_{f_k,\D_k}(h_k),
    \label{eq: definition of c_ave_f}
\end{equation}
and, for a generic $L_\Phi$-Lipschitz loss $\Phi$, define
\begin{align}
    \R_{\emp}^{\D}(f)
    &\triangleq\frac{1}{N}\sum_{n=1}^{N}
    \Phi\!\left(\bgamma_n\c_{\ave,f}^{\D}(\h_n)\right),
    \label{eq: empirical risk of c_ave}\\
    \R^{\D}(f)
    &\triangleq\E_{(\h,\bgamma)\sim Q}
    \Phi\!\left(\bgamma\c_{\ave,f}^{\D}(\h)\right).
    \label{eq: expected risk of c_ave}
\end{align}
The superscript $\D$ emphasizes that, even for fixed $f$, the deployed centered aggregate score associated with $\c_\ave$ depends on the training sample. We therefore study
\begin{equation}
    \Omega(\D)
    \triangleq
    \sup_{f\in\F}
    \left|\R^{\D}(f)-\R_{\emp}^{\D}(f)\right|.
    \label{eq: definition of uniform bound for c_ave}
\end{equation}

\paragraph{Bounded difference of $\Omega(\D)$}
Let $\D'$ be obtained from $\D$ by replacing only the $j$-th complete network example $(\h_j,\bgamma_j)$ by $(\h'_j,\bgamma'_j)$. Using
\begin{equation}
    \Bigl|
        \sup_f|a_f|-\sup_f|b_f|
    \Bigr|
    \leq
    \sup_f|a_f-b_f|
\end{equation}
and the triangle inequality, we obtain
\begin{align}
    \bigl|\Omega(\D)-\Omega(\D')\bigr|
    &\leq
    \sup_{f\in\F}\bigl|\R^{\D}(f)-\R^{\D'}(f)\bigr|
    \nonumber\\
    &\quad+
    \sup_{f\in\F}
    \bigl|\R_{\emp}^{\D}(f)-\R_{\emp}^{\D'}(f)\bigr|.
    \label{eq: risk difference decomposition for c_ave}
\end{align}
Since only one observation changes in each projection $\D_k$ and $|f_k|\leq\beta$ by Assumption~\ref{assump: bound}, one gets from~\eqref{eq: empirical mean for agent k} that
\begin{equation}
    \bigl|\bar f_k(\D_k)-\bar f_k(\D'_k)\bigr|
    \leq\frac{2\beta}{N}.
    \label{eq: center replacement bound appendix}
\end{equation}
For any $f\in\F$ and any $h$, it follows that
\begin{align}
    \bigl|
        \c_{\ave,f}^{\D}(h)
        -
        \c_{\ave,f}^{\D'}(h)
    \bigr|
    &=
    \left|
        \sum_{k=1}^K
        \pi_k
        \bigl(
            \bar f_k(\D'_k)
            -
            \bar f_k(\D_k)
        \bigr)
    \right|
    \nonumber\\
    &\leq
    \sum_{k=1}^K
    \pi_k
    \bigl|
        \bar f_k(\D'_k)
        -
        \bar f_k(\D_k)
    \bigr|
    \nonumber\\
    &\leq
    \frac{2\beta}{N}
    \label{eq: aggregate center replacement bound appendix}
\end{align}
where we used $\sum_{k=1}^K\pi_k=1$. By the triangle inequality and the Lipschitz property of $\Phi$,
\begin{equation}
    \sup_{f\in\F}\bigl|\R^{\D}(f)-\R^{\D'}(f)\bigr|
    \leq\frac{2\beta L_\Phi}{N}.
    \label{eq: bound for term (I)}
\end{equation}
For the empirical risk, we distinguish the unchanged samples from the replaced one. For an unchanged coordinate $n\neq j$, we have $(\h_n,\bgamma_n)=(\h'_n,\bgamma'_n)$, and therefore
\begin{align}
    &\left|
        \bgamma_n\c_{\ave,f}^{\D}(\h_n)
        -
        \bgamma'_n\c_{\ave,f}^{\D'}(\h'_n)
    \right|
    \nonumber\\
    &=
    \left|
        \c_{\ave,f}^{\D}(\h_n)
        -
        \c_{\ave,f}^{\D'}(\h_n)
    \right|
    \nonumber\\&\leq
    \frac{2\beta}{N}.
    \label{eq: unchanged PAC coordinate appendix}
\end{align}
At the replaced coordinate, $|f_k|\leq\beta$ and $|\bar f_k(\D_k)|\leq\beta$ imply
\begin{equation}
    |\c_{\ave,f}^{\D}(h)|\leq2\beta.
    \label{eq: aggregate centered range appendix}
\end{equation}
Hence, we have
\begin{align}
    &\left|
        \bgamma_j\c_{\ave,f}^{\D}(\h_j)
        -
        \bgamma'_j\c_{\ave,f}^{\D'}(\h'_j)
    \right|
    \nonumber\\
    &\leq
    \left|
        \c_{\ave,f}^{\D}(\h_j)
    \right|
    +
    \left|
        \c_{\ave,f}^{\D'}(\h'_j)
    \right|
    \nonumber\\&\leq4\beta.
    \label{eq: replaced PAC coordinate appendix}
\end{align}
Applying the Lipschitz property of $\Phi$ to these two cases yields
\begin{align}
    \bigl|\R_{\emp}^{\D}(f)-\R_{\emp}^{\D'}(f)\bigr|
    &\leq
    \frac{L_\Phi}{N}\biggl[
        (N-1)\frac{2\beta}{N}
        +4\beta
    \biggr]
    \nonumber\\
    &\leq
    \frac{6\beta L_\Phi}{N},
    \label{eq: empirical risk replacement bound appendix}
\end{align}
for every $f\in\F$. Combining the two terms in~\eqref{eq: bound for term (I)} and~\eqref{eq: empirical risk replacement bound appendix} gives
\begin{equation}
    \bigl|\Omega(\D)-\Omega(\D')\bigr|
    \leq\frac{8\beta L_\Phi}{N}.
    \label{eq: bounded difference for c_ave}
\end{equation}
This replacement argument uses only bounded local scores and independence across complete network examples. No factorization of the joint law $Q$ across agents is required, which implies that the conditional independence assumption of the local views across agents is not required.

\paragraph{Expectation of $\Omega(\D)$}
A direct symmetrization of the function class consisting of $\c_{\ave,f}^{\D}$, $\forall f\in\F$ would incorrectly treat a sample-dependent class as fixed. To address this issue, we instead embed every realized centered classifier $\c_{\ave,f}^{\D}$ in a \emph{deterministic} augmented class. Define
\begin{equation}
    a_f(\D)
    \triangleq
    \sum_{k=1}^{K}\pi_k\bar f_k(\D_k).
    \label{eq: aggregate empirical offset appendix}
\end{equation}
Since $|a_f(\D)|\leq\beta$ and
\begin{equation}
    \c_{\ave,f}^{\D}(h)
    =
    f_{\ave}(h)-a_f(\D),
    \quad
    f_{\ave}(h)
    \triangleq
    \sum_{k=1}^{K}\pi_k f_k(h_k),
\end{equation}
every realized centered classifier belongs to the fixed class
\begin{equation}
    \mathcal G_{\mathrm{off}}
    \triangleq
    \Bigl\{h\mapsto f_{\ave}(h)-a:
      f\in\F,\ a\in[-\beta,\beta]\Bigr\}.
    \label{eq: augmented offset class appendix}
\end{equation}
Define the corresponding fixed-class uniform deviation
\begin{equation}
    \Delta_{\mathrm{off}}(\D)
    \triangleq
    \sup_{g\in\mathcal G_{\mathrm{off}}}
    \biggl|
        \E\Phi(\bgamma g(\h))
        -
        \frac{1}{N}\sum_{n=1}^{N}
        \Phi(\bgamma_n g(\h_n))
    \biggr|.
    \label{eq: offset uniform deviation appendix}
\end{equation}
Then
\begin{equation}
    \Omega(\D)
    \leq
    \Delta_{\mathrm{off}}(\D).
    \label{eq: omega embedded in offset class appendix}
\end{equation}
To apply the contraction inequality in its standard form, define
\begin{equation}
    \widetilde{\Phi}(u)
    \triangleq
    \Phi(u)-\Phi(0).
    \label{eq: centered loss for contraction appendix}
\end{equation}
Then $\widetilde{\Phi}(0)=0$, $\widetilde{\Phi}$ remains $L_\Phi$-Lipschitz, and replacing $\Phi$ by $\widetilde{\Phi}$ does not change any population--empirical risk difference. Let
\begin{equation}
    \D^{\rm g}
    \triangleq
    \bigl\{(\h_n^{\rm g},\bgamma_n^{\rm g})\bigr\}_{n=1}^{N}
\end{equation}
be an independent ghost sample drawn from the same joint distribution $Q$. Since $\mathcal G_{\mathrm{off}}$ is independent of both $\D$ and $\D^{\rm g}$, Jensen's inequality gives
\begin{align}
    &\;\E_{\D}\Delta_{\mathrm{off}}(\D)\nonumber\\
    &=
    \E_{\D}
    \sup_{g\in\mathcal G_{\mathrm{off}}}
    \left|
        \E_{\D^{\rm g}}
        \frac{1}{N}\sum_{n=1}^{N}
        \left[
            \widetilde{\Phi}
            \!\left(
                \bgamma_n^{\rm g}g(\h_n^{\rm g})
            \right)
            -
            \widetilde{\Phi}
            \!\left(
                \bgamma_n g(\h_n)
            \right)
        \right]
    \right|
    \nonumber\\
    &\leq
    \E_{\D,\D^{\rm g}}
    \sup_{g\in\mathcal G_{\mathrm{off}}}
    \left|
        \frac{1}{N}\sum_{n=1}^{N}
        \left[
            \widetilde{\Phi}
            \!\left(
                \bgamma_n^{\rm g}g(\h_n^{\rm g})
            \right)
            -
            \widetilde{\Phi}
            \!\left(
                \bgamma_n g(\h_n)
            \right)
        \right]
    \right|.
    \label{eq: ghost sample symmetrization appendix}
\end{align}
Introducing $N$ independent Rademacher variables $\bm r_1,\ldots,\bm r_N$ and using the usual symmetrization argument, we obtain
\begin{align}
    \E\Omega(\D)
    &\leq
    \E\Delta_{\mathrm{off}}(\D)
    \nonumber\\
    &\leq
    2\E_{\D,\bm r}
    \sup_{g\in\mathcal G_{\mathrm{off}}}
    \left|
        \frac{1}{N}\sum_{n=1}^{N}
        \bm r_n
        \widetilde{\Phi}
        \!\left(\bgamma_n g(\h_n)\right)
    \right|
    \nonumber\\
    &\leq
    4L_\Phi
    \E_{\D,\bm r}
    \sup_{g\in\mathcal G_{\mathrm{off}}}
    \left|
        \frac{1}{N}\sum_{n=1}^{N}
        \bm r_n\bgamma_n g(\h_n)
    \right|
    \label{eq: augmented class symmetrization}
\end{align}
where the last step follows from the contraction inequality for Rademacher averages~\cite{bartlett2002rademacher,mohri2018foundations}. For $g(h)=f_{\ave}(h)-a\in\mathcal G_{\mathrm{off}}$, using the definition of $f_\ave$ in~\eqref{eq: f_ave}, we further have
\begin{align}
    &\sup_{g\in\mathcal G_{\mathrm{off}}}
    \left|
        \frac{1}{N}\sum_{n=1}^{N}
        \bm r_n\bgamma_n g(\h_n)
    \right|
    \nonumber\\
    &\leq
    \sum_{k=1}^K\pi_k
    \sup_{f_k\in\F_k}
    \left|
        \frac{1}{N}\sum_{n=1}^{N}
        \bm r_n\bgamma_n f_k(\h_{k,n})
    \right|
    \nonumber\\
    &\quad+
    \beta
    \left|
        \frac{1}{N}\sum_{n=1}^{N}
        \bm r_n\bgamma_n
    \right|.
    \label{eq: offset complexity decomposition appendix}
\end{align}
Conditional on the sample set $\D$, the variables $\bm r_n\bgamma_n$ are again independent Rademacher signs. Therefore, using the definition $\rho=\sum_k\pi_k\rho_k$ from Appendix~\ref{appendix: training auxiliary quantities},
\begin{align}
    &\;\E_{\D,\bm r}
    \sup_{g\in\mathcal G_{\mathrm{off}}}
    \left|
        \frac{1}{N}\sum_{n=1}^{N}
        \bm r_n\bgamma_n g(\h_n)
    \right|
    \nonumber\\
    &\leq
    \rho
    +
    \beta\E_{\bm r}
    \left|
        \frac{1}{N}\sum_{n=1}^{N}\bm r_n
    \right|
    \nonumber\\
    &\leq
    \rho+\frac{\beta}{\sqrt N}
    \label{eq: augmented offset complexity}
\end{align}
Thus,
\begin{equation}
    \E\Omega(\D)
    \leq
    4L_\Phi\rho
    +
    \frac{4\beta L_\Phi}{\sqrt N}.
    \label{eq: bound for the expectation for c_ave}
\end{equation}

\paragraph{High-probability bound}
Applying McDiarmid's inequality~\cite{mcdiarmid1989on} to \eqref{eq: bounded difference for c_ave} gives, for every $x>0$,
\begin{equation}
    \P\bigl(\Omega(\D)-\E\Omega(\D)\geq x\bigr)
    \leq
    \exp\left\{
        -\frac{Nx^2}
        {32\beta^2L_\Phi^2}
    \right\}.
    \label{eq: McDiarmid inequality for c_ave under general Phi}
\end{equation}
Taking
\begin{equation}
    x=
    4\beta L_\Phi
    \sqrt{\frac{2\log(1/\epsilon)}{N}}
\end{equation}
and combining with \eqref{eq: bound for the expectation for c_ave}, we obtain, with probability at least $1-\epsilon$, uniformly for all $f\in\F$,
\begin{align}
    \bigl|\R^{\D}(f)-\R_{\emp}^{\D}(f)\bigr|
    &\leq
    4L_\Phi\rho
    +
    \frac{4\beta L_\Phi}{\sqrt N}
    \nonumber\\
    &\quad+
    4\beta L_\Phi
    \sqrt{\frac{2\log(1/\epsilon)}{N}}.
    \label{eq: margin bound for c_ave under general Phi}
\end{align}
We now take $\Phi=\Phi_\eta$, the $\eta$-margin loss in \eqref{eq: definition of eta-margin loss function}, for which $L_\Phi=1/\eta$. Using the definition of $\mathcal R(\epsilon,N)$ in~\eqref{eq: definition of R(e,N)}, the preceding display becomes
\begin{equation}
    \bigl|\R^{\D}(f)-\R_{\emp}^{\D}(f)\bigr|
    \leq
    \frac{4\rho}{\eta}
    +
    \mathcal R(\epsilon,N).
    \label{eq: margin bound for Phi_eta}
\end{equation}
For a given training set $\D$, $\R^{\D}(f)$ is the expected $\eta$-margin loss of the deployed classifier $\c_{\ave,f}^{\D}$, while $\R_{\emp}^{\D}(f)$ denotes its empirical value. Hence~\eqref{eq: property of eta-margin loss} gives
\begin{equation}
    P_e(\c_{\ave,f}^{\D})
    \leq
    P_{e,\emp}^{\eta}(\c_{\ave,f}^{\D})
    +
    \frac{4\rho}{\eta}
    +
    \mathcal R(\epsilon,N).
    \label{eq: PAC bound for general f}
\end{equation}
The high-probability event is uniform over $f\in\F$ and depends only on the aligned network training set $\D$. Therefore, given a fixed $\D$, \eqref{eq: PAC bound for general f} may be evaluated at any data-dependent outputs $\widehat{\f}_k=\mathcal A_k(\D_k,\mathcal{S}_k)\in\F_k$, for every realization of the local training randomness, without any empirical optimality condition. This establishes \eqref{eq: margin-based generalization bound for c_ave}. Applying the same argument to the projected sample $\D_k$ for a fixed agent $k$ gives~\eqref{eq: margin-based generalization bound for c_k} and completes the proof.
\end{proof}

\noindent\emph{Remark on a refinement.}
The preceding proof is formulated for a generic $L_\Phi$-Lipschitz loss and controls the effect of replacing one complete network example through the bounded score range. This produces the factor $\beta L_\Phi$ in the confidence term. For the $\eta$-margin loss $\Phi_\eta$, a sharper concentration argument is available because $0\leq\Phi_\eta\leq1$.

Specifically, after embedding the empirically centered classifiers in the fixed offset-augmented class $\mathcal G_{\mathrm{off}}$, define the one-sided population--empirical deviation
\begin{equation}
    \Delta_{\mathrm{off},+}(\D)
    \triangleq
    \sup_{g\in\mathcal G_{\mathrm{off}}}
    \left\{
        \E\Phi_\eta(\bgamma g(\h))
        -
        \frac{1}{N}\sum_{n=1}^{N}
        \Phi_\eta(\bgamma_n g(\h_n))
    \right\}.
    \label{eq: one sided offset deviation appendix}
\end{equation}
Standard one-sided symmetrization and the non-absolute contraction inequality in~\cite{mohri2018foundations} give
\begin{equation}
    \E\Delta_{\mathrm{off},+}(\D)
    \leq
    \frac{2}{\eta}
    \left(
        \rho+\frac{\beta}{\sqrt N}
    \right).
\end{equation}
Moreover, replacing one complete network example changes this fixed-class uniform deviation by at most $1/N$, since the $\eta$-margin loss takes values in $[0,1]$. Thus, the bounded-difference constant for $\Delta_{\mathrm{off},+}(\D)$ is $1/N$, and McDiarmid's inequality yields the alternative confidence term
\begin{equation}
    \sqrt{\frac{\log(1/\epsilon)}{2N}}.
\end{equation}
Consequently, the collaborative bound associated with the $\eta$-margin loss can be refined to
\begin{equation}
    P_e(\c_\ave)
    \leq
    P_{e,\emp}^{\eta}(\c_\ave)
    +
    \frac{2\rho}{\eta}
    +
    \frac{2\beta}{\eta\sqrt N}
    +
    \sqrt{\frac{\log(1/\epsilon)}{2N}}.
\end{equation}
The analogous refinement for a local classifier is
\begin{equation}
    P_e(\c_k)
    \leq
    P_{e,\emp}^{\eta}(\c_k)
    +
    \frac{2\rho_k}{\eta}
    +
    \frac{2\beta}{\eta\sqrt N}
    +
    \sqrt{\frac{\log(1/\epsilon)}{2N}}.
\end{equation}
These refinements use the boundedness of $\Phi_\eta$ and a one-sided fixed-class argument. Theorem~\ref{theorem: generalization bound for c_k and c_ave} retains the bound obtained from the generic Lipschitz-loss argument above.

\section{Relation to Social Learning}
\label{appendix: classification error using SL rule}

This appendix returns to the prediction dynamics and relates the DeGroot collaboration rule~\eqref{eq: distributed learning rule} used in the main paper to social learning (SL) based on geometric pooling. We will first formulate the corresponding static-observation specialization. Then, we establish its finite-round classification guarantee, and compare the two update mechanisms.

\subsection{Static Observation Specialization}

A defining feature of the DeGroot model~\eqref{eq: distributed learning rule} is that agents update their opinions $\blambda_{k,t}$ solely by averaging those of their neighbors, without acquiring new observations over time. In contrast, SL rules are designed for inference in \emph{streaming} environments, where agents continuously receive new observations and combine them with information aggregated from their neighbors. To connect these two inference mechanisms, we consider the geometric-pooling SL mechanism~\cite{lalitha2018social}, specialized to the learned local decision statistics used in social machine learning~\cite{bordignon2023learning,hu2025non-asymptotic}:
\begin{equation}\label{eq: geometric Sl rule}
    \blambda_{k,t}^{\sf SL}
    =
    \sum_{\ell=1}^{K}a_{\ell k}
    \bigl(
        \blambda_{\ell,t-1}^{\sf SL}
        +
        \c_\ell(\h_{\ell,t})
    \bigr)
\end{equation}
where $\h_{\ell,t}$ denotes the private observation received by agent $\ell$ at round $t$. The linear recursion in~\eqref{eq: geometric Sl rule} is the log-domain form of the corresponding geometric SL update. Unrolling it gives
\begin{equation}\label{eq: SL recursion}
    \blambda^{\sf SL}_{k,t}
    =
    \sum_{\ell=1}^K[A^{t}]_{\ell k}\blambda^{\sf SL}_{\ell,0}
    +
    \sum_{i=1}^t\sum_{\ell=1}^K
    [A^{t+1-i}]_{\ell k}\c_\ell(\h_{\ell,i}).
\end{equation}
We assume no prior information about the class label, so that $\blambda^{\sf SL}_{k,0}=0$. In the static observation specialization, $\h_{\ell,t}=\h_\ell^*$ for every $t$, and therefore
\begin{equation}\label{eq: decision statistic at time t under SL}
    \blambda_{k,t}^{\sf SL}
    =
    \sum_{\ell=1}^K\sum_{i=1}^{t}
    [A^i]_{\ell k}\c_\ell(\h_\ell^*).
\end{equation}
As the same testing example is reused at every round, the repeated terms in~\eqref{eq: decision statistic at time t under SL} do not represent new temporal evidence. Instead, they repeatedly accumulate the statistics associated with the same $K$ local views.

Since $[A^i]_{\ell k}\to\pi_\ell$, its Ces\`aro averages satisfy
\begin{align}
\nonumber
    \lim_{t\to\infty}\frac{1}{t}\blambda_{k,t}^{\sf SL}
    &=
    \sum_{\ell=1}^K
    \lim_{t\to\infty}
    \frac{1}{t}\sum_{i=1}^{t}
    [A^i]_{\ell k}\c_\ell(\h_\ell^*)\\
    \label{eq: convergence of decision statistic under SL rule}
    &=
    \sum_{\ell=1}^K\pi_\ell \c_\ell(\h_\ell^*)
    =
    \c_\ave(\h^*).
\end{align}
Hence, whenever $\c_\ave(\h^*)\neq0$, the prediction of every agent eventually agrees with the collective prediction based on the aggregate classifier $\c_\ave$. The sufficient-communication guarantee of Theorem~\ref{theorem: classification error} therefore applies to the limiting SL prediction as well. For the finite-round analysis, we define
\begin{equation}
    \widehat{\bgamma}_{k,t}^{\sf SL}
    =
    \sign(\blambda_{k,t}^{\sf SL}),
    \quad
    \M_{k,t}^{\sf SL}
    =
    \{\bgamma^*\blambda_{k,t}^{\sf SL}\leq0\}.
\end{equation}
Similar to Theorems~\ref{theorem: classification error}--\ref{theorem: finite bit communication}, we analyze the conditional probability $\P[\M_{k,t}^{\sf SL}\givensmall\Ttrain]$. To quantify the cumulative deviation of the finite-round weights from the Perron vector, define
\begin{equation}\label{eq: cumulative mixing constant}
    \Bmix
    \triangleq
    \max_{k\in\K}
    \sum_{i=1}^{\infty}
    \sum_{\ell=1}^K
    |[A^i]_{\ell k}-\pi_\ell|,
\end{equation}
which is finite under Assumption~\ref{assump: network}. For every $\sigma\in(\sigma_A,1)$ and corresponding $C(A,\sigma)$ in~\eqref{eq: convergence constant main}, it holds that
\begin{equation}\label{eq: cumulative mixing constant upper bound}
    \Bmix
    \leq
    \frac{K C(A,\sigma)\sigma}{1-\sigma}.
\end{equation}
With this definition, we establish the classification guarantee for the SL rule.
\begin{proposition}[\textbf{Static-observation SL classification guarantee}]
\label{prop: classification error using SL rule}
Suppose Assumptions~\ref{assump: bound},~\ref{assump: network}, and~\ref{assump: class conditional ATE} hold, and let $\delta>0$. Define
\begin{equation}\label{eq: kappa}
    \kappa
    \triangleq
    2\beta\Bmix.
\end{equation}
Almost surely on $\C_\delta$, for every agent $k\in\K$ and every integer $t>\kappa/\delta$,
\begin{align}
    \P(\M_{k,t}^{\sf SL}\given\Ttrain)
    \leq
    \exp\Biggl\{
        -\frac{(t\delta-\kappa)^2}
        {2\tau_{\max}\beta^2
        \sum_{\ell=1}^K
        \bigl(
            \sum_{i=1}^t[A^i]_{\ell k}
        \bigr)^2}
    \Biggr\}.
    \label{eq: P_e for t-th round using SL rule}
\end{align}
This bound converges to the sufficient-communication bound for $\c_\ave$ as $t\to\infty$.
\end{proposition}

\subsection{Proof of Proposition~\ref{prop: classification error using SL rule}}

\begin{proof}
The proof reuses the conditioning convention and ATE bounded-difference inequality from Appendix~\ref{appendix: sufficient communication}. Relative to the finite-round DeGroot analysis presented in Appendix~\ref{appendix: finite round communication}, the only change is that the SL statistic uses the cumulative weights $\sum_{i=1}^{t}[A^i]_{\ell k}$ rather than the single-round weights $[A^t]_{\ell k}$.

Fix a realization of the training phase in $\C_\delta$. Conditional on $\Ttrain$, the trained classifiers are fixed, while the fresh testing vector follows $P_\gamma$ under $\bgamma^*=\gamma$. Since the same testing view is reused at every round, its repeated appearances are not treated as independent coordinates. Instead, changing view $\ell$ changes all of its accumulated contributions simultaneously. From~\eqref{eq: decision statistic at time t under SL}, changing only coordinate $\ell$ of the testing vector changes the statistic by at most
\begin{align}
\nonumber
    \Bigl|
        \blambda_{k,t}^{\sf SL}(\h^*)
        -
        \blambda_{k,t}^{\sf SL}(\widehat{\h})
    \Bigr|
    &\leq
    \sum_{i=1}^{t}[A^i]_{\ell k}
    \bigl|
        c_\ell(\h_\ell^*)
        -
        c_\ell(\widehat{\h}_\ell)
    \bigr|\\
    &\leq
    2\beta
    \sum_{i=1}^{t}[A^i]_{\ell k}.
\end{align}
Thus Lemma~\ref{lemma: ATE bounded differences} applies with coordinate oscillations
\begin{equation}
    d_{\ell,k,t}^{\sf SL}
    =
    2\beta
    \sum_{i=1}^{t}[A^i]_{\ell k}.
\end{equation}
We next control the class-conditional mean. For either $\gamma\in\Gamma$,
\begin{align}
\nonumber
    &\;\Bigl|
        \E_{\h^*}^{(\gamma)}
        \blambda_{k,t}^{\sf SL}(\h^*)
        -
        t\mu^\gamma(\widehat f)
    \Bigr|\\
\nonumber
    &=
    \biggl|
        \sum_{i=1}^{t}
        \sum_{\ell=1}^K
        \bigl(
            [A^i]_{\ell k}-\pi_\ell
        \bigr)
        \mu^\gamma_\ell(\widehat f_\ell)
    \biggr|\\
     \nonumber
    &\leq
    2\beta
    \sum_{i=1}^{t}
    \sum_{\ell=1}^K
    \bigl|
        [A^i]_{\ell k}-\pi_\ell
    \bigr|
    \\
    & \leq
    2\beta\Bmix
    =
    \kappa
    \label{eq: SL classification mean appendix}
\end{align}
where $|\mu^\gamma_\ell(\widehat f_\ell)|\leq2\beta$ follows from $|c_\ell|\leq2\beta$. For $\bgamma^*=+1$, the event $\C_\delta$ implies $\mu^+(\widehat f)>\delta$, so for $t>\kappa/\delta$,
\begin{equation}
    \E_{\h^*}^{(+1)}
    \blambda_{k,t}^{\sf SL}(\h^*)
    >
    t\delta-\kappa
    >
    0.
\end{equation}
The lower-tail form of Lemma~\ref{lemma: ATE bounded differences} therefore gives
\begin{align}
    &\;\P\bigl(
        \M_{k,t}^{\sf SL}
        \givenbig
        \bgamma^*=+1,\Ttrain
    \bigr)
    \nonumber\\
    &\leq
    \exp\Biggl\{
        -\frac{(t\delta-\kappa)^2}
        {2\tau_{+1}\beta^2
        \sum_{\ell=1}^K
        \bigl(
            \sum_{i=1}^{t}[A^i]_{\ell k}
        \bigr)^2}
    \Biggr\}.
\label{eq: condition probability for t-th round single-sample case +1---using SL}
\end{align}
A similar bound is obtained for the case $\bgamma^*=-1$ by replacing $\tau_{+1}$ with $\tau_{-1}$. The inequality in~\eqref{eq: P_e for t-th round using SL rule} is then established using the definition of $\tau_{\max}$ in~\eqref{eq: tau_max}. Finally,
\begin{equation}
    \lim_{t\to\infty}
    \frac{1}{t}
    \sum_{i=1}^{t}[A^i]_{\ell k}
    =
    \pi_\ell,
\end{equation}
which implies
\begin{equation}
    \lim_{t\to\infty}
    \frac{1}{t^2}
    \sum_{\ell=1}^K
    \biggl(
        \sum_{i=1}^{t}[A^i]_{\ell k}
    \biggr)^2
    =
    \sum_{\ell=1}^K\pi_\ell^2.
\end{equation}
Since $(t\delta-\kappa)^2/t^2\to\delta^2$, we obtain
\begin{equation}\label{eq: convergence to the bound for c_ave under SL}
    \lim_{t\to\infty}
    \frac{(t\delta-\kappa)^2}
    {2\tau_{\max}\beta^2
    \sum_{\ell=1}^K
    \bigl(
        \sum_{i=1}^{t}[A^i]_{\ell k}
    \bigr)^2}
    =
    \frac{\delta^2}
    {2\tau_{\max}\beta^2
    \sum_{\ell=1}^K\pi_\ell^2},
\end{equation}
which is the exponent that appears in the bound~\eqref{eq: P_e for single-sample case under delta-margin consistent training} for the aggregate classifier $\c_\ave$ under sufficient communication. This completes the proof.
\end{proof}

\subsection{Comparison with the DeGroot Rule}

By Jensen's inequality, the following holds:
\begin{align}
\nonumber
    \sum_{\ell=1}^K
    \biggl(
        \sum_{i=1}^{t}[A^i]_{\ell k}
    \biggr)^2
    &\leq
    t
    \sum_{\ell=1}^K
    \sum_{i=1}^{t}
    \bigl([A^i]_{\ell k}\bigr)^2\\
    &\leq
    t
    \sum_{\ell=1}^K
    \sum_{i=1}^{t}
    [A^i]_{\ell k}
    =
    t^2
\end{align}
where the last equality uses left-stochasticity of $A$. Hence, for every agent,
\begin{equation}
    \P\bigl(\M_{k,t}^{\sf SL}\given\Ttrain\bigr)
    \leq
    \exp\Biggl\{
        -\frac{(t\delta-\kappa)^2}
        {2\tau_{\max}\beta^2t^2}
    \Biggr\}
\end{equation}
almost surely on $\C_\delta$ whenever $t>\kappa/\delta$.

The finite-round DeGroot and static observation SL statistics are generally different: the former uses the weights $[A^t]_{\ell k}$ in~\eqref{eq: decision statistic at time t}, whereas the latter accumulates $\sum_{i=1}^{t}[A^i]_{\ell k}$ in~\eqref{eq: decision statistic at time t under SL}. Accordingly, neither finite-round analytical bound is asserted to vary monotonically with $t$, because its denominator depends on the corresponding finite-round mixing weights. Nevertheless,~\eqref{eq: common decision statistic} and~\eqref{eq: convergence of decision statistic under SL rule} show that, whenever $\c_\ave(\h^*)\neq0$, both mechanisms eventually yield the same collective prediction $\widehat{\bgamma}$ in~\eqref{eq: label of single-sample classification}. Their error bounds likewise converge to the sufficient-communication bound.

A particularly transparent connection arises for the fully connected consensus matrix $A=\pi\mathbbm{1}^\top$. In this case, $A^m=A$ for every integer $m\geq 1$, and for every $t\geq1$,
\begin{equation}
    \blambda_{k,t}
    =
    \sum_{\ell=1}^K
    \pi_\ell\c_\ell(\h_\ell^*)
    =
    \c_\ave(\h^*),
\end{equation}
whereas
\begin{equation}
    \blambda_{k,t}^{\sf SL}
    =
    t
    \sum_{\ell=1}^K
    \pi_\ell\c_\ell(\h_\ell^*)
    =
    t\c_\ave(\h^*).
\end{equation}
Therefore, the two statistics have the same sign and produce the same prediction after one communication round. Moreover, exact mixing is achieved after one communication round, since
\begin{equation}
    [A^t]_{\ell k}=\pi_\ell,
    \quad
    t\geq1.
\end{equation}
Hence, for $t\geq1$, using the exact identity $[A^t]_{\ell k}=\pi_\ell$ directly in the finite-round DeGroot argument makes the transient mean error identically zero, and the resulting specialized concentration bound coincides with the sufficient-communication bound. For the SL rule, $\Bmix=0$ and hence $\kappa=0$, while
\begin{equation}
    \sum_{i=1}^{t}[A^i]_{\ell k}
    =
    t\pi_\ell.
\end{equation}
Its finite-round bound therefore reduces to the same sufficient-communication bound for every $t\geq1$. Finally, the two recursions use different initial conditions because they represent different inference semantics. For SL, $\blambda_{k,0}^{\sf SL}=0$ is the state before the first observation is incorporated. For the DeGroot rule, $\blambda_{k,0}=\c_k(\h_k^*)$ is the local decision statistic available before communication. This discrepancy at $t=0$ does not affect the preceding finite-round or limiting comparison.

\section{Raw-Score Counterparts}
\label{appendix: raw score counterparts}

The main analysis uses the empirically centered local scores $\c_k$ in~\eqref{eq: trained classifier}. As discussed in Proposition~\ref{prop: effect of additive centering}, empirical centering acts as a threshold alignment mechanism rather than a structural requirement for collaboration. This appendix provides the corresponding results when the learned scores $\widehat{\f}_k$ are used directly, without empirical centering.

The raw aggregate score $f_{\ave}$ and its class-conditional means $\bmu_{\rm raw}^{\gamma}$ are already defined in \eqref{eq: raw aggregate classifier}. Moreover, the raw expected zero-threshold margin is $\Delta_{\rm raw}(f)$ in~\eqref{eq: raw margin}. Since the arguments closely parallel those developed for the centered scores $c_{\ave}$ in Appendices~\ref{appendix: learning complexity definitions}--\ref{appendix: classification error using SL rule}, we state only the changes needed for the raw-score counterparts.

\subsection{Raw-Score Training Guarantee}
\label{appendix: raw margin training proof}

For $0\leq\delta<\beta$ and a generic risk level $r$, we define
\begin{equation}
\label{eq: raw E feasibility quantity}
    \mathdutchcal E_\Phi^{\mathrm{raw}}(r,\delta)
    \triangleq
    \frac{
        \Phi(\delta)+\Phi(\beta)-2r
    }{16L_\Phi}.
\end{equation}
For the learned functions $\widehat{\f}=\{\widehat{\f}_1,\dots,\widehat{\f}_K\}$, we denote the raw-score $\delta$-margin training event by
\begin{align}
\label{eq: raw training event}
    \C_\delta^{\mathrm{raw}}
    &\triangleq
    \bigl\{
        \Delta_{\rm raw}(\widehat{\f})>\delta
    \bigr\}
    \nonumber\\
    &=
    \bigl\{
        \bmu_{\rm raw}^{+}(\widehat{\f})>\delta,\,
        \bmu_{\rm raw}^{-}(\widehat{\f})<-\delta
    \bigr\}
\end{align}
where the equality follows from~\eqref{eq: mu_mid},~\eqref{eq: raw midpoint and signed half difference} and \eqref{eq: raw margin}. This is the raw-score counterpart of the centered condition \eqref{eq: delta-margin consistent training condition}. Similar to Proposition~\ref{prop: P_c_delta}, we can establish a bound on the achievability of~\eqref{eq: raw training event} as follows.

\begin{corollary}[\textbf{Raw-score $\delta$-margin consistent training}]
\label{corollary: raw margin training}
Suppose Assumptions~\ref{assump: risk function}--\ref{assump: network} hold and the approximate optimization condition \eqref{eq: approximate ERM condition} is satisfied. Let
\begin{equation}
    r\triangleq\mathsf R^o+\varepsilon_{\opt}
\end{equation}
where $\varepsilon_{\opt}$ is defined in \eqref{eq: network optimization error}. If
\begin{equation}
    0\leq\delta<\beta,
    \quad
    r<
    \frac{\Phi(\delta)+\Phi(\beta)}{2},
    \quad
    \rho<
    \mathdutchcal E_\Phi^{\mathrm{raw}}(r,\delta)
\end{equation}
then
\begin{equation}
\label{eq: raw margin training probability}
    \P(\C_\delta^{\mathrm{raw}})
    \geq
    1-
    \exp\Biggl\{
        -\frac{8N_{\max}}{\alpha^2\beta^2}
        \Bigl(
            \mathdutchcal E_\Phi^{\mathrm{raw}}(r,\delta)-\rho
        \Bigr)^2
    \Biggr\}.
\end{equation}
\end{corollary}
\begin{proof}
The analysis on the approximate optimization remains unchanged from the proof of Proposition~\ref{prop: P_c_delta}. With the argument in Appendix~\ref{appendix: proof approximate ERM lemma}, we have
\begin{equation}
\label{eq: raw approximate risk reduction}
    \R(\widehat{\f})
    \leq
    r+2U,
    \quad
    r=\mathsf R^o+\varepsilon_{\opt}
\end{equation}
where $U$ is the uniform risk deviation in \eqref{eq: uniform risk deviation appendix}. It remains to relate failure of the $\delta$-margin condition~\eqref{eq: raw training event} to the population surrogate risk. We introduce the following two class-conditional risks:
\begin{align}
    \R^{+}(f)
    &\triangleq
    \sum_{k=1}^K
    \pi_k
    \E^{(+1)}
    \Phi(f_k(\h_k)),
    \\
    \R^{-}(f)
    &\triangleq
    \sum_{k=1}^K
    \pi_k
    \E^{(-1)}
    \Phi(-f_k(\h_k)).
\end{align}
Under the uniform class prior, $\R(f)=[\R^{+}(f)+\R^{-}(f)]/2$. By the same Jensen argument used in \eqref{eq: risk separation Jensen appendix},
\begin{equation}
    \R^{+}(f)
    \geq
    \Phi\bigl(\bmu_{\rm raw}^{+}(f)\bigr),
    \quad
    \R^{-}(f)
    \geq
    \Phi\bigl(-\bmu_{\rm raw}^{-}(f)\bigr).
\end{equation}
Moreover, Assumption~\ref{assump: bound} and the non-increasing property of $\Phi$ imply
\begin{equation}
    \R^{+}(f)\geq\Phi(\beta),
    \quad
    \R^{-}(f)\geq\Phi(\beta).
\end{equation}
If $\C_\delta^{\mathrm{raw}}$ fails, one of the two class-conditional risks is at least $\Phi(\delta)$, while the other is at least $\Phi(\beta)$. Therefore,
\begin{equation}
\label{eq: raw failure risk inclusion}
    \overline{\C_\delta^{\mathrm{raw}}}
    \subseteq
    \left\{
        \R(\widehat{\f})
        \geq
        \frac{\Phi(\delta)+\Phi(\beta)}{2}
    \right\}.
\end{equation}
Combining~\eqref{eq: raw failure risk inclusion} with \eqref{eq: raw approximate risk reduction} gives, on $\overline{\C_\delta^{\mathrm{raw}}}$,
\begin{align}
    U
    \geq
    \frac{
        \Phi(\delta)+\Phi(\beta)-2r
    }{4}=
    4L_\Phi
    \mathdutchcal E_\Phi^{\mathrm{raw}}(r,\delta).
\end{align}
Applying the uniform deviation bound \eqref{eq: uniform risk bound appendix} at this threshold proves \eqref{eq: raw margin training probability}.
\end{proof}
\noindent Compared with Proposition~\ref{prop: P_c_delta}, the change in the bound~\eqref{eq: raw margin training probability} occurs in the reduction from margin event failure to population risk. The centered analysis additionally controls the empirical training mean, whereas the raw margin in~\eqref{eq: raw margin} is determined directly by the two raw class-conditional means.

\subsection{Sufficient-Communication Counterpart}
\label{appendix: raw sufficient communication}

Set $f=\widehat{\f}$ in the raw aggregate score \eqref{eq: raw aggregate classifier}, and define
\begin{equation}
\label{eq: raw sufficient event}
    \M^{\mathrm{raw}}
    \triangleq
    \{
        \bgamma^*
        f_{\ave}(\h^*)\leq0
    \}.
\end{equation}
This is the raw-score counterpart of the event $\M$ in~\eqref{eq: P_e decomposition}. When $\C_\delta^{\mathrm{raw}}$ is true, the class-conditional mean of $f_{\ave}(\h^*)$ lies on the correct side of zero by more than $\delta$. Moreover, changing only testing view $k$ changes $f_{\ave}(\h^*)$ by at most $2\beta\pi_k$, which is the same coordinate oscillation used for the centered aggregate in Appendix~\ref{appendix: classification error}. Applying Lemma~\ref{lemma: ATE bounded differences} under the two class-conditional laws $P_\gamma$ and using the definition of $\tau_{\max}$ gives
\begin{equation}
\label{eq: raw sufficient conditional error}
    \P(
        \M^{\mathrm{raw}}\givensmall\Ttrain
    )
    \leq
    \exp\Biggl\{
        -\frac{\delta^2}
        {2\tau_{\max}\beta^2
        \sum_{k=1}^K\pi_k^2}
    \Biggr\}
\end{equation}
almost surely on $\C_\delta^{\mathrm{raw}}$. Thus, for the same prescribed margin $\delta$, the prediction-side bound~\eqref{eq: raw sufficient conditional error} has exactly the same form as its centered counterpart in~\eqref{eq: P_e for single-sample case under delta-margin consistent training}. Applying the same tower-property decomposition as in \eqref{eq: P_e decomposition}, with $\M$ and $\C_\delta$ replaced by $\M^{\mathrm{raw}}$ and $\C_\delta^{\mathrm{raw}}$, respectively, gives
\begin{align}
    \P(\M^{\mathrm{raw}})
    &=
    \E\!\left[
        \P(\M^{\mathrm{raw}}\givensmall\Ttrain)
    \right]
    \nonumber\\
    &\leq
    \E\!\left[
        \mathsf{1}[\C_\delta^{\mathrm{raw}}]
        \P(\M^{\mathrm{raw}}\givensmall\Ttrain)
    \right]
    +
    \P(\overline{\C_\delta^{\mathrm{raw}}}).
\label{eq: raw sufficient decomposition}
\end{align}
Combining~\eqref{eq: raw sufficient decomposition} with \eqref{eq: raw sufficient conditional error} and the raw-score training guarantee~\eqref{eq: raw margin training probability} gives the corresponding unconditional guarantee.

\subsection{Finite-Round Counterpart}
\label{appendix: raw finite round}

Without empirical centering, the finite-round statistic corresponding to \eqref{eq: decision statistic at time t} becomes
\begin{equation}
\label{eq: raw finite round statistic}
    \blambda_{k,t}^{\mathrm{raw}}
    \triangleq
    \sum_{\ell=1}^K
    [A^t]_{\ell k}
    \widehat{\f}_\ell(\h_\ell^*),
\end{equation}
with
\begin{equation}
\label{eq: raw finite round event}
    \M_{k,t}^{\mathrm{raw}}
    \triangleq
    \{
        \bgamma^*
        \blambda_{k,t}^{\mathrm{raw}}
        \leq0
    \}.
\end{equation}
The finite-round mixing argument pertaining to $\blambda_{k,t}^{\mathrm{raw}}$ follows Appendix~\ref{appendix: finite round communication}. The coordinate oscillation remains $2\beta[A^t]_{\ell k}$. The only change in the expectation term follows from
\begin{equation}
    \left|
        \E^{(\gamma)}
        \widehat{\f}_\ell(\h_\ell)
    \right|
    \leq\beta,
\end{equation}
instead of the bound $2\beta$ for empirically centered scores $\c_\ell(\h_\ell)$. Using the matrix-power estimate in~\eqref{eq: convergence constant main},
\begin{equation}
\label{eq: raw finite mean transient}
    \left|
        \E^{(\gamma)}
        \blambda_{k,t}^{\mathrm{raw}}
        -
        \bmu_{\rm raw}^{\gamma}(\widehat{\f})
    \right|
    \leq
    \beta K C(A,\sigma)\sigma^t.
\end{equation}
Accordingly, define the raw-score counterpart of the residual margin in~\eqref{eq: finite round residual margin} by
\begin{equation}
\label{eq: raw finite round residual}
    r_t^{\mathrm{raw}}(\delta)
    \triangleq
    \delta-\beta K C(A,\sigma)\sigma^t.
\end{equation}
Repeating the argument used to establish Theorem~\ref{theorem: finite-time classification error} gives, for every agent $k$ and every $t\geq0$ such that $r_t^{\mathrm{raw}}(\delta)>0$,
\begin{equation}
\label{eq: raw finite-time conditional error}
    \P(
        \M_{k,t}^{\mathrm{raw}}\givensmall\Ttrain
    )
    \leq
    \exp\Biggl\{
        -\frac{
            [r_t^{\mathrm{raw}}(\delta)]^2
        }
        {2\tau_{\max}\beta^2
        \sum_{\ell=1}^K([A^t]_{\ell k})^2}
    \Biggr\}
\end{equation}
almost surely on $\C_\delta^{\mathrm{raw}}$. Therefore, relative to the bound in~\eqref{eq: P_e for t-th round single-sample case under delta-margin consistent training}, only the residual margin changes:
\begin{equation}
    r_t(\delta)
    =
    \delta-2\beta K C(A,\sigma)\sigma^t
    \longrightarrow
    r_t^{\mathrm{raw}}(\delta)
    =
    \delta-\beta K C(A,\sigma)\sigma^t.
\end{equation}
The corresponding unconditional guarantee follows by combining \eqref{eq: raw finite-time conditional error} with \eqref{eq: raw margin training probability}, in the same manner as the centered finite-round decomposition \eqref{eq: P_e equation at t-th communication round}.

\subsection{Finite-Precision Counterpart}
\label{appendix: raw finite precision}

We next consider the raw-score counterpart of the finite-precision recursion~\eqref{eq: finite bit recursion theory}. Since $|\widehat{\f}_k(h)|\leq\beta$ by Assumption~\ref{assump: bound}, the raw inputs can be quantized over $[-\beta,\beta]$ rather than the centered range $[-2\beta,2\beta]$ in~\eqref{eq: centered score quantization range}. Accordingly, we employ a narrower raw-score alphabet
\begin{equation}
    \mathcal V_b^{\mathrm{raw}}
    \triangleq
    \left\{
        v_j^{\mathrm{raw}}
        =
        -\beta+j\Delta_b^{\mathrm{raw}}
        :
        j=0,\ldots,L_b-1
    \right\}
\end{equation}
where
\begin{equation}
\label{eq: raw quantization step}
    \Delta_b^{\mathrm{raw}}
    \triangleq
    \frac{2\beta}{L_b-1}
    =
    \frac{\Delta_b}{2}.
\end{equation}
We apply the same adjacent-level stochastic rounding protocol as in \eqref{eq: stochastic rounding protocol} using this raw-score alphabet. The initialization and all quantizer outputs lie in $[-\beta,\beta]$, and the update in \eqref{eq: finite bit recursion theory} is a convex combination. Therefore, this interval is preserved throughout the recursion. The accumulated perturbation term corresponding to \eqref{eq: quantization proxy theory} is
\begin{align}
\label{eq: raw quantization proxy}
    V_{k,t}^{\mathrm{q,raw}}
    &\triangleq
    \frac{(\Delta_b^{\mathrm{raw}})^2}{4}
    \sum_{r=1}^{t}
    \sum_{j=1}^K
    ([A^r]_{jk})^2
    \nonumber\\
    &=
    \frac{\beta^2}{(L_b-1)^2}
    \sum_{r=1}^{t}
    \sum_{j=1}^K
    ([A^r]_{jk})^2
    \nonumber\\
    &=
    \frac14V_{k,t}^{\rm q}.
\end{align}
Let $x_{k,t}^{(b),\mathrm{raw}}$ denote the recursion \eqref{eq: finite bit recursion theory} initialized by
\begin{equation}
    x_{k,0}^{(b),\mathrm{raw}}
    =
    \widehat{\f}_k(\h_k^*)
\end{equation}
and using the raw quantization range above. Define
\begin{equation}
\label{eq: raw finite precision event}
    \M_{k,t}^{(b),\mathrm{raw}}
    \triangleq
    \{
        \bgamma^*
        x_{k,t}^{(b),\mathrm{raw}}
        \leq0
    \}.
\end{equation}
The conditional-MGF argument in Appendix~\ref{appendix: finite bit communication} is unchanged. Replacing $r_t(\delta)$ and $V_{k,t}^{\rm q}$ in~\eqref{eq: finite bit classification theory} respectively by $r_t^{\mathrm{raw}}(\delta)$ and $V_{k,t}^{\mathrm{q,raw}}$ therefore gives
\begin{align}
\nonumber
    &\;\P(
        \M_{k,t}^{(b),\mathrm{raw}}
        \given\Ttrain
    )
    \\
    &\leq
    \exp\Biggl\{
        -\frac{
            [r_t^{\mathrm{raw}}(\delta)]^2
        }
        {2\left[
            \tau_{\max}\beta^2
            \sum_{\ell=1}^K([A^t]_{\ell k})^2
            +
            V_{k,t}^{\mathrm{q,raw}}
        \right]}
    \Biggr\}
    \label{eq: raw finite precision conditional error}
\end{align}
almost surely on $\C_\delta^{\mathrm{raw}}$ whenever $r_t^{\mathrm{raw}}(\delta)>0$. Thus, relative to the centered result \eqref{eq: finite bit classification theory}, the raw-score counterpart changes only the residual finite-round margin $r_t^{\mathrm{raw}}(\delta)$ and the quantization proxy $V_{k,t}^{\mathrm{q,raw}}$. The unconditional guarantee follows by combining \eqref{eq: raw finite precision conditional error} with \eqref{eq: raw margin training probability}, as in Theorem~\ref{theorem: finite bit communication}.

\subsection{PAC-Style Generalization Counterpart}
\label{appendix: raw PAC generalization}

We next consider the PAC-style result under the aligned network training set~\eqref{eq: training set D for the network} and its local projections~\eqref{eq: local training set D_k}. The raw aggregate $f_{\ave}$ in~\eqref{eq: raw aggregate classifier} does not contain the sample-dependent empirical training mean that appears in the centered aggregate. Therefore, the fixed-class refinement technique described at the end of Appendix~\ref{appendix: PAC generalization bound} applies directly without the additional offset variable.

For the raw local and aggregate scores, respectively, define
\begin{align}
    P_e(f_k)
    &\triangleq
    \P\bigl(
        \sign(f_k(\h_k^*))\neq\bgamma^*
        \given\Ttrain
    \bigr),
    \\
    P_e(f_{\ave})
    &\triangleq
    \P\bigl(
        \sign(f_{\ave}(\h^*))\neq\bgamma^*
        \given\Ttrain
    \bigr).
\end{align}
We establish the analogous PAC-style bounds for $f_\ave$ and $f_k$ as follows.
\begin{corollary}[\textbf{Raw-score margin generalization bound}]
\label{corollary: raw PAC bound}
Under the sampling setup of Theorem~\ref{theorem: generalization bound for c_k and c_ave}, fix $\eta>0$ and $\epsilon\in(0,1)$. Then, with probability at least $1-\epsilon$, uniformly for all $f\in\F$,
\begin{equation}\label{eq: raw PAC bound}
    P_e(f_{\ave})
    \leq
    P_{e,\emp}^{\eta}(f_{\ave})
    +\frac{2\rho}{\eta}
    +\sqrt{\frac{\log(1/\epsilon)}{2N}}.
\end{equation}
Moreover, for every fixed agent $k$, with probability at least $1-\epsilon$, uniformly for all $f_k\in\F_k$,
\begin{equation}\label{eq: raw local PAC bound}
    P_e(f_k)
    \leq
    P_{e,\emp}^{\eta}(f_k)
    +\frac{2\rho_k}{\eta}
    +\sqrt{\frac{\log(1/\epsilon)}{2N}}.
\end{equation}
\end{corollary}

\begin{proof}
Following~\eqref{eq: one sided offset deviation appendix}, we define the one-sided deviation
\begin{equation}
    \Delta_+^{\mathrm{raw}}(\D)
    \triangleq
    \sup_{f\in\F}
    \Bigl\{
        \E
        \Phi_\eta(
            \bgamma f_{\ave}(\h)
        )
        -
        P_{e,\emp}^{\eta}(f_{\ave})
    \Bigr\}
\end{equation}
where $\Phi_\eta$ and $P_{e,\emp}^{\eta}$ are defined in \eqref{eq: definition of eta-margin loss function} and \eqref{eq: empirical eta-margin loss}, respectively. Since the function class consisting of $f_{\ave}$, $\forall f\in\F$ is fixed, the one-sided symmetrization and contraction argument used in the refinement of Appendix~\ref{appendix: PAC generalization bound} gives
\begin{equation}
    \E\Delta_+^{\mathrm{raw}}(\D)
    \leq
    \frac{2}{\eta}
    \sum_{k=1}^K\pi_k\rho_k
    =
    \frac{2\rho}{\eta}.
\end{equation}
Moreover, $0\leq\Phi_\eta\leq1$, so replacing one complete network example changes $\Delta_+^{\mathrm{raw}}(\D)$ by at most $1/N$. McDiarmid's inequality therefore gives, with probability at least $1-\epsilon$,
\begin{equation}
    \Delta_+^{\mathrm{raw}}(\D)
    \leq
    \frac{2\rho}{\eta}
    +
    \sqrt{\frac{\log(1/\epsilon)}{2N}}.
\end{equation}
Using~\eqref{eq: property of eta-margin loss} proves \eqref{eq: raw PAC bound}. Since this high-probability event is uniform over $f\in\F$, it may be evaluated at $f=\widehat{\f}$ for every realization of the local training randomness. Applying the same argument to the projected sample $\D_k$ gives \eqref{eq: raw local PAC bound}.
\end{proof}
\noindent Relative to the centered refinement in Appendix~\ref{appendix: PAC generalization bound}, the offset-complexity term is absent in Corollary~\ref{corollary: raw PAC bound} because the raw-score classifier does not contain an empirical training mean. As already characterized by~\eqref{eq: raw margin} and~\eqref{eq: raw centered margin comparison}, this difference does not imply a universal ordering of the resulting classifiers, since empirical centering also changes their margins relative to the zero decision threshold.

\subsection{Static-Observation Social Learning Counterpart}
\label{appendix: raw static social learning}

Replacing the centered decision statistic $\blambda_{k,t}^{\sf SL}$ in \eqref{eq: decision statistic at time t under SL} by the learned raw scores gives
\begin{equation}
\label{eq: raw SL statistic}
    \blambda_{k,t}^{\sf SL,raw}
    \triangleq
    \sum_{\ell=1}^K
    \sum_{i=1}^{t}
    [A^i]_{\ell k}
    \widehat{\f}_\ell(\h_\ell^*).
\end{equation}
We define
\begin{equation}
    \M_{k,t}^{\sf SL,raw}
    \triangleq
    \{
        \bgamma^*
        \blambda_{k,t}^{\sf SL,raw}
        \leq0
    \}.
\end{equation}
Using the cumulative mixing constant $\Bmix$ in \eqref{eq: cumulative mixing constant}, we denote
\begin{equation}
\label{eq: raw kappa}
    \kappa_{\mathrm{raw}}
    \triangleq
    \beta\Bmix
    =
    \frac{\kappa}{2}
\end{equation}
where $\kappa=2\beta\Bmix$ is defined in~\eqref{eq: kappa}. The proof of Proposition~\ref{prop: classification error using SL rule} carries over directly. The cumulative coordinate oscillations are unchanged, while $|\E^{(\gamma)}\widehat{\f}_\ell(\h_\ell)|\leq\beta$ reduces the cumulative mean-transient term in~\eqref{eq: SL classification mean appendix} from $\kappa$ to $\kappa_{\mathrm{raw}}$. Consequently, almost surely on $\C_\delta^{\mathrm{raw}}$, for every agent $k$ and every integer $t>\kappa_{\mathrm{raw}}/\delta$,
\begin{equation}
\label{eq: raw SL conditional bound}
    \P(
        \M_{k,t}^{\sf SL,raw}
        \given\Ttrain
    )
    \leq
    \exp\Biggl\{
        -\frac{
            (t\delta-\kappa_{\mathrm{raw}})^2
        }
        {2\tau_{\max}\beta^2
        \sum_{\ell=1}^K
        \left(
            \sum_{i=1}^{t}[A^i]_{\ell k}
        \right)^2}
    \Biggr\}.
\end{equation}
This is the direct raw-score counterpart of \eqref{eq: P_e for t-th round using SL rule}. Since
\begin{equation}
    \lim_{t\to\infty}\frac1t\sum_{i=1}^{t}[A^i]_{\ell k}
    =
    \pi_\ell,
\end{equation}
the bound~\eqref{eq: raw SL conditional bound} converges to the bound associated with raw scores under sufficient communication~\eqref{eq: raw sufficient conditional error}.

\section{Additional Experimental Details}
\label{app:exp_details}

This appendix provides additional implementation details and supplementary figures for the experiments conducted in Section~\ref{sec:simulations}. The presentation is grouped into the CIFAR-10 benchmark, the ModelNet40 benchmark, and the communication and robustness studies.

\subsection{CIFAR-10 Benchmark}
\label{app:cifar_details}

\subsubsection{Experimental Protocol}
\label{app:cifar_patch_train}

This subsection provides additional implementation details for the CIFAR-10 patch-partition benchmark introduced in Section~\ref{subsec:cifar_patch}. Each agent trains a convolutional neural network (CNN) independently on the patch associated with its grid location. The network consists of two convolutional layers followed by fully connected layers and produces a two-dimensional output $z_k(\h_k)\in\mathbb{R}^2$ for each input patch $\h_k$~\cite{hu2025non-asymptotic}. Its two entries, denoted by $z_{k,+1}(\h_k)$ and $z_{k,-1}(\h_k)$, are the logits associated with the two classes, which define the class-posterior probabilities through the softmax rule:
\begin{equation}\label{eq: posterior probability from logits}
    \widehat p_k(\gamma| \h_k)
=
\frac{\exp(z_{k,\gamma}(\h_k))}
{\exp(z_{k,+1}(\h_k))+\exp(z_{k,-1}(\h_k))}
\end{equation}
for $\gamma\in\{+1,-1\}$. Here, these logits refer to the raw outputs of the neural network, and should not be confused with the theoretical logit function in~\eqref{eq: logit function}. In the binary case, their difference satisfies
\begin{equation}\label{eq: logit gap}
    z_{k,+1}(\h_k)-z_{k,-1}(\h_k)
=
\log\frac{\widehat p_k(+1| \h_k)}{\widehat p_k(-1| \h_k)}.
\end{equation}
Therefore, in this implementation, the scalar quantity used for collaboration is first formed by taking the difference between the two logits. This logit difference is then centered according to~\eqref{eq: trained classifier} to obtain the local decision statistic $\c_k(\h_k)$.

For each value of $N_0$, we first sample a balanced set of CIFAR-10 images containing equal numbers of cats and dogs. The same selected images are then used across all agents, with each agent retaining only the patch corresponding to its own grid location. In this way, $N_0$ denotes the size of the local labeled dataset at each agent before the train/validation split. We reserve $20\%$ of each local dataset for validation and use the remaining $80\%$ for training. Figure~\ref{fig:cifar_local_datasets} illustrates this construction: each row corresponds to the local dataset of one agent, and each column shows the local patches extracted from a common image across agents.

\begin{figure}[t]
    \centering
    \includegraphics[width=0.95\linewidth]{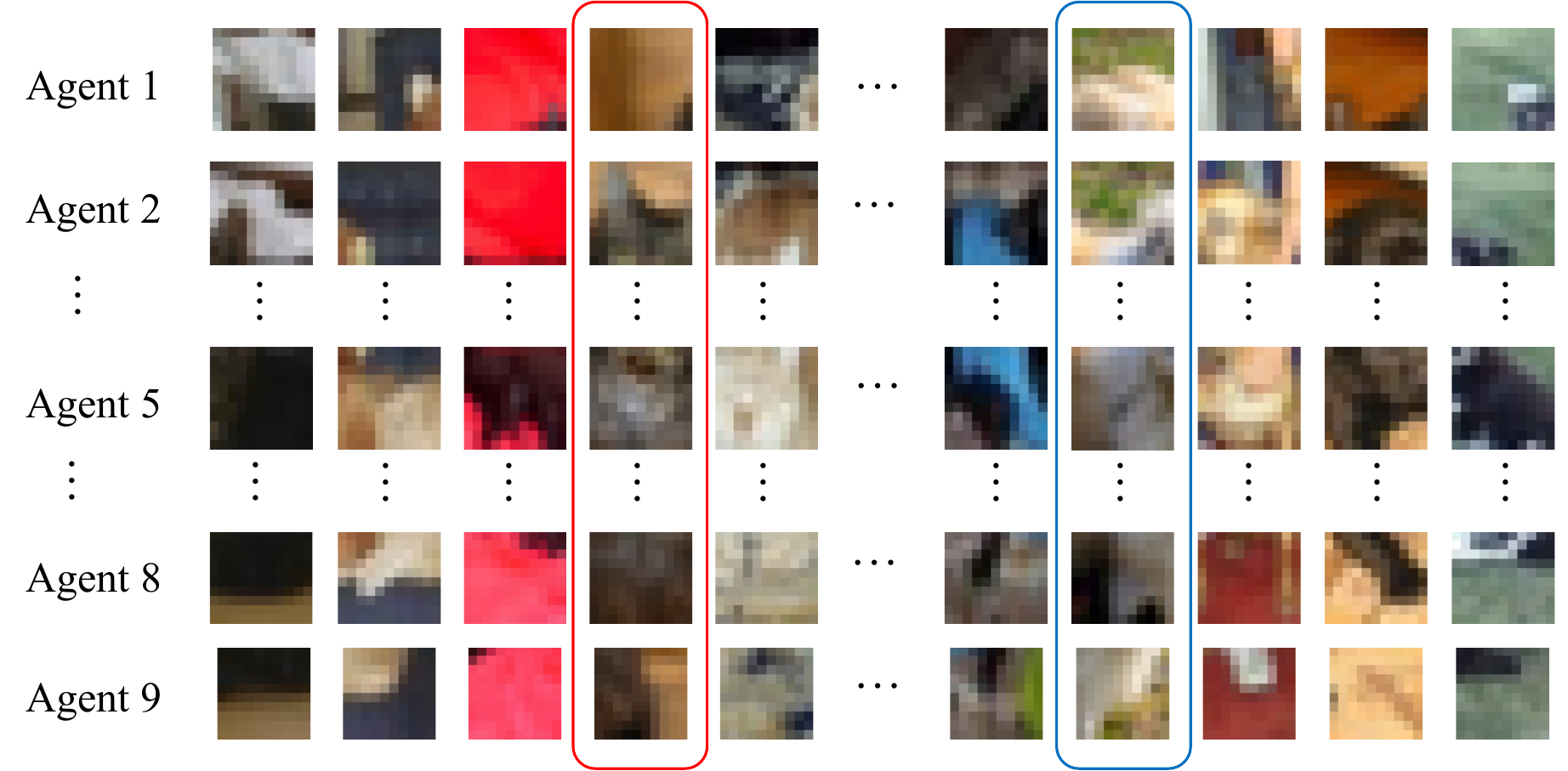}
    \caption{Illustration of the local labeled datasets in the CIFAR-10 patch-partition benchmark. The highlighted columns illustrate two representative images from the two classes contributing different local patches to all agents.}
    \label{fig:cifar_local_datasets}
\end{figure}

Each local CNN classifier is trained independently using the Adam optimizer with learning rate $10^{-4}$, mini-batch size $256$, and cross-entropy loss for at most $100$ epochs. Early stopping is applied in all CIFAR-10 experiments: training is monitored on the validation set and terminated once the validation loss ceases to improve over several consecutive validation checks, and the model parameters corresponding to the best validation loss are retained.

For each value of $N_0$, all reported quantities in Fig.~\ref{fig:cifar_patch_main} are averaged over $200$ Monte Carlo repetitions. In each repetition, the balanced image set is resampled, the train/validation split is regenerated, and all local classifiers are retrained from scratch. In the main baseline comparison, the communication graph is generated once and then kept fixed across repetitions. Each agent is also assumed to have a self-loop in the communication graph, which is omitted from Fig.~\ref{fig: network} for clarity. The corresponding combination policy $A$ is constructed using the uniform averaging rule:
\begin{equation}
    a_{\ell k}=\frac{1}{|\mathcal{N}_k|},\quad \forall \ell\in\mathcal{N}_k
\end{equation}
where $|\mathcal{N}_k|$ denotes the cardinality of $\mathcal{N}_k$.

\subsubsection{Baselines}
\label{app:cifar_baselines}

Figure~\ref{fig:cifar_patch_pe} compares the following methods.

\paragraph{Non-cooperative}
Each agent predicts independently from its own local decision statistic, without any communication or aggregation.

\paragraph{Avg-stat}
The local decision statistics are averaged uniformly across agents, and the final prediction is obtained by thresholding the resulting aggregate. This provides a simple averaging reference, and coincides with the proposed method when the combination policy $A$ is doubly-stochastic and the communication is sufficient.

\paragraph{Vote}
Each agent first generates a hard class prediction based on its local decision statistic. The final output is then determined by majority vote across the agents.

\paragraph{Learned fusion}
A centralized affine fusion rule is fitted on a held-out validation set. If $\bm{c}(h)\in\mathbb{R}^K$ denotes the vector of local decision statistics for sample $h$, the baseline predicts according to the sign of
\begin{equation}
    w^\top \bm{c}(h)+b_{\rm fus}
\end{equation}
where both the fusion weights $w\in\mathbb{R}^K$ and the bias $b_{\rm fus}\in\mathbb{R}$ are learned from validation data.

\paragraph{Learned simplex fusion}
This baseline also learns a centralized linear fusion rule from validation data, but under the structural constraints
\begin{equation}
    w_k\ge 0,\quad \sum_{k=1}^K w_k=1,
\end{equation}
with no additive bias term. It may therefore be interpreted as a learned convex combination of the local decision statistics.

\paragraph{AdaBoost}
A centralized boosting model is trained using the local predictors as constituent learners. Unlike the proposed method, this baseline introduces cooperation during training rather than only at test time.

\paragraph{VFL-JT}
To provide a task-matched joint training comparator for the partitioned CIFAR-10 data, we implement a score-level vertical federated joint-training scheme inspired by the feature-distributed learning framework in~\cite{hu2019fdml}. For an aligned example, branch $k$ receives the same local patch as the corresponding independently trained agent and outputs a scalar binary score $s_k$. The coordinating server forms
\begin{equation}
    u_{\rm VFL}
    =
    b_{\rm VFL}
    +
    \sum_{k=1}^{K} w_k s_k ,
\end{equation}
and the branch parameters together with the fusion head are optimized jointly using the global binary logistic loss. During training, the local scores are sent to the server and the corresponding loss derivatives with respect to the local scores are returned to the branches. Raw patches and local parameter tensors are not exchanged, and no parameter averaging is performed.

Because VFL-JT and the proposed method communicate at different stages, we report their training- and inference-time communication separately. Let $E_{\rm VFL}$ denote the realized number of VFL-JT training epochs, and $N_{\rm tr}$ and $N_{\rm val}$ the numbers of aligned training and validation samples, respectively. Recall that $E_{\rm off}(A)$ defined in~\eqref{eq: off diagonal communication links theory} denotes the number of directed non-self communication links. The resulting communication costs, measured by the number of transmissions, are summarized in Table~\ref{tab:vfl_ttc_comm}.

\begin{table}[t]
\centering
\caption{Communication cost of VFL-JT and the proposed scheme.}
\label{tab:vfl_ttc_comm}
\small
\setlength{\tabcolsep}{6pt}
\renewcommand{\arraystretch}{1.05}
\begin{tabular}{c|cc}
\hline
\textbf{Method}
&
\textbf{Training}
&
\textbf{Inference per sample}
\\
\hline
VFL-JT
&
$K E_{\rm VFL}(2N_{\rm tr}+N_{\rm val})$
&
$K$
\\
Proposed
&
$0$
&
$tE_{\rm off}(A)$
\\
\hline
\end{tabular}
\end{table}

For VFL-JT, each training sample involves the transmission of $K$ local scores to the server and the return of $K$ corresponding loss derivatives, while each validation sample requires the $K$ forward score transmissions. For the proposed method, communication occurs only during test-time collaboration over the directed non-self links of the network. If each communicated scalar is represented using $b$ bits, the corresponding bit costs are obtained by multiplying the entries in Table~\ref{tab:vfl_ttc_comm} by $b$. 

\paragraph{Central oracle}
A centralized classifier from the same CNN family is trained on the full $32\times32$ CIFAR-10 image. In this implementation, the first convolutional layer is adapted to the full-image input, and the first fully connected layer is resized accordingly to match the resulting feature dimension. This baseline is included only as a full-information reference and is not constrained by the distributed observation model.

\paragraph{Ours (no centering)}
This ablation implements the same collaboration rule as the proposed method, but removes the centering step in~\eqref{eq: trained classifier}. In other words, each agent uses the trained score $\widehat{\f}_k$ rather than the centered decision statistic $\c_k$.

\subsubsection{Conditional Mutual Information}
\label{app:cifar_cmi_details}

In Section~\ref{sec:cifar_patch_cmi}, the estimation of the mutual information $I(h_k;h_\ell\givensmall\gamma=y)$ appearing in~\eqref{eq:cifar_cmi} is carried out in two stages. For the reported results in Figs.~\ref{fig:cifar_patch_cmi_heatmap} and~\ref{fig:cifar_patch_cmi_bucket}, the conditional mutual information is estimated on the training split using all $10{,}000$ selected training samples. We also repeated the same analysis on the testing split and obtained the same qualitative conclusions. First, for each agent $k$, the raw patch vectors are stacked into a data matrix whose rows correspond to samples and whose columns correspond to pixel values, and principal component analysis (PCA) is then applied to this matrix~\cite{sayed2022inference}. A reduced representation of dimension $20$ is retained. The PCA transformation is fitted once per agent on the chosen split, and the resulting reduced coordinates are then partitioned according to the class label. Second, for each class $y\in\{-1,+1\}$, the mutual information between the two reduced patch representations is estimated by a $k$-nearest-neighbor (kNN) estimator with neighborhood parameter $k_{\rm NN}=5$~\cite{kraskov2004estimating}. The two class-conditional estimates are then combined according to~\eqref{eq:cifar_cmi}. All reported values are given in nats. We also evaluated PCA dimensions $10$, $20$, and $30$, together with neighborhood parameters $k_{\mathrm{NN}}=3$, $5$, and $10$. Across these choices, the same qualitative conclusions are found.

\subsubsection{Heterogeneous Models}
\label{app:cifar_hetero_details}

This subsection provides additional details for the heterogeneous version of the CIFAR-10 patch-partition benchmark in Section~\ref{sec:cifar_hetero}. We consider three local model families:
\begin{enumerate}
    \item \textbf{Family} \textsf{A}: the patch-level CNN adopted in the main CIFAR-10 experiments;
    \item \textbf{Family} \textsf{B}: logistic regression on the flattened raw patch pixels;
    \item \textbf{Family} \textsf{C}: a small residual convolutional network with adaptive pooling~\cite{he2016deep}.
\end{enumerate}
All other aspects of the experiment remain unchanged from the homogeneous CIFAR-10 benchmark, including the local datasets, the train/validation splitting protocol, the communication graph, the combination policy, and the training hyperparameters.

Because different patch locations are not equally informative, the placement of a model family on the $3\times3$ grid can affect its apparent performance. To reduce this spatial bias, we use a balanced assignment protocol based on the base layout:
\begin{equation}
\label{eq:app_hetero_base_assign}
{\renewcommand{\arraystretch}{1.5}
\mathbf{F}^{(1)}=
\begin{array}{|@{\hspace{0.5em}}c@{\hspace{0.5em}}|@{\hspace{0.5em}}c@{\hspace{0.5em}}|@{\hspace{0.5em}}c@{\hspace{0.5em}}|}
\hline
{\small \textsf{A}} & {\small \textsf{B}} & {\small \textsf{C}}\\
\hline
{\small \textsf{B}} & {\small \textsf{C}} & {\small \textsf{A}}\\
\hline
{\small \textsf{C}} & {\small \textsf{A}} & {\small \textsf{B}}\\
\hline
\end{array}}
\end{equation}
where $\textsf{A}$, $\textsf{B}$, and $\textsf{C}$ denote the three model families. A cyclic shift of this pattern yields two additional layouts:
\begin{equation}
\label{eq:app_hetero_assignments}
{\renewcommand{\arraystretch}{1.5}
\mathbf{F}^{(2)}=
\begin{array}{|@{\hspace{0.5em}}c@{\hspace{0.5em}}|@{\hspace{0.5em}}c@{\hspace{0.5em}}|@{\hspace{0.5em}}c@{\hspace{0.5em}}|}
\hline
{\small \textsf{B}} & {\small \textsf{C}} & {\small \textsf{A}}\\
\hline
{\small \textsf{C}} & {\small \textsf{A}} & {\small \textsf{B}}\\
\hline
{\small \textsf{A}} & {\small \textsf{B}} & {\small \textsf{C}}\\
\hline
\end{array}\;\;,
\quad
\mathbf{F}^{(3)}=
\begin{array}{|@{\hspace{0.5em}}c@{\hspace{0.5em}}|@{\hspace{0.5em}}c@{\hspace{0.5em}}|@{\hspace{0.5em}}c@{\hspace{0.5em}}|}
\hline
{\small \textsf{C}} & {\small \textsf{A}} & {\small \textsf{B}}\\
\hline
{\small \textsf{A}} & {\small \textsf{B}} & {\small \textsf{C}}\\
\hline
{\small \textsf{B}} & {\small \textsf{C}} & {\small \textsf{A}}\\
\hline
\end{array}\;.}
\end{equation}
In the study of collaboration gain at a fixed $N_0$ illustrated by Fig.~\ref{fig:cifar_patch_hetero}, the values are averaged over these three layouts so that each family appears equally often at each spatial location. In the $N_0$-sweep of Fig.~\ref{fig:cifar_patch_hetero_main}, the base layout $\mathbf{F}^{(1)}$ is kept fixed.

For each agent $k$, we quantify the performance gain from collaboration at round $t$ by
\begin{equation}
\Delta_k(t) \triangleq P_{k,0}-P_{k,t}
\label{eq:app_hetero_gain}
\end{equation}
where we recall that $P_{k,t}$ denotes the probability of error of agent $k$ at communication round $t$. For the heatmap shown in Fig.~\ref{fig:cifar_patch_hetero}, we report the corresponding \emph{relative} error reduction at the representative communication round $t=15$:
\begin{equation}
\Delta_k^{(\%)}(15)
\triangleq
100\cdot \frac{\Delta_k(15)}{\max\{P_{k,0},\epsilon_{\mathrm{safe}}\}}
\label{eq:app_hetero_gain_percent}
\end{equation}
where $\epsilon_{\mathrm{safe}}=10^{-6}$ is a small numerical safeguard.

\subsection{ModelNet40 Benchmark}
\label{app:modelnet_details}

The ModelNet40 experiments follow the same general pipeline as the CIFAR-10 benchmark in Appendix~\ref{app:cifar_details}, except for the differences described below.

The dataset is constructed from CAD models by rendering multiple views of each object using Blender. For each object, we generate $K=12$ RGB views from fixed camera positions whose azimuth angles are uniformly spaced around the object, with a common elevation angle of $30^\circ$. All rendered images are generated at resolution $224\times224$ with transparent background.

The local classifier is a lightweight convolutional network, referred to as \emph{MicroCNN}. It consists of a single convolutional layer with kernel size $7\times7$ and stride $4$, followed by a ReLU nonlinearity, global average pooling, an optional dropout layer, and a final linear layer producing two logits. As in the CIFAR-10 experiments, the local score is formed from the difference between two logits and then centered according to~\eqref{eq: trained classifier}.

For each value of $N_0$, we sample a balanced set of objects containing $N_0/2$ examples from each class and split it into training and validation subsets using the same $20\%$ validation fraction as in the CIFAR-10 study. The local models are trained independently using cross-entropy loss and validation-based early stopping, but with SGD optimizer and learning rate $0.05$. For the VFL-JT comparison, we use the same score-level joint-training construction described in Appendix~\ref{app:cifar_baselines}, with $K=12$ view-specific MicroCNN branches.

\paragraph{Centralized full-observation reference}
In the ModelNet40 benchmark, the central oracle has access to all $12$ views of the same object. The model consists of $K$ view-specific MicroCNN backbones, one for each rendered view, whose feature vectors are concatenated and passed to a centralized multilayer head for final prediction. The oracle is initialized from the independently trained local models and then trained in two stages. In the first stage, all backbone parameters are frozen and only the fusion head is optimized. In the second stage, all parameters are unfrozen and the full model is fine-tuned jointly using a smaller learning rate. Both training stages use cross-entropy loss, the Adam optimizer, and validation-based early stopping.

\subsubsection{Temperature Scaling}
\label{app:modelnet_temp_details}

For each agent $k$, we denote by $z_k(\h_k)\in\mathbb{R}^2$ the two-dimensional output of its local classifier for feature vector $\h_k$, whose entries are the logits associated with the two classes. A positive scalar temperature $T_k>0$ is fitted on the validation set by minimizing the cross-entropy loss of the rescaled logits:
\begin{equation}
\min_{T_k>0}\;
\frac{1}{N_{\rm val}}
\sum_{n=1}^{N_{\rm val}}
\ell_{\rm CE}\!\left(
\frac{z_k(\h_{k,n})}{T_k},\,\gamma_n
\right)
\label{eq:modelnet_temp_scaling}
\end{equation}
where $\ell_{\rm CE}$ denotes the cross-entropy loss, and $(\h_{k,n},\gamma_n)$ denotes the $n$-th labeled example in the validation set. In the implementation, we optimize over $\log T_k$ in order to enforce positivity. After fitting $T_k$, the classifier output is replaced by $z_k(\h_k)/T_k$. The local score used for collaboration is then formed by taking the difference between the two calibrated logits and is subsequently centered according to~\eqref{eq: trained classifier}.

\paragraph{Additional temperature scaling diagnostics}
\label{app:modelnet_temp_extra_figs}

Figure~\ref{fig:modelnet_temps} summarizes the fitted temperature parameters on the ModelNet40 benchmark across training set sizes $N_0$. For each value of $N_0$, we report the average of $T_k$ over the agents together with a $95\%$ confidence interval across $200$ Monte Carlo repetitions, as well as the average run-wise standard deviation of $\{T_k\}_{k=1}^K$. A learned temperature $T_k=1$ leaves the logits unchanged, $T_k>1$ softens them by reducing their magnitude, and $T_k<1$ sharpens them. The figure shows that the fitted temperatures are generally greater than one, and that both their average level and their variability across agents decrease as $N_0$ increases. This trend supports the conclusion that calibration is most critical in data-constrained regimes and becomes more stable as local classifier's reliability improves with additional training data.

Figure~\ref{fig:cifar_temps} reports the corresponding temperature diagnostics for the CIFAR-10 benchmark considered in Fig.~\ref{fig:cifar_patch_main}. Similar to ModelNet40, the fitted temperatures are larger and more variable at smaller training set sizes. As $N_0$ increases, however, they move closer to one, indicating that the logits require less rescaling in the higher-data regime. The corresponding results of decision margin and probability of error with temperature scaling are shown in Figs.~\ref{fig:cifar_patch_temp_margin} and~\ref{fig:cifar_patch_temp_pe}, respectively. Compared with Figs.~\ref{fig:cifar_patch_margin} and~\ref{fig:cifar_patch_pe}, temperature scaling tends to increase the achieved decision margin and slightly lowers the probability of error, with the effect being most visible at smaller $N_0$.

\begin{figure}[t]
    \centering
    \includegraphics[width=0.8\linewidth]{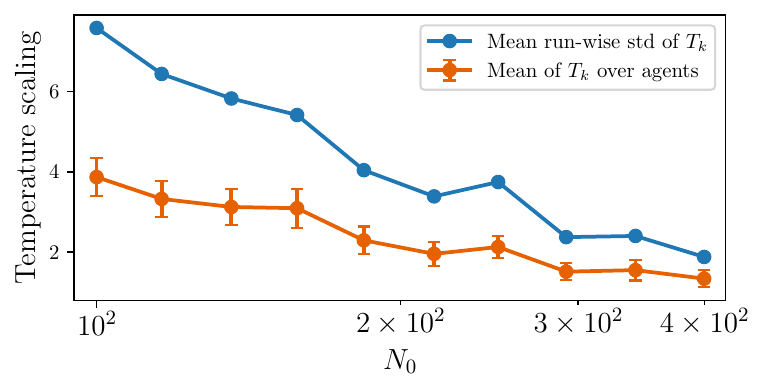}
    \caption{Temperature scaling diagnostics for the ModelNet40 benchmark. The orange curve shows the average of $T_k$ over agents, with error bars indicating $95\%$ confidence intervals across Monte Carlo repetitions. The blue curve shows the average run-wise standard deviation of $\{T_k\}_{k=1}^K$, which quantifies the variability of the fitted temperatures across agents.}
    \label{fig:modelnet_temps}
\end{figure}

\begin{figure}[t]
    \centering
    \includegraphics[width=0.8\linewidth]{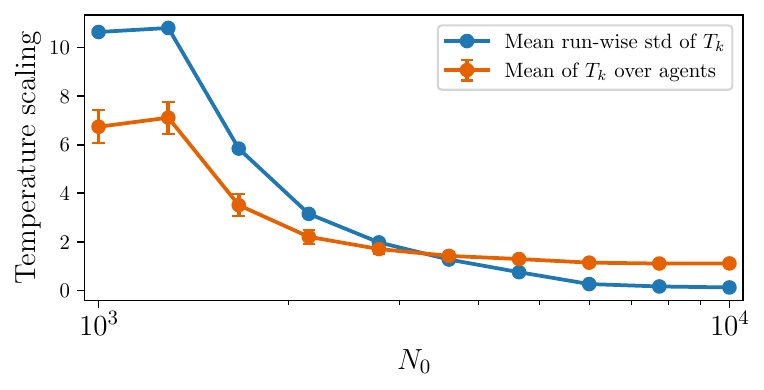}
    \caption{Temperature scaling diagnostics for the CIFAR-10 benchmark.}
    \label{fig:cifar_temps}
\end{figure}

\begin{figure}[t]
    \centering
    \subfloat[Decision margin]{
        \includegraphics[width=0.8\linewidth]
        {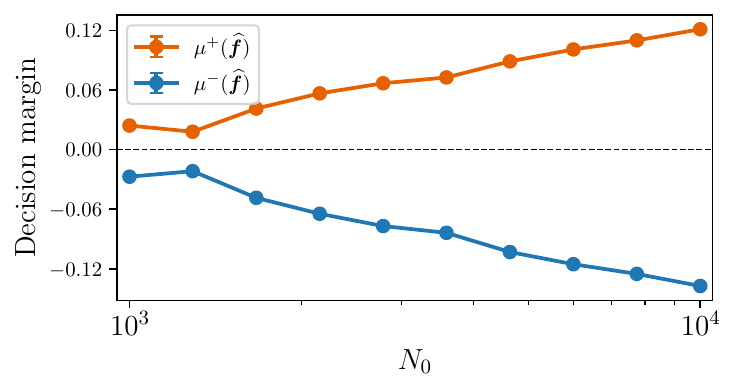}
        \label{fig:cifar_patch_temp_margin}}
    \hfill
    \subfloat[Probability of error]{
        \includegraphics[width=0.8\linewidth]
        {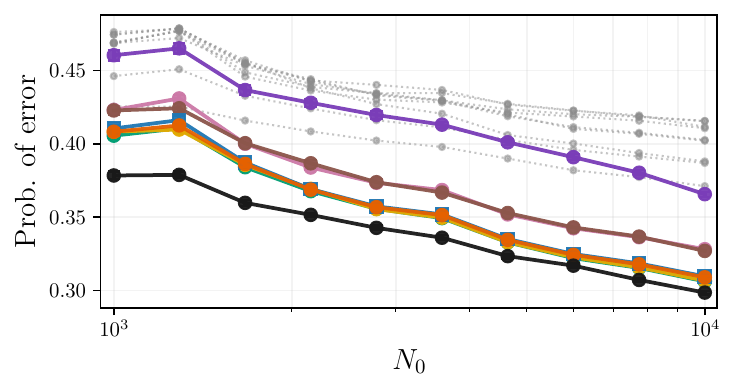}
        \label{fig:cifar_patch_temp_pe}}
    \caption{Performance of temperature scaling on CIFAR-10.}
    \label{fig:cifar_patch_temp_main}
\end{figure}

\subsubsection{Correlation Stress Test}
\label{app:modelnet_corr_details}

This subsection provides additional details for the controlled correlation stress test studied in Section~\ref{sec:corr_stress_test}. The perturbed decision statistics $\{x'_{i,k}\}$ are generated from the clean centered statistics $\{x_{i,k}\}$ using the perturbation model in~\eqref{eq:corr_stress} and \eqref{eq:corr_weights}. After perturbation, the decision statistics over the test set are collected into a matrix
\begin{equation}
    X\in\mathbb{R}^{N_{\mathrm{test}}\times K},
\end{equation}
whose $(i,k)$-th entry is the perturbed decision statistic of agent $k$ for the $i$-th testing sample, where $N_{\mathrm{test}}$ denotes the number of testing samples. To quantify the induced dependence, we compute the unconditional average pairwise Pearson correlation across agents:
\begin{equation}
\mathrm{corr}_{\rm uncond}(X)
\triangleq
\frac{2}{K(K-1)}
\sum_{1\leq k<\ell\leq K}
\widehat{r}_{k\ell}(X)
\end{equation}
where $\widehat{r}_{k\ell}(X)$ denotes the sample Pearson correlation between columns $k$ and $\ell$ of $X$. The label-conditional counterpart is obtained by restricting $X$ to each class separately and then averaging the corresponding pairwise correlations over the two classes.

In addition to these dependence measures, Fig.~\ref{fig:modelnet_corr_stress} reports the collaboration gain after $t$ communication rounds, measured by
\begin{equation}
P_0-P_t=\frac{1}{K}\sum_{k=1}^K (P_{k,0}-P_{k,t})
\end{equation}
where $P_0$ denotes the \emph{average} probability of error across agents in the non-cooperative setting and $P_t$ is the corresponding average probability of error after $t$ rounds of collaboration at the same perturbation level.


For completeness, Fig.~\ref{fig:modelnet_corr_error_overlay} reports the corresponding non-cooperative and collaborative probabilities of error as functions of the dependence strength $r$. For a fixed perturbation amplitude $\sigma_{\mathrm{pert}}$, varying $r$ changes the balance between the shared and idiosyncratic perturbation components while preserving the marginal perturbation distribution at each agent. This explains why the non-cooperative error remains approximately constant as $r$ varies. In contrast, the collaborative error generally increases with $r$, since a stronger shared component makes the perturbations across agents more correlated and reduces the benefit of combining their information. Consequently, the gap between the non-cooperative and collaborative curves narrows as $r$ increases, consistent with the reduction in collaboration gain observed in Fig.~\ref{fig:modelnet_corr_stress}. Increasing $\sigma_{\mathrm{pert}}$, on the other hand, increases the perturbation magnitude and raises the overall error levels. The clean reference curves show the corresponding performance without the injected perturbations.

\begin{figure}[t]
    \centering
    \includegraphics[width=0.95\linewidth]{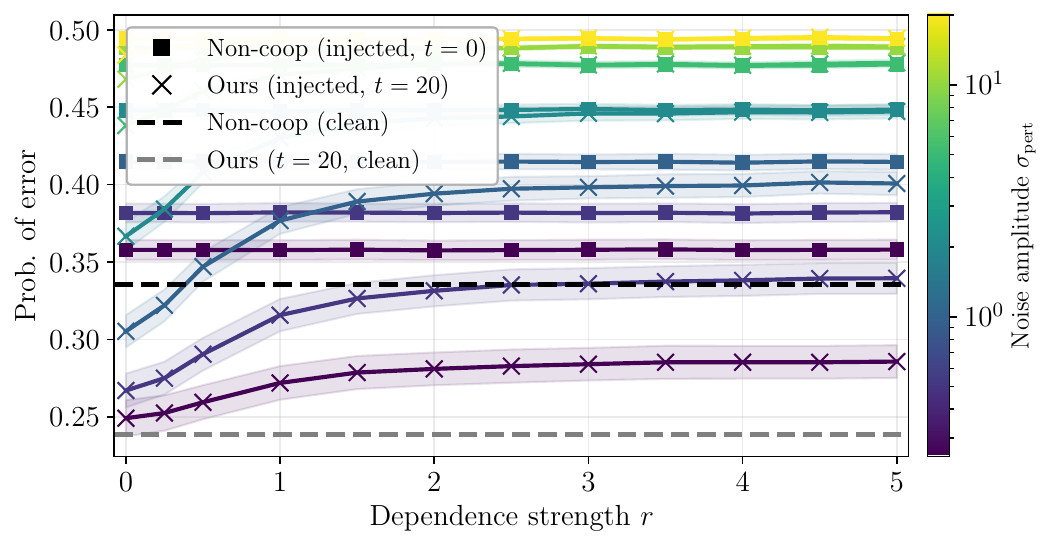}
    \caption{Probability of error under the controlled correlation stress test. For each perturbation amplitude $\sigma_{\mathrm{pert}}$, the figure compares the non-cooperative error at $t=0$ and the collaborative error after $t=20$ communication rounds as functions of the dependence strength $r$. The clean non-cooperative and collaborative error levels are also shown for reference.}
    \label{fig:modelnet_corr_error_overlay}
\end{figure}

\subsection{Communication and Robustness}

This subsection collects the implementation details for the communication design, adaptive stopping, and robustness studies in Section~\ref{subsec:comm_robust} of the main paper.

\subsubsection{Topology and Quantization}
\label{app:topology_details}

This subsection provides additional implementation details for the communication design study in Section~\ref{subsec:comm_robust}. Figure~\ref{fig:topology_examples_appendix} illustrates the ring and grid communication topologies used in the experiments.

\begin{figure}[t]
\centering
\subfloat[Ring]{\includegraphics[width=0.48\linewidth]{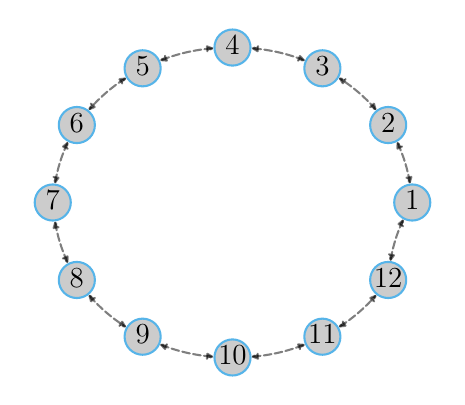}\label{fig:net_ring_appendix}}
\hfill
\subfloat[Grid]{\includegraphics[width=0.48\linewidth]{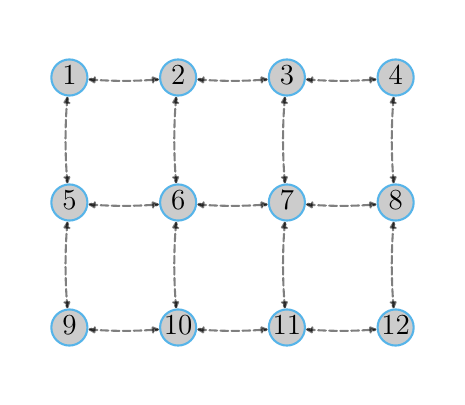}\label{fig:net_grid_appendix}}
\caption{Ring and grid communication topologies used in Fig.~\ref{fig:topology_sweep_main}. Self-loops are omitted from the plots for visual clarity.}
\label{fig:topology_examples_appendix}
\end{figure}

For each topology-rule pair, the graph and the corresponding combination policy $A$ are generated once and then kept fixed throughout the Monte Carlo study. In this study, we choose $N_0=400$, which corresponds to all available training samples for the selected class pair. The training and testing sets are therefore kept fixed across repetitions, while the train/validation split and the training randomness are varied. For each finite-bit condition, the results are additionally averaged over $30$ independent quantizer realizations within each of the $200$ training repetitions. In this way, the comparison isolates the effect of the communication design from variability due to changes in the underlying graph or data.

For the Metropolis rule used in the simulations, the communication weights are defined by
\begin{equation}\label{eq:app_metropolis_rule}
    a_{\ell k}=\begin{cases}
        \frac{1}{\max\{|\mathcal{N}_k|,|\mathcal{N}_\ell|\}}, &\text{if } \ell\neq k,\ell\in\mathcal{N}_k,\\
        1-\sum_{m\in \mathcal{N}_k\setminus\{k\}}a_{mk}, & \text{if } \ell =k,\\
        0, &\text{otherwise.}
    \end{cases}
\end{equation}
For the undirected ring and grid topologies, this is the standard Metropolis rule in the literature~\cite{sayed2014adaptation}, which ensures that the resulting combination policy $A$ is doubly-stochastic. In the case of the directed Erd\"{o}s--R\'{e}nyi topology, the inclusion of self-loops at all agents guarantees that $A$ is well defined and left-stochastic by construction~\eqref{eq:app_metropolis_rule}. However, because the graph is directed, the policy does not generally remain doubly-stochastic.

For the finite-precision study, we use the bounded posterior-probability difference $g_k(h)$ defined in~\eqref{eq: bounded score transform theory}. This statistic is obtained directly from the same trained classifier, and no retraining is performed. Its empirical training mean $\mu_{k,\emp}(g_k)$ is recomputed from the training statistics in this representation, so that the centered statistic lies in $[-2,2]$. For $b\in\{4,6,8\}$, we use $L_b=2^b$ uniformly spaced reconstruction levels on $[-2,2]$, with step size
\begin{equation}
    \Delta_b=\frac{4}{2^b-1}.
\end{equation}
The adjacent-level stochastic rounding protocol $\mathsf Q_b$ and the quantize-before-average recursion~\eqref{eq: finite bit recursion theory} are those defined in Section~\ref{sec: finite bit theory}. Quantization is applied at each sender before averaging, including the first exchange. For each testing sample and communication round, every sender draws one fresh quantization outcome independently across senders, rounds, and samples, and broadcasts the same quantized value to all of its outgoing neighbors.  Since both stochastic quantization and left-stochastic averaging preserve the interval $[-2,2]$, no overload clipping is required.

The matched full-precision reference uses the same trained classifiers, bounded scores, empirical centers, graph, and combination policy, but replaces the stochastic quantizer by the identity map. Thus, the finite-bit branches differ from the full-precision reference only through stochastic quantization. 

For the finite-bit branches, the per-round communication cost is measured in transmitted bits per testing sample:
\begin{equation}\label{eq: per-round communication cost}
B_{\rm round}=bE_{\rm off}(A)
\end{equation}
where $E_{\rm off}(A)$ is defined in~\eqref{eq: off diagonal communication links theory} and excludes self-loops from the communication count.

\paragraph{Additional discussion of Fig.~\ref{fig:topology_sweep_main}}
\label{app:topology_grouped_figs}

Figures~\ref{fig:topo_by_topology}--\ref{fig:topo_by_bits} provide alternative views of the results in Fig.~\ref{fig:topology_sweep_main}, organized according to graph family, the construction rule of $A$, and communication precision. These views help separate the role of each factor in collaborative classification. Across the graph families considered here, the uniform averaging rule and the Metropolis rule generally produce different combination matrices $A$. One exception is the ring graph, for which the two rules generate the same matrix $A$. Accordingly, Fig.~\ref{fig:topo_by_topology_ring} presents coincident trajectories at all tested precisions. For the grid graph, Fig.~\ref{fig:topo_by_topology_grid} shows that the uniform averaging rule performs better than the Metropolis rule, although the gap becomes small in the heavily quantized (4-bit) regime. For the tested directed Erd\"{o}s--R\'{e}nyi graph, Fig.~\ref{fig:topo_by_topology_er} shows that uniform averaging performs better than the Metropolis rule under the matched full-precision reference and all three finite-bit settings.

Figure~\ref{fig:topo_by_A} relates these performance differences to the spectral properties of the corresponding combination matrices. In our experiments, across both construction rules, the second-largest eigenvalue magnitude $\sigma_A$ is largest for the ring graph, intermediate for the grid, and smallest for the directed Erd\"{o}s--R\'{e}nyi graph. Since $\sigma_A$ governs the speed of linear mixing associated with the local decision statistics in~\eqref{eq: distributed learning rule}, this ordering helps explain the convergence behaviors observed in Figs.~\ref{fig:topo_by_A_uniform_col} and~\ref{fig:topo_by_A_metropolis_col}: the ring trajectories approach their limiting regime more slowly than those of the grid or Erd\"{o}s--R\'{e}nyi graphs. This spectral comparison concerns the transient mixing behavior. The classification performance also depends on the Perron vector of $A$ and the distribution of heterogeneous local scores.

The precision-specific views in Fig.~\ref{fig:topo_by_bits} further clarify the role of quantization. Since the ring and grid graphs are undirected, the Metropolis rule yields doubly-stochastic matrices for both graph families, and hence the same uniform Perron vector. Therefore, in the matched bounded full-precision setting, their decision statistics converge to the same limit. This is reflected in Fig.~\ref{fig:topo_by_bits_32bit}, where the corresponding curves approach nearly the same long-run error level. However, the ring graph exhibits a slower transient due to its larger $\sigma_A$. As the communication precision decreases, this coincidence is no longer exact. The finite-bit results exhibit a gradual precision hierarchy: the 8-bit trajectories remain close to full precision, 6-bit communication produces an intermediate degradation, and 4-bit communication produces the largest loss. Finally, although the ring graph converges more slowly in the early rounds, it attains a lower long-run error than the directed Erd\"{o}s--R\'{e}nyi graph in these experiments. In particular, the two graphs have the same number of communication links, and hence the same per-round communication cost at a fixed bit-width following~\eqref{eq: per-round communication cost}. This comparison further illustrates the importance of graph topology in collaborative inference.

\begin{figure*}[t]
    \centering
    \subfloat[Ring topology]{\includegraphics[width=0.32\linewidth]{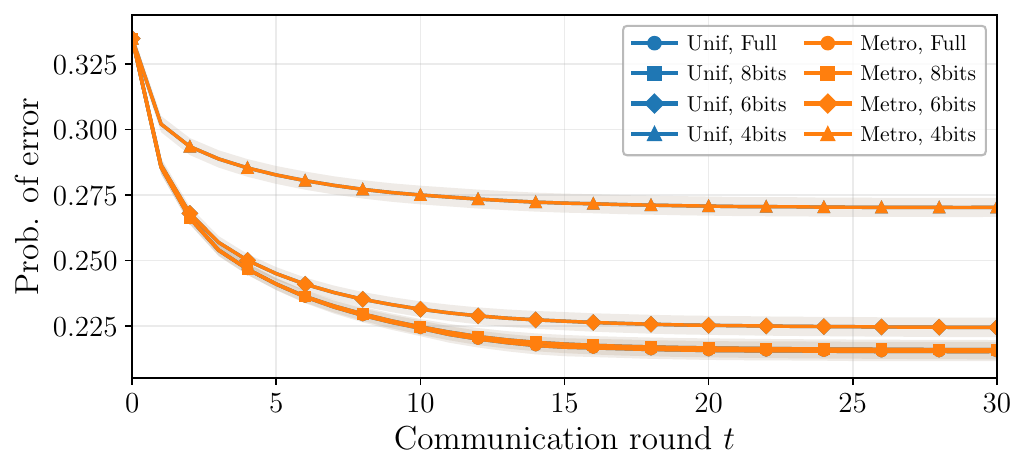}\label{fig:topo_by_topology_ring}}
    \hfill
    \subfloat[Grid topology]{\includegraphics[width=0.32\linewidth]{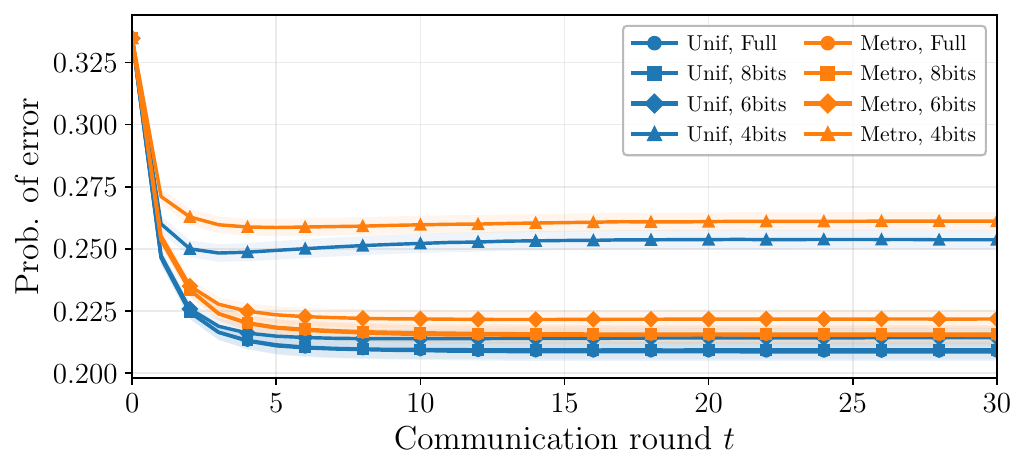}\label{fig:topo_by_topology_grid}}
    \hfill
    \subfloat[Directed Erd\"{o}s--R\'enyi topology]{\includegraphics[width=0.32\linewidth]{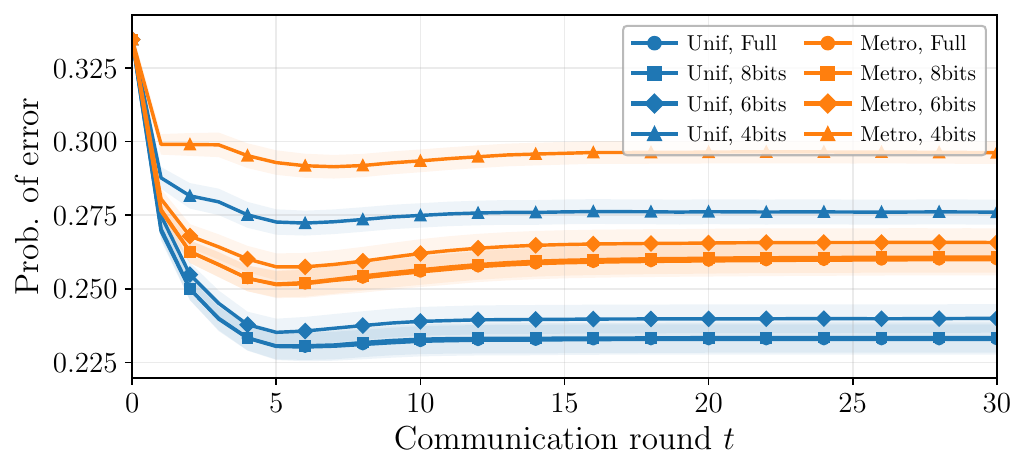}\label{fig:topo_by_topology_er}}
    \caption{Effect of the construction rule of $A$ and quantization on classification performance under different graph topologies.}
    \label{fig:topo_by_topology}
\end{figure*}

\begin{figure*}[t]
    \centering
    \subfloat[Uniform averaging rule]{\includegraphics[width=0.45\linewidth]{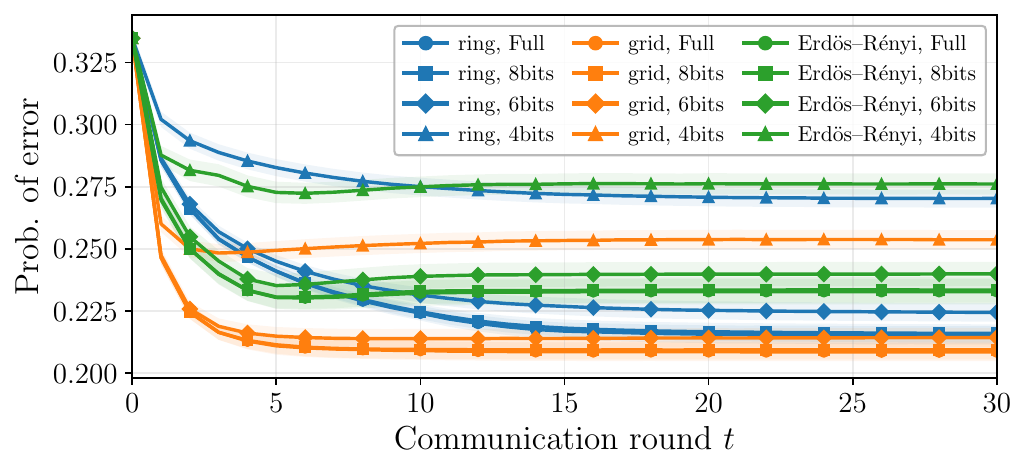}\label{fig:topo_by_A_uniform_col}}
    \hfill
    \subfloat[Metropolis rule]{\includegraphics[width=0.45\linewidth]{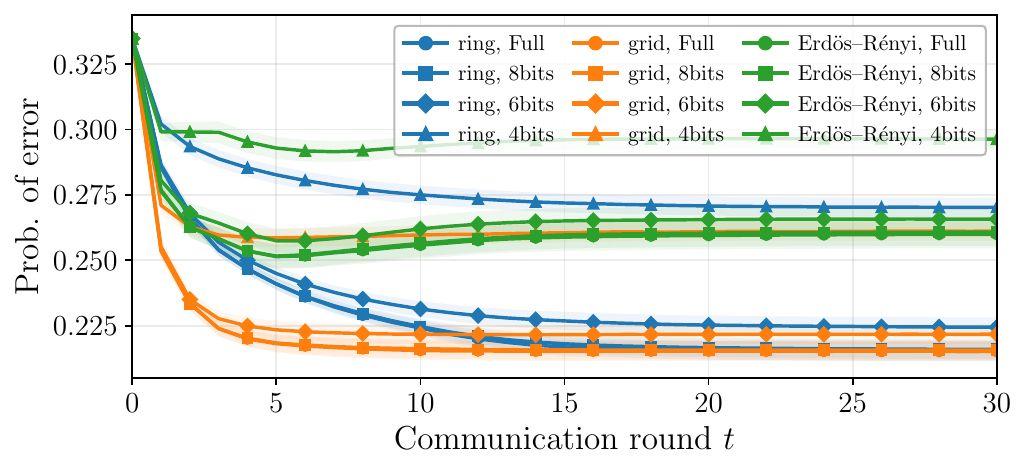}\label{fig:topo_by_A_metropolis_col}}
    \caption{Effect of graph topology and quantization on classification performance under different construction rules of $A$.}
    \label{fig:topo_by_A}
\end{figure*}

\begin{figure*}[t]
    \centering
    \subfloat[Full-precision communication]{\includegraphics[width=0.45\linewidth]{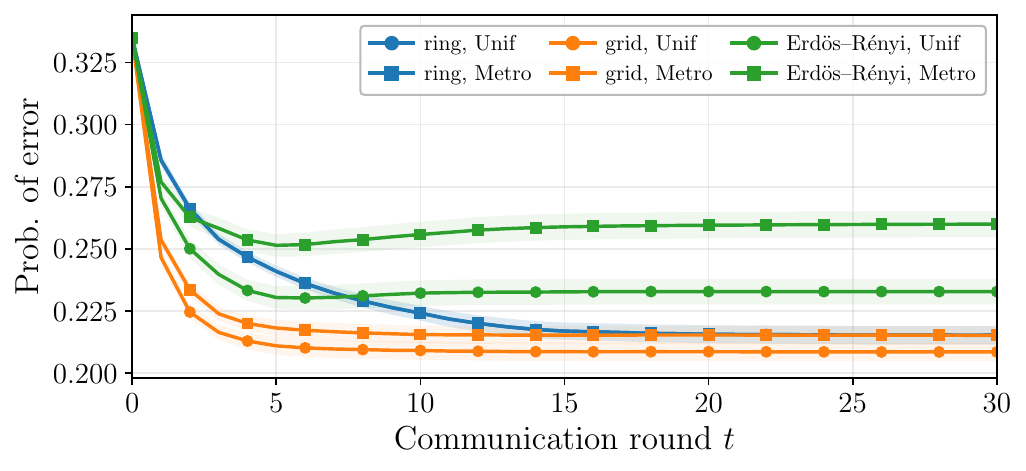}\label{fig:topo_by_bits_32bit}}
    \hfill
    \subfloat[8-bit communication]{\includegraphics[width=0.45\linewidth]{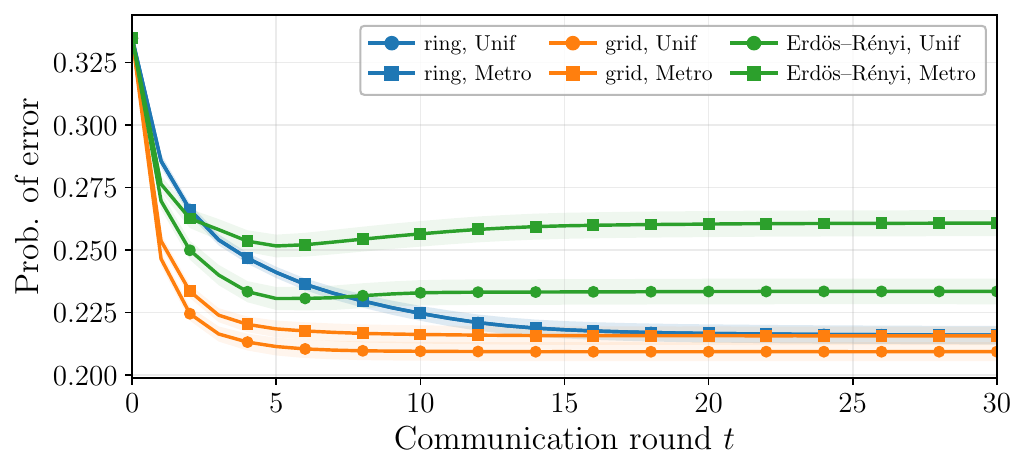}\label{fig:topo_by_bits_8bit}}\\
    \subfloat[6-bit communication]{\includegraphics[width=0.45\linewidth]{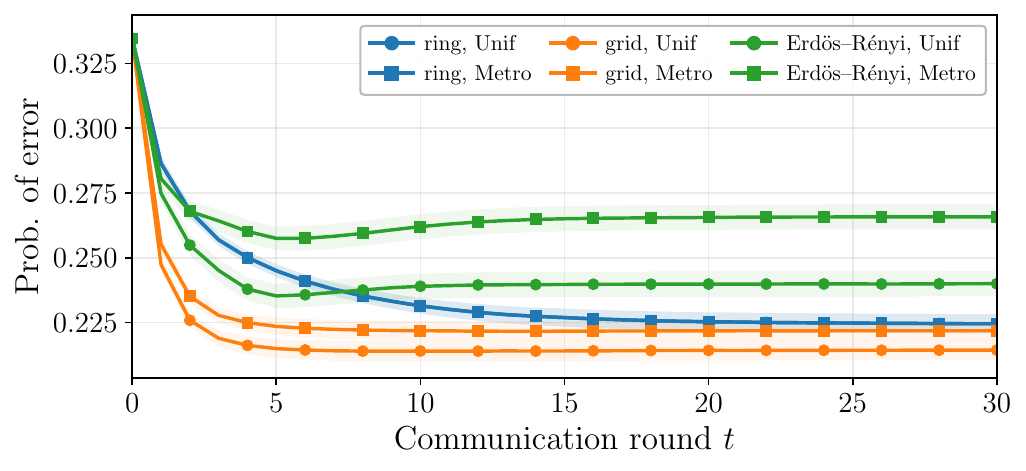}\label{fig:topo_by_bits_6bit}}
    \hfill
    \subfloat[4-bit communication]{\includegraphics[width=0.45\linewidth]{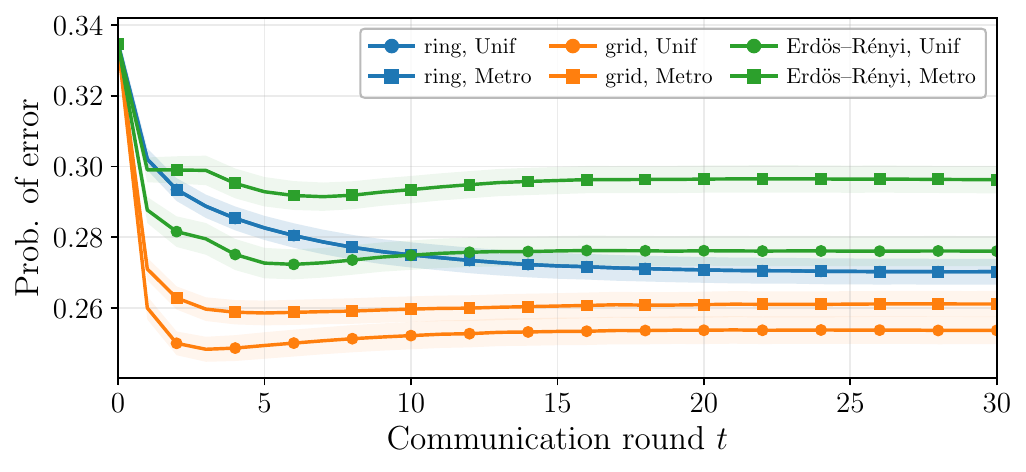}\label{fig:topo_by_bits_4bit}}
    \caption{Effect of graph topology and the construction rule of $A$ under different communication precisions.}
    \label{fig:topo_by_bits}
\end{figure*}

\subsubsection{Adaptive Stopping}
\label{app:stopping_details}

This subsection provides additional details for the adaptive stopping study in Section~\ref{sec:stopping_rule}. Our implementation follows a local event-triggered communication protocol. For the $i$-th testing sample and each agent $k$, let $x_{i,k}^{\mathrm{sent}}(t)$ denote the last value that agent $k$ has transmitted up to round $t$. At the next round, agent $k$ computes a candidate statistic given by
\begin{equation}
x_{i,k}^{\mathrm{new}}(t)
=
\sum_{\ell\in\mathcal{N}_k}
a_{\ell k}\,
x_{i,\ell}^{\mathrm{sent}}(t-1).
\end{equation}
At initialization, we set $x_{i,k}^{\mathrm{sent}}(0)=x_{i,k}(0)$, where $x_{i,k}(0)$ is the local decision statistic before communication. The stopping decision is then made from $x_{i,k}^{\mathrm{new}}(t)$ together with the agent's local memory. If the trigger fires, the stored communicated value is updated; otherwise the previous transmitted value is kept. Therefore, an agent that remains silent at one round still computes a fresh candidate statistic and may become active again later if the trigger condition is violated.

We investigate two local stopping-rule families. The first is a \emph{$\Delta$-trigger} rule, which is based directly on the change in the communicated statistic. Under this rule, agent $k$ transmits the candidate statistic $x_{i,k}^{\mathrm{new}}(t)$ to its neighbors at round $t$ only if
\begin{equation}
\left|
x_{i,k}^{\mathrm{new}}(t)-x_{i,k}^{\mathrm{sent}}(t-1)
\right|>\varepsilon_{\mathrm{send}}
\label{eq:app_delta_stop_rule}
\end{equation}
where $\varepsilon_{\mathrm{send}}>0$ denotes the threshold parameter. The second is a \emph{label-stability with confidence} rule. Let
\begin{equation}
y_{i,k}^{\mathrm{new}}(t)
=
\mathrm{sign}\bigl(x_{i,k}^{\mathrm{new}}(t)\bigr)
\end{equation}
be the label induced by the candidate statistic, and let $s_{i,k}(t)$ denote the number of consecutive rounds up to round $t$ over which this induced label has remained unchanged. In the implementation, the counter is updated by comparing the induced labels at two consecutive rounds. Thus, $L=1$ corresponds to one unchanged transition, $L=2$ to two consecutive unchanged transitions, and so on. Agent $k$ does not transmit sample $i$ at round $t$ when
\begin{equation}
s_{i,k}(t)\geq L,
\quad
|x_{i,k}^{\mathrm{new}}(t)|\geq \tau_{\mathrm{conf}}
\label{eq:app_label_stop_rule}
\end{equation}
where $L$ is a patience parameter and $\tau_{\mathrm{conf}}$ is a confidence threshold.

To evaluate the performance of the two stopping rules, we sweep $\varepsilon_{\mathrm{send}}$ over a dense grid for the $\Delta$-trigger rule, and sweep $(L,\tau_{\mathrm{conf}})$ over dense grids for the label-stability rule. Since transmissions may stop at different times for different agents and different samples, the communication cost for each rule is measured by the total number of transmissions per testing sample, obtained by counting the realized transmissions over directed non-self links across the communication rounds. The Pareto-efficient operating points shown in Fig.~\ref{fig:stopping_frontier} are then extracted in the plane of probability of error versus the total number of transmissions.

\subsubsection{Robustness}
\label{app:robust_details}

This subsection provides additional details for the robustness study in Section~\ref{sec:robustness}. For the $i$-th testing sample, let $x_{i,k}(0)$ denote the local decision statistic of agent $k$ before communication. The collaborative recursion then evolves according to
\begin{equation}
x_{i,k}(t)
=
\sum_{\ell\in\mathcal{N}_k}
a_{\ell k}x_{i,\ell}(t-1).
\label{eq:app_degroot_robust}
\end{equation}
For each perturbation scenario, we replace the initial statistics $\{x_{i,k}(0)\}$ by corrupted values $\{x_{i,k}^{\rm corr}(0)\}$ and compare the resulting performance with the corresponding clean trajectory.

In each repetition, $m\in\{1,\ldots,K\}$ agents are corrupted, chosen uniformly without replacement, and kept fixed throughout that repetition. We consider three perturbation families.

\paragraph{Benign noisy agents}
For each corrupted agent $k$, independent Gaussian perturbations are injected into the initial statistics:
\begin{equation}
x_{i,k}^{\rm corr}(0)
=
x_{i,k}(0)+\alpha_{\rm noise}\zeta_{i,k},
\quad
\zeta_{i,k}\sim\mathcal{N}(0,1).
\end{equation}
To normalize the perturbation level $\alpha_{\rm noise}$, we first compute the sample standard deviation of $\{x_{i,k}(0)\}_{i=1}^{N_{\rm test}}$ for each agent $k$, and then take the median of these values across agents. Denoting the resulting scale estimate by $\widehat{s}$, we set
\begin{equation}
\alpha_{\rm noise}
=
\eta_{\rm noise}\widehat{s},
\end{equation}
with a dimensionless multiplier $\eta_{\rm noise}$.

\paragraph{Faulty agents}
We consider both stuck-at-zero failures and constant-bias failures. In the stuck-at-zero model,
\begin{equation}
x_{i,k}^{\rm corr}(0)=0,
\end{equation}
whereas in the constant-bias model,
\begin{equation}
x_{i,k}^{\rm corr}(0)
=
x_{i,k}(0)+b_{\rm fault},
\quad
b_{\rm fault}
=
\eta_{\rm bias}\widehat{s}.
\end{equation}
Here, $\eta_{\rm bias}$ is the dimensionless bias multiplier varied in the experiments.

\paragraph{Adversarial reports}
For each corrupted agent $k$, we apply a scaled sign flip,
\begin{equation}
x_{i,k}^{\rm corr}(0)
=
-\eta_{\rm adv}x_{i,k}(0),
\quad
\eta_{\rm adv}\geq0.
\end{equation}
This model includes several special cases: $\eta_{\rm adv}=0$ corresponds to silencing, $\eta_{\rm adv}=1$ to a pure sign flip, and $\eta_{\rm adv}>1$ to an amplified wrong-sign report.

\bibliographystyle{IEEEtran}
\bibliography{refs}